\newcommand{\wt}{\widetilde}
\newcommand{\ve}{\varepsilon}
\newcommand{\mrm}{\mathrm}
\newcommand{\mbb}{\mathbb}
\renewcommand{\ol}{\overline}
\newcommand{\opvec}{\operatorname{vec}}
\renewcommand{\rank}{\operatorname{rank}}
\newcommand{\SR}{\mathrm{SR}}
\newcommand{\cC}{\mathcal{C}}
\newcommand{\cE}{\mathcal{E}}
\newcommand{\cF}{\mathcal{F}}
\newcommand{\cI}{\mathcal{I}}
\newcommand{\cO}{\mathcal{O}}
\newcommand{\cP}{\mathcal{P}}
\newcommand{\Lin}{\mathrm{L}}
\newcommand{\Trans}{\mathrm{T}}
\newcommand{\Pos}{\mathrm{Pos}}
\newcommand{\Herm}{\mathrm{Herm}}
\newcommand{\Channel}{\mathrm{C}}
\newcommand{\Density}{\mathrm{D}}
\newenvironment{mylist}[1]{\begin{list}{}{
	\setlength{\leftmargin}{#1}
	\setlength{\rightmargin}{0mm}
	\setlength{\labelsep}{2mm}
	\setlength{\labelwidth}{8mm}
	\setlength{\itemsep}{0mm}}}
	{\end{list}}
\newcounter{problemcounter}
\theoremstyle{definition}
\newtheorem{theorem}{Theorem}
\newtheorem{lemma}[theorem]{Lemma}
\newtheorem{definition}{Definition}
\newtheorem{corollary}{Corollary}
\newtheorem{example}{Example}
\newtheorem{remark}{Remark}
\newtheorem{proposition}{Proposition}
\definecolor{cool_green}{rgb}{0.0, 0.5, 0.0}
\newcommand{\edit}[1]{{\color{black}#1}}
\begin{document}
\title{Revisiting Pure State Transformations with Zero Communication}
\author{Ian George}
\affiliation{Electrical and Computer Engineering Department, University of Illinois at Urbana-Champaign}
\author{Eric Chitambar}
\affiliation{Electrical and Computer Engineering Department, University of Illinois at Urbana-Champaign}
\date{\today}
\begin{abstract}
It is known that general convertibility of bipartite entangled states is not possible to arbitrary error without some classical communication.  While some trade-offs between communication cost and conversion error have been proven, these bounds can be very loose.  In particular, there are many cases in which tolerable error might be achievable using zero-communication protocols.  In this work we address these cases by deriving the optimal fidelity of pure state conversions under local unitaries as well as local operations and shared randomness (LOSR).  We also use these results to explore catalytic conversions between pure states using zero communication.
\end{abstract}
\maketitle

\section{Introduction}
The theory of quantum mechanics through the lens of information and vice versa \cite{Wheeler-1989a, Zurek-1990a, Landauer-1991a} has afforded the physicist and the information scientist alike with a new way to view the objects and long-term goals of their study. No better example of this can be found than quantum resource theories. Quantum resource theories specify the relevant physical property in such a manner as to better tease apart the complexities of quantum mechanics while also establishing what tasks may be achieved with said resource \cite{Chitambar-2019a}. Perhaps the earliest example of such a resource theory is the resource theory of entanglement. Entanglement may be viewed as a form of correlation that does not exist in the classical world \cite{Bell-1964}. Roughly speaking, the resource theory of entanglement asks (1) what tasks may be performed better using entangled states and (2) how entangled states may be converted from one to another under some class of free operations. 

The most standard view of the resource theory of entanglement considers the set of free operations to be local operations and classical communication (LOCC) which captures the `distant lab' paradigm where two (or more) parties share an entangled state in spatially separated labs and they can only perform operations on their respective portions and exchange classical information (See Fig. \ref{fig:intro-diagrams}). Not only is this the most standard set of free operations, but in some respect it seems minimal. Indeed, Hayden and Winter showed that to convert one (pure) entangled state to another to sufficiently small precision requires a certain amount of communication between labs, regardless of how many auxiliary EPR pairs they share \cite{Hayden-2003a} (see also \cite{Harrow-2004a}). This is distinct not only from the classical setting \cite{Wyner-1975a}, but also from quantum states that are not entangled \cite{Hayashi-2006a, George-2022c}. However, the results of Hayden and Winter, while fundamental, do not give us a complete picture of the tradeoff between communication and achievable tolerated error in pure state conversions. Indeed, it is easy to find examples of state conversions which, according to the best known lower bounds, still may be possible to perform with a tolerated error of $1\%$ using no communication (see Example \ref{ex:necessity-of-comm} of Section \ref{sec:background}).  This show\edit{s} that a relatively large gap in our understanding of zero-communication entanglement transformations still persists, and one we aim to address in this work.

\begin{figure}
\centering
\begin{subfigure}[b]{0.46\columnwidth}
    \begin{center}
    \includegraphics[width=\columnwidth]{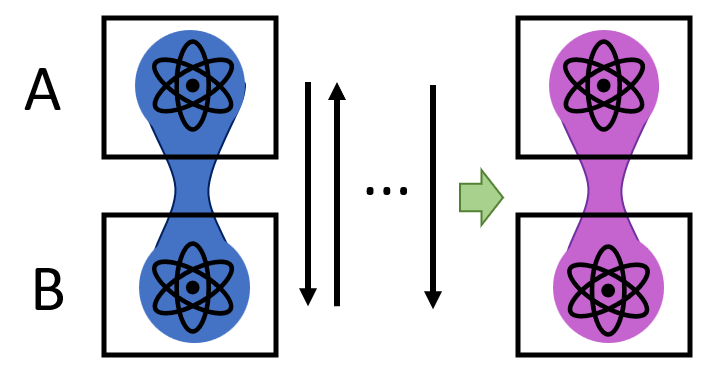}
    \end{center}
   \caption{}
\end{subfigure}
\begin{subfigure}[b]{0.46\columnwidth}
   \begin{center}
    \includegraphics[width=\columnwidth]{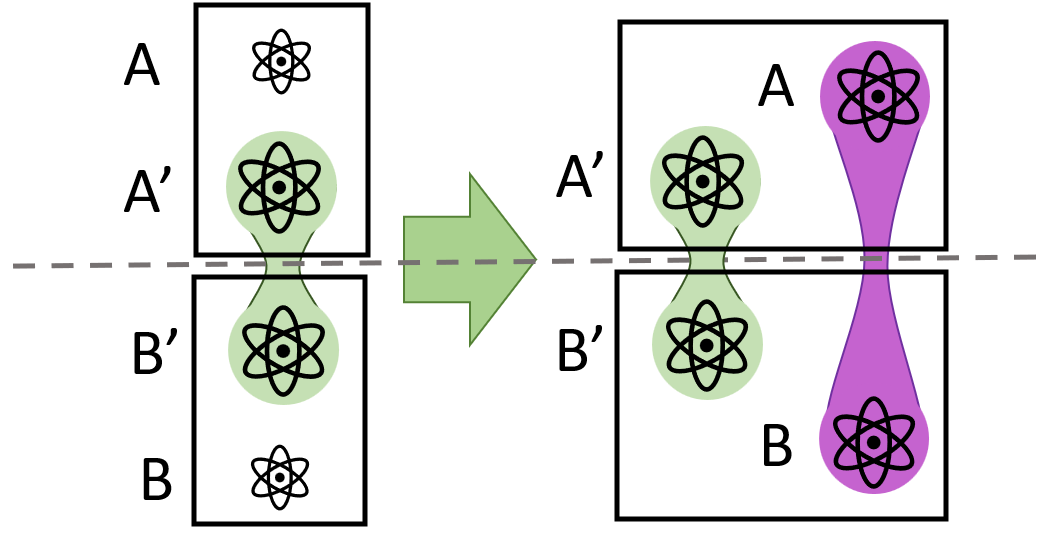}
    \end{center}
   \caption{}
\end{subfigure}
\caption{\small{Conversion of pure states in distant labs. 
(a) The LOCC model where communication is exchanged. (b) The embezzling of quantum states where an auxiliary entangled state is used. This may be seen as a special case of catalytic conversion.}}
\label{fig:intro-diagrams}
\end{figure}

Moreover, the tools we develop to address this problem will also allow us to study pure state transformations using shared auxiliary entanglement. The operational paradigm in which parties are allowed to use arbitrary pre-shared entanglement but no communication is known as local operations and shared entanglement (LOSE) \cite{Schmid-2021a}. \edit{Without restriction on the entangled resource, pure state convertibility $\ket{\psi}_{AB}\to\ket{\phi}_{AB}$ under LOSE is trivial since Alice and Bob could always just demand $\ket{\phi}$ as their pre-shared entanglement and then throw away $\ket{\psi}$ when it is given.  However, if one tries to minimize the amount of pre-shared entanglement (under some measure), the problem becomes quite interesting. A more demanding setting is to minimize the amount of pre-shared entanglement $\ket{\omega}$ while requiring }it is also returned to the target state $\ket{\phi}$, i.e. $\ket{\psi}_{AB}\otimes\ket{\omega}_{A'B'}\to\ket{\phi}_{AB}\otimes\ket{\omega}_{A'B'}$ for auxiliary pre-shared entanglement $\ket{\omega}_{A'B'}$.  Transformations of this form are known as catalytic transformations with $\ket{\omega}_{A'B}$ being the catalyst \edit{(See \cite{Datta-2023a,Lipka-Bartosik-2023a} for reviews on catalytic transformations)}.

\edit{When instead the catalyst only needs to be returned approximately, we distinguish it from the zero error case by calling it an `embezzler' or `embezzling state.'} Remarkably, \edit{van Dam and Hayden showed that there exists a family of `universal embezzling states' under local operations \cite{van-2003a}. That is, they found a family of states $\{\ket{\mu(n)}_{A'B'}\}_{n \in \mbb{N}}$ such that for any tolerated non-zero error $\varepsilon$, one can always prepare a pure state $\ket{\psi}_{AB}$ using a member of this family and zero communication. Perhaps even more surprising, the authors showed that this family of embezzlers is `optimal' in a certain sense. In particular, they showed that if one strengthens the class of transformations to local operations and classical communication (LOCC) and replaces the appropriate universal embezzler state $\ket{\mu(n_{\ve})}$ for some $n_{\ve} \in \mbb{N}$ to a state-dependent embezzler state $\ket{\zeta(\psi,\ve)}$ that also depends on the tolerated error $\ve$, the dimension of the entanglement in $\ket{\mu(n_{\ve})}$ and $\ket{\zeta(\psi,\ve)}$ will become nearly the same as $\ve$ approaches zero.}  This near optimality along with Hayden and Winter's result has, understandably, largely ceased the study of entanglement transformations with zero communication, because when one needs entanglement transformations without communication, one uses embezzlement \cite{Bennett-2014a, Anshu-2021a}.\footnote{The notable exceptions to this halted topic of research has been the consideration of special embezzling families \cite{Leung-2014a} and the correlated sampling lemma \cite{Dinur-2015a}, which may be viewed as a variation of embezzling.} It is however not clear what is the necessary error for embezzlement to become near optimal, which could be relevant in practical settings. Indeed, for any tolerated error, it is easy to find sufficient conditions on pure states to be converted with no catalyst at all (Example \ref{ex:on-necc-of-emb} of Section \ref{sec:background}). This is an indication that we also do not understand embezzling sufficiently well.

\subsection{Summary of Results}
\begin{figure}
    \centering
    \includegraphics[width=\columnwidth]{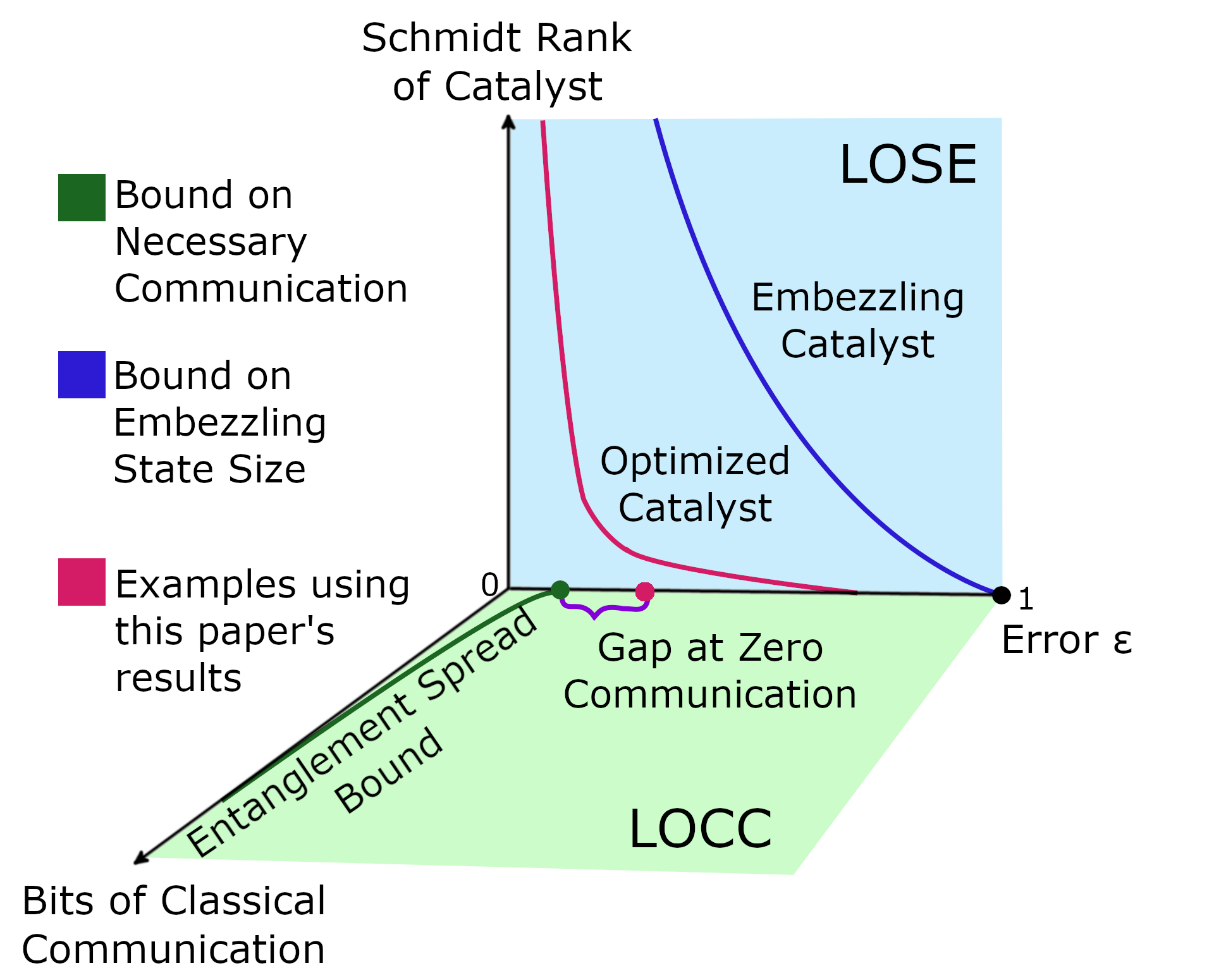}
    \caption{\small Comparison of \cite{van-2003a} (blue),\cite{Hayden-2004a} (green), and this work's results (pink). \cite{Hayden-2004a} finds lower bounds on the classical communication necessary to convert one state to another, but in the zero communication setting the bound is loose \edit{(see Examples \ref{ex:necessity-of-comm} and \ref{ex:tightened-error} for an example of the depicted gap)}. We find methods for solving this exactly (Section \ref{sec:single-copy}), which establishes that communication is necessary for larger tolerated errors. \cite{van-2003a} establishes a method for pure state transformations with zero communication with massive amounts of entanglement. This result is known to be almost optimal for sufficiently small error. Nonetheless, we find we find the necessary resources can be too strong for a relevant error range, even if ultimately it is optimal (Section \ref{sec:cat-conv}).} 
    \label{fig:res-region}
\end{figure}
The primary aim of this work is to provide tighter lower bounds on the error in pure state entanglement convertibility with zero communication. A high level comparison of our results to the aforementioned work on this topic are presented in Fig.\ \ref{fig:res-region}.  This depicts a `one-shot resource tradeoff' region that must contain the `true' one-shot resource tradeoff surface for a given pure state conversion. Hayden and Winter's result provides a lower bound on the achievability independent of the amount of shared maximally entangled states, but their result can be too loose when considering zero communication. van Dam and Hayden's result provides an outer bound on the achievability surface on the face pertaining to LOSE, but their result in fact can be too loose when the error is not sufficiently small. In this work, our results allow one to exactly solve the minimal error in the zero communication setting and also provide significantly tighter bounds than quantum embezzling for a relevant region on the LOSE face (See Fig.\ \ref{fig:res-region}).

To formally establish our results, we reduce the class of questions regarding optimal pure state conversion to optimization problems that only concern non-negative vectors. This is because of a bijection between the equivalence classes of pure states under local unitaries--- which are defined solely by their Schmidt coefficients--- and the probability simplex. We do this by showing the optimal fidelity of pure state transformations with local unitaries is efficiently computable. Of course, in general one would not expect local unitaries to be the optimal strategy and we build on this result to present a non-convex optimization program over an optimization variable with bounded dimension. An immediate corollary of this result is the impossibility of pure state conversions with zero communication for negligible error. We also present efficient computable upper bounds on the achievable error using a semidefinite programming (SDP) relaxation. We also show that in the case where either the seed (i.e. initial) or target state is a two-qubit state, the local unitary strategy is optimal. However, we can show for larger dimensions this is not the case.

Having established general properties in the single copy case, we move to the multiple copy case, i.e. where the seed and/or target state is of independent and identically distributed (i.i.d.) form. This is standard in determining the rate of converting one state to another. In particular, we consider dilution and distillation where the seed state or target state respectively is many copies of a maximally entangled state and show these are convex optimization programs and may be seen as involving the Ky-Fan norms when extended to the regime where they are not a norm. Lastly, in a sense extending our earlier two-qubit results, we establish that if the target state is an $n-$fold copy of a two-qubit entangled state and the seed state's Schmidt rank is less than the target state, then local unitaries are the optimal strategy.

Finally, given these results, we turn our attention to quantum embezzlement. We begin by noting that the correspondence between Schmidt coefficients and probability distributions means that quantum embezzlement implies a classical equivalent we call randomness embezzlement. We then proceed to use our new tools to consider the problem of \edit{approximate catalytic} pure state conversion under local unitaries, in effect a generalization of \edit{traditional} embezzling, and compare it to embezzling. We show in particular that at least in general the optimality of the embezzling states is only for very small errors. Indeed, we show for reasonable tolerable errors, the embezzling state may have a Schmidt rank of many orders of magnitude larger than a \edit{state-dependent embezzler}. This may have practical relevance and strongly refines our understanding of pure state transformations under LOSE.

\subsection{Relation to Previous Work}
\edit{There has been a great deal of work on embezzlement within communication theory as well as to some degree within quantum thermodynamics. There has also been work on state conversion under LOSR. As such, we briefly state how our work relates to these and other mathematical methodologies. 

The major technical results of this work focus on the maximum fidelity of conversion between bipartite entangled pure states under local unitaries or local operations and shared randomness. This may be viewed as a variation of optimal conversion distance, which is often described in terms of trace distance (See \cite{Gour-2024a} for a discussion of conversion distance in terms of trace norm for general resource theories.)  This significantly differs from previous works on conversion of bipartite states under LOSR \cite{Buscemi-2012a,Schmid-2021a}, which establish equivalent conditions for there being an \textit{exact} conversion from one bipartite state to another under LOSR.

As previously stated, we use our technical results to look at embezzling and necessary communication for entangled state transformations. While motivated and focused on the work of van Dam and Hayden \cite{van-2003a} and Hayden and Winter \cite{Hayden-2003a}, there has been further related work. In terms of necessary communication, \cite{Aubrun-2008a} showed necessary and sufficient conditions for there to exist an exact catalytic transformation between two quantum states under LOCC. In terms of embezzlement, \cite{Leung-2011a} was the first to construct universal embezzling families with respect to zero communication for $m > 2$ parties. \cite{Leung-2014a} studied properties of universal embezzling families with respect to zero communication in more detail. 

In quantum thermodynamics, there is the notion of `thermal embezzling,' which is understood when the states commute with the relevant Hamiltonian, thereby reducing the problem to majorization \cite{Brandao-2015a,Ng-2015a}. \cite{Brandao-2015a} sketches how quantum embezzlement as done in van Dam and Hayden \cite{van-2003a} implies a type of thermal embezzlement that shows the second law of thermodynamics doesn't hold in an approximate catalytic sense. This formal insight is equivalent to what we call `randomness embezzlement' in Section \ref{sec:single-copy}. In \cite{Ng-2015a}, they can exclude this issue with the second law by imposing physical assumptions. They also construct an optimal universal thermal embezzling family. 

Finally, we remark that embezzling (with respect to any set of allowed transformations) is a form of approximate catalytic transformation. We refer the reader to recent reviews on catalysis for further information \cite{Datta-2023a,Lipka-Bartosik-2023a}.
}

\subsection{Organization of the Paper}
The rest of the paper is organized as follows. In Sections \ref{sec:notation} and \ref{sec:background} we present the necessary notation and background respectively to understand the rest of the paper. In Section \ref{sec:single-copy}, we
\begin{itemize}[itemsep=0mm]
\item Make explicit the correspondence between pure states under LU and the probability simplex and note this implies the existence of a classical variation of embezzlement (Proposition \ref{prop:classical-embezzling})
\item Prove our equation for fidelity of state conversion under local unitaries (Theorem \ref{thm:pure-state-conversions-with-LU}) and our optimization for fidelity of state conversion under local operations and shared randomness (Theorem \ref{thm:pure-state-conversion-via-LO})
\item Establish computable upper bounds on the fidelity of state conversion under LOSR  (Theorem \ref{thm:sdp-upper-bound}). 
\end{itemize}
In Section \ref{sec:many-copy} we present the results where the target or seed state is of i.i.d.\ form. In Section \ref{sec:catalyst-discussion} we discuss \edit{embezzlers} under local unitaries. In Section \ref{sec:on-extensions} we discuss why our theory does not generalize beyond bipartite pure states.

\section{Basic Notation}\label{sec:notation}
Our notation largely aligns with standard texts \cite{Wilde-2011a, Watrous-Book}. In this paper we consider finite dimensional quantum systems. Given $n \in \mbb{N}$, we define $[n] := \{1,...,n\}$. A finite dimensional Hilbert space will be labeled with a capital roman letter, e.g.\ $A,B,$ and may be identified by its dimension $d \in \mbb{N}$, e.g.\ $A \cong \mbb{C}^{d}$. The space of quantum states, or density matrices, with respect to a Hilbert space $A$, is the space of positive semidefinite operators with unit trace, i.e.\  $\Density(A) := \{\rho \in \Lin(A) : \rho \succeq 0 \, \& \Tr(\rho) = 1 \}$ where $\succeq$ is the L\"{o}wner order and $\Lin(A)$ is the space of endomorphisms. If a quantum state is a joint state over multiple Hilbert spaces, we will use a subscript to specify this, e.g. $\rho_{AB} \in \Density(A \otimes B)$. A quantum state $\rho_{A}$ is pure if $\rho_{A} = \dyad{\psi}$ for a unit vector $\ket{\psi}$, where we are using bra-ket notation. For this previous reason, we generally just specify a pure state by $\ket{\psi}_{A}$, or $\psi$ if we are considering its density matrix representation. We say a state is classical if it is diagonal in the standard basis, i.e. it's a probability distribution. The space of probability distributions over $d$ elements is denoted $\cP(d)$. A quantum channel $\cE$ is a (linear) completely positive, trace preserving map $\cE : \Lin(A) \to \Lin(B)$. \edit{We denote the set of quantum channels from $\Lin(A) \to \Lin(B)$ as $\Channel(A,B)$.}

\section{Background \& Motivation}\label{sec:background}
Throughout this section we fix $A \cong \mbb{C}^{d}, B \cong \mbb{C}^{d'}$ for clarity.

\paragraph{Fidelity}
The fidelity is a standard measure of similarity between two positive semidefinite operators $R,S \geq 0$.
\begin{equation}\label{eq:fid-def} F(R,S) = \left\|\sqrt{R}\sqrt{S} \right\|_{1}^{2} = \Tr\left(\sqrt{\sqrt{S}R\sqrt{S}}\right)^{2} \ ,
\end{equation}
where the square root of a positive semidefinite operator is defined in the standard fashion on its spectral decomposition and $\| \cdot \|_{1}$ is the Schatten $1-$norm. It satisfies various properties that will be relevant for this work which we summarize here. All of these may be verified by direct calculation or by referring to standard texts.
\begin{proposition}[Summary of Fidelity Properties]\label{prop:fidel-props}
Let $\rho,\sigma \in \Density(A)$. The following hold:
\begin{enumerate}[itemsep=0mm]
	\item $0 \leq F(\rho,\sigma) \leq 1$ where the upper bound is saturated if and only if $\rho = \sigma$ and the lower bound saturates if and only if their images are orthogonal. 
	\item The fidelity is isometrically invariant, i.e. given isometry $V: A \to B$,
	$$ F(V\rho V^{\dagger}, V\sigma V^{\dagger}) = F(\rho,\sigma) \ . $$
	\item The fidelity satisfies data-processing. That is, for any quantum channel $\cE \in \Channel(A,B)$,
	$$ F(\rho,\sigma) \leq F(\cE(\rho),\cE(\sigma)) \ . $$
	\item If both states are pure, 
	$$ F(\dyad{\phi},\dyad{\psi}) = |\bra{\psi}\ket{\phi}|^{2} \ , $$
	and if one state is pure
	$$ F(\dyad{\phi},\sigma) = \bra{\phi}\sigma\ket{\phi} \ . $$
	\item If both states are classical, $P,Q \in \cP(d)$, then the fidelity reduces to the square of the Bhattacharyya coefficient:
	$$ \hspace{1cm} F(P,Q) = \left(\sum_{i \in [d]} \sqrt{p(i)q(i)} \right)^{2} = BC(p,q)^{2} \ , $$
	where $p(i) = P(i,i)$ and likewise for $Q$.
	\item Given pure states with the same eigenbasis and real amplitudes, $\ket{\psi} = \sum_{x} \sqrt{p(x)} \ket{x}$, $\ket{\phi} = \sum_{x} \sqrt{q(x)} \ket{x}$ , the fidelity reduces to the square of the Bhattacharyya coefficient of the probability distributions defined by the amplitudes:
	$$ F(\dyad{\phi},\dyad{\psi}) = BC(p,q)^{2} \ . $$
\end{enumerate}
\end{proposition}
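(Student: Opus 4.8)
The plan is to verify each item in turn from the two equivalent expressions in \eqref{eq:fid-def} together with standard Schatten-norm facts; every item except data processing (item 3) reduces to a short direct computation.

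For item 1, positivity of the Schatten $1$-norm gives $F(\rho,\sigma)=\norm{\sqrt{\rho}\sqrt{\sigma}}_1^2\ge 0$, with equality iff $\sqrt{\rho}\sqrt{\sigma}=0$; since $\im(\sqrt{\sigma})=\im(\sigma)$ and $\ker(\sqrt{\rho})=\im(\rho)^\perp$, this happens exactly when $\im(\rho)$ and $\im(\sigma)$ are orthogonal. For the upper bound I would apply H\"older's inequality, $\norm{\sqrt{\rho}\sqrt{\sigma}}_1\le\norm{\sqrt{\rho}}_2\,\norm{\sqrt{\sigma}}_2=\sqrt{\tr\rho}\,\sqrt{\tr\sigma}=1$, and invoke the equality condition of H\"older together with the trace normalization to conclude $\rho=\sigma$; the converse ($F(\rho,\rho)=1$) is immediate. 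For item 2, from $V^\dagger V=\I$ one checks $(V\sqrt{\rho}\,V^\dagger)^2=V\rho V^\dagger$ with $V\sqrt{\rho}\,V^\dagger\succeq 0$, so $\sqrt{V\rho V^\dagger}=V\sqrt{\rho}\,V^\dagger$; hence $\sqrt{V\rho V^\dagger}\sqrt{V\sigma V^\dagger}=V\sqrt{\rho}\sqrt{\sigma}\,V^\dagger$, and the claim follows because $X\mapsto VXV^\dagger$ preserves singular values and therefore the Schatten $1$-norm.

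Items 4--6 are evaluations. If $\rho=\dyad{\phi}$, then $\sqrt{\rho}=\dyad{\phi}$ is a projector, so $\sqrt{\rho}\sqrt{\sigma}=\ket{\phi}\bigl(\bra{\phi}\sqrt{\sigma}\bigr)$ has rank at most one with unique nonzero singular value $\norm{\bra{\phi}\sqrt{\sigma}}_2=\sqrt{\bra{\phi}\sigma\ket{\phi}}$, giving $F(\dyad{\phi},\sigma)=\bra{\phi}\sigma\ket{\phi}$; specializing $\sigma=\dyad{\psi}$ yields $\abs{\ip{\psi}{\phi}}^2$. For classical $P,Q$ the product $\sqrt{P}\sqrt{Q}$ is diagonal with nonnegative entries $\sqrt{p(i)q(i)}$, so $\norm{\sqrt{P}\sqrt{Q}}_1=\sum_i\sqrt{p(i)q(i)}=BC(p,q)$. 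Item 6 then follows from item 4, since for $\ket{\psi}=\sum_x\sqrt{p(x)}\ket{x}$ and $\ket{\phi}=\sum_x\sqrt{q(x)}\ket{x}$ one has $\ip{\phi}{\psi}=\sum_x\sqrt{q(x)p(x)}=BC(p,q)\ge 0$, and squaring finishes it.

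The only substantive step is item 3. The cleanest route is Uhlmann's theorem, $F(\rho,\sigma)=\max\abs{\ip{\psi_\rho}{\psi_\sigma}}^2$ over purifications in a common enlarged space, combined with the Stinespring form of $\cE$ as an isometry followed by a partial trace: item 2 disposes of the isometry, and for the partial trace one observes that any purification of $\rho_{AB}$ is in particular a purification of $\rho_A$, so the maximization defining $F(\rho_A,\sigma_A)$ ranges over a superset of those available for $F(\rho_{AB},\sigma_{AB})$ and can only be larger. I expect this to be the main (really the only) obstacle, in that it requires importing Uhlmann's theorem; as the statement itself suggests, one may instead simply cite a standard text.
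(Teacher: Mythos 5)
Your proposal is correct, and it matches the paper's treatment: the paper offers no proof of Proposition~\ref{prop:fidel-props}, stating only that the items ``may be verified by direct calculation or by referring to standard texts,'' and your direct calculations (Schatten-norm manipulations for items 1, 2, 4--6, and Uhlmann plus Stinespring for data processing) are exactly the standard verifications being alluded to. The only place I would add a line is the saturation condition $F(\rho,\sigma)=1\Rightarrow\rho=\sigma$ in item 1, where the H\"older equality condition is most cleanly run through the variational form $\norm{\sqrt{\rho}\sqrt{\sigma}}_1=\max_U\abs{\Tr(U\sqrt{\rho}\sqrt{\sigma})}$ and Cauchy--Schwarz in the Hilbert--Schmidt inner product, yielding $\sqrt{\sigma}=cU\sqrt{\rho}$ and hence $\sigma=\rho$.
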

We also note that in all of these definitions there is a pesky squaring that effectively we don't care about. For this reason we could define the square root fidelity:
$$ \sqrt{F}(R,S) := \sqrt{F(R,S)} \ . $$
Note the square root fidelity could be viewed as the quantum extension of the Bhattacharyya coefficient.

\paragraph{Norms}
In defining the fidelity we used the Schatten $1-$norm. More generally, there are the Schatten $p-$norms which for $X \in \Lin(A,B)$ may be defined as $\|X\|_{p} := \|\sigma(X)\|_{p}$ where $\sigma(X)$ is the ordered vector of singular values of $X$, $\sigma_{1}(X) \geq \sigma_{2}(X) \geq ... \geq \sigma_{\mrm{rank}(X)}(X)$ and it is being evaluated under the $L_{p}-$norm where $p \geq 1$. The infinity norm, $\infty-$norm, is $\lim_{p \to \infty} \|X\|_{p} = \|X\|_{\infty} = \max_{i} \sigma_{i}(X)$. The infinity norm was generalized to the Ky Fan $k-$norms $\|X\|_{(k)} := \sum \sigma_{i}(X)$ for $1 \leq k \leq \min\{d,d'\}$. The Ky Fan norms have relevance in measuring entanglement \cite{de-2011a}. A generalization of the Ky Fan and Schatten norms together is given by the $(k,p)-$norms \cite{Mudholkar-1985a}
\begin{equation}
	\|X\|_{(k,p)} := \left(\sum_{i \in [k]} \sigma_{i}(X)^{p}\right)^{1/p} \ ,
\end{equation}
which also have use in measuring entanglement of pure states \cite{Johnston-2012a}. Much like is common to do for the Schatten $p-$norms, we can extend the $(k,p)-$norms to $p > 0$ with the caveat they won't be norms as they won't in general satisfy subadditivity (the triangle inequality) for $p \in [0,1)$.

\paragraph{Entanglement Theory} A bipartite positive operator $\rho_{AB}$ is separable if there exists  $n \in \mbb{N}$, $p \in \cP(n)$, $\{\sigma_{A}^{i}\}_{i \in [n]} \subset \Density(A)$, and $\{\tau_{B}^{i}\}_{i \in [n]}$ such that
$$ \rho_{AB} = \sum_{i \in [n]} p(i) \sigma_{A}^{i} \otimes \tau_{B}^{i} \ . $$
Otherwise the state is entangled. As a pure state $\dyad{\psi}_{AB}$ is defined by a unit vector, this reduces to a pure state is separable, referred to product in this setting, if and only if there exists $\ket{\phi}_{A},\ket{\varphi}_{B}$ such that $\ket{\psi} = \ket{\phi}_{A} \otimes \ket{\varphi}_{B}$. While this is sufficient for determining if a bipartite pure state is entangled, there is also a notion of `how' entangled a state is in terms of Schmidt rank. Every bipartite pure state $\ket{\psi}_{AB}$ admits a unique (up to re-ordering) decomposition of the form
\begin{equation}\label{eq:Schmidt-decomp}
	\ket{\psi}_{AB} = \sum_{i \in [k]} \sqrt{p(i)} \ket{u_{i}}_{A} \otimes \ket{v_{i}}_{B} \ ,
\end{equation}
where $k = \max\{d,d'\}$, $p \in \cP(k)$ and $\{\ket{u_{i}}\}_{i \in [k]}, \{\ket{v_{i}}\}_{i \in [k]}$ are orthonormal bases of $A$ and $B$ respectively. We note that for non-unit vectors, such a decomposition still exists, it is just $p \in \mbb{R}_{\geq 0}$, i.e. is a non-negative vector rather than a probability distribution.

The $\sqrt{p(i)} > 0 $ terms are referred to as the Schmidt coefficients of the pure state. The Schmidt rank of $\ket{\psi}_{AB}$, $\SR(\ket{\psi}) = \mrm{supp}(p)$, i.e.\ the number of Schmidt coefficients. This may be viewed as a measure of entanglement in the sense that the Schmidt rank of a product state is 1 and the maximally entangled state $\ket{\Phi^{+}}_{\mbb{C}^{d}\mbb{C}^{d}} = \frac{1}{\sqrt{d}} \sum_{i} \ket{i}_{\mbb{C}^{d}} \ket{i}_{\mbb{C}^{d}}$ has Schmidt rank $d$. We define the set $\SR(d) := \{\ket{\psi} : \SR(\ket{\psi}) \leq d \}$, where we note this set is independent of the dimension the state is embedded in.

Lastly we note a particularly nice property of pure states, known as Uhlmann's theorem.
\begin{lemma}[Uhlmann's Theorem]\label{lem:Uhl-thm}
Given $R,T \in \Pos(A)$ and (possibly unnormalized) $\ket{\psi} \in A \otimes B$ such that $\Tr_{B}(\psi) = R$, then 
$$ F(R,T) = \max\{|\bra{\psi}\ket{\phi}|^{2} : \ket{\phi} \in A \otimes B \, , \, \Tr_{B}(\phi) = T \} \ . $$
\end{lemma}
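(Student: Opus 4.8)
The plan is to pass to the standard operator--vector (``\opvec'') correspondence, settle one of the two inequalities instantly via data processing, and obtain the matching inequality by an explicit construction built from a singular value decomposition. First I would remove the inhomogeneity: since $F(\lambda R,\mu T)=\lambda\mu\,F(R,T)$ by \eqref{eq:fid-def} and $\abs{\ip{\psi}{\phi}}^{2}$ scales identically under $\ket{\psi}\mapsto\ket{\psi}/\sqrt{\tr R}$, $\ket{\phi}\mapsto\ket{\phi}/\sqrt{\tr T}$, while the constraint $\tr_{B}(\phi)=T$ rescales accordingly, I may assume $R,T\in\Density(A)$ and that $\ket{\psi},\ket{\phi}$ are unit vectors (the cases $R=0$ or $T=0$ being immediate). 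The hypothesis that a purification of $R$ exists in $A\otimes B$ forces $\dim B\ge\rank R$, and for the maximization to range over a nonempty set I also take $\dim B\ge\rank T$. I would then invoke the vectorization bijection: fixing bases, every $\ket{X}\in A\otimes B$ corresponds to a unique operator $X\colon B\to A$ with $\tr_{B}\dyad{X}=XX^{\dagger}$ and $\ip{X}{Y}=\tr(X^{\dagger}Y)$. Hence the purifications of $R$ in $A\otimes B$ are exactly the vectors $\ket{A}$ with $AA^{\dagger}=R$; in particular $\ket{\psi}=\ket{A_{0}}$ with $A_{0}A_{0}^{\dagger}=R$, and the (right) polar decomposition gives $A_{0}=\sqrt{R}\,V_{0}$ where $V_{0}^{\dagger}$ is an isometry from $\im(R)$ into $B$. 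Likewise the purifications of $T$ are exactly the $\ket{\sqrt{T}\,V}$ as $V^{\dagger}$ ranges over isometries from $\im(T)$ into $B$.

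For the inequality $F(R,T)\ge\abs{\ip{\psi}{\phi}}^{2}$, valid for every purification $\ket{\phi}$ of $T$, I would simply apply data processing (Proposition~\ref{prop:fidel-props}, item~3) to the partial-trace channel $\tr_{B}\in\Channel(A\otimes B,A)$ together with the pure-state formula (item~4):
\[ \abs{\ip{\psi}{\phi}}^{2}=F\!\left(\dyad{\psi},\dyad{\phi}\right)\le F\!\left(\tr_{B}\dyad{\psi},\tr_{B}\dyad{\phi}\right)=F(R,T). \]
Taking the supremum over $\ket{\phi}$ with $\tr_{B}(\phi)=T$ gives one direction, and since that set is compact and the overlap continuous the supremum is in fact attained.

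For the reverse inequality I would exhibit an optimal $\ket{\phi}$. Using the parametrizations above,
\[ \ip{\psi}{\phi}=\tr\!\left(A_{0}^{\dagger}\sqrt{T}\,V\right)=\tr\!\left(V\,M\right),\qquad M:=V_{0}^{\dagger}\sqrt{R}\sqrt{T}. \]
Because $V_{0}^{\dagger}$ is isometric on $\im(R)\supseteq\im(\sqrt{R}\sqrt{T})$, one has $M^{\dagger}M=(\sqrt{R}\sqrt{T})^{\dagger}\sqrt{R}\sqrt{T}$, so the singular values of $M$ are those of $\sqrt{R}\sqrt{T}$ and $\norm{M}_{1}=\norm{\sqrt{R}\sqrt{T}}_{1}=\sqrt{F(R,T)}$. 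Writing a singular value decomposition $M=\sum_{k}s_{k}\dyad{a_{k}}{b_{k}}$ (with $\ket{b_{k}}\in\im(\sqrt{T})=\im(T)$), I would choose $V$ so that $V\ket{a_{k}}=\ket{b_{k}}$ for every $k$ and extend it to an isometry on all of $\im(T)$, which is possible since $\dim B\ge\rank T$. The resulting $\ket{\phi}:=\ket{\sqrt{T}\,V}$ is a genuine purification of $T$, and $\ip{\psi}{\phi}=\tr(VM)=\sum_{k}s_{k}=\norm{M}_{1}=\sqrt{F(R,T)}$, so $\abs{\ip{\psi}{\phi}}^{2}=F(R,T)$ is achieved, completing the proof.

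I expect the achievability step to be the main obstacle to write cleanly: correctly parametrizing all purifications of $T$ in $A\otimes B$ through the operator picture, tracking the support and rank conditions so that the optimizing isometry genuinely exists inside $A\otimes B$, and verifying the Schatten-$1$-norm invariance $\norm{V_{0}^{\dagger}\sqrt{R}\sqrt{T}}_{1}=\norm{\sqrt{R}\sqrt{T}}_{1}$ that identifies the optimum with $\sqrt{F(R,T)}$. The data-processing direction and the reductions are essentially bookkeeping.
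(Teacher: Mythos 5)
Your proof is correct. Note that the paper does not actually prove Lemma~\ref{lem:Uhl-thm}: it is stated as standard background (Uhlmann's theorem) with no argument supplied, so there is nothing internal to compare against. What you give is the standard textbook proof --- the easy direction by data processing under $\tr_B$, and achievability via the operator--vector correspondence, the left polar decomposition $A_0=\sqrt{R}\,V_0$, and the variational fact that $\max_V\abs{\tr(VM)}=\norm{M}_1$ is attained by aligning $V$ with the singular vectors of $M=V_0^{\dagger}\sqrt{R}\sqrt{T}$, whose singular values coincide with those of $\sqrt{R}\sqrt{T}$ since $V_0V_0^{\dagger}$ acts as the identity on $\im(R)$. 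The support/rank bookkeeping you flag (needing $\dim B\ge\rank T$ so the optimizing isometry exists, and $\norm{V_0^{\dagger}\sqrt{R}\sqrt{T}}_1=\norm{\sqrt{R}\sqrt{T}}_1$) is handled correctly, and the normalization reduction is sound given the homogeneity $F(\lambda R,\mu T)=\lambda\mu F(R,T)$ of \eqref{eq:fid-def}.
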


\paragraph{No-Go Theorems, Embezzling, \& Motivation}
With the established background, we now present the previous results related to zero communication pure state transformations which we will discuss our results in relation to. The first is a lower bound on the number of qubits or classical bits necessary to convert between pure states \cite{Hayden-2003a}.
\begin{proposition}(\cite[Theorem 8]{Hayden-2003a})\label{prop:Hayden-Winter}
Consider a state transformation via \edit{an arbitrary} channel $\cE \in \Channel(A\otimes B,A\otimes B)$ from seed state $\ket{\phi}_{AB}$ to target state $\ket{\psi}_{AB}$ such that $F(\cE(\phi),\psi) \geq 1 - \ve$ \edit{where the channel $\cE$ is implemented using local operations and quantum communication}. Then, independent of any amount of entanglement assistance, for $\delta = \sqrt[8]{\ve}$, in the implementation of $\cE$, $q$ qubits \edit{must have been} exchanged where
\begin{equation}\label{eq:Hayden-Winter}
\begin{aligned}
q \geq & \frac{1}{2}\left[\Delta_{\delta}(\Tr_{B}(\dyad{\psi})) - \Delta_{0}(\Tr_{B}(\dyad{\phi})) \right] \\
& \hspace{4cm} + \log(1-\delta) \ , 
\end{aligned}
\end{equation}
where 
\begin{equation*}
	\begin{aligned}
 \exp(\Delta_{\ve}(P)) =  \min \, \, & \rank(\wt{P}) \cdot \lambda_{\max}(\wt{P}) \\
\text{s.t.} &  \Tr (\wt{P}) \geq 1 - \ve \\
&  \,  \wt{P} = \Pi P \Pi \, \\
 & \, [P, \Pi] = 0 \\
 & \, \Pi^{2} = \Pi \ .
	\end{aligned}
\end{equation*}
Moreover, the bound given in \eqref{eq:Hayden-Winter} holds for a necessary amount of classical communication by multiplying the R.H.S.\ by two.
\end{proposition}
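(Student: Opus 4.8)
The plan is to read \eqref{eq:Hayden-Winter} as a conservation law for the \emph{entanglement spread} across the cut, $\Delta_0(\rho):=\log\!\big(\rank(\rho)\,\lambda_{\max}(\rho)\big)$ --- i.e.\ the difference of the max- and min-entropies of the Schmidt spectrum --- and its smoothing $\Delta_\ve$ exactly as in the statement (note $\exp\Delta_0(P)=\rank(P)\lambda_{\max}(P)$, since for $\ve=0$ the constraint forces the feasible $\wt P$ to equal $P$). First I would put the protocol in canonical form via Stinespring dilation: any implementation of $\cE$ by local operations together with quantum (resp.\ classical) communication and entanglement assistance is equivalent to a sequence of local unitaries acting on a global \emph{pure} state over $A_{\mathrm{tot}}B_{\mathrm{tot}}$ --- each party's system together with local ancillas, the pre-shared EPR pairs sitting inside the initial ancilla --- interleaved with rounds in which one party physically hands a subsystem to the other. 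The object to track is $\Delta_0$ of Alice's reduced state, equivalently the entanglement spread of the global pure state across $A_{\mathrm{tot}}\!:\!B_{\mathrm{tot}}$, at every step.

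Next I would set up the bookkeeping. (i) A local unitary by either party preserves the Schmidt spectrum across the cut, hence leaves $\Delta_0$ fixed. (ii) Tensoring Alice's reduced state with $\I_d/d$ (her half of an EPR pair $\ket{\Phi^+}_{\mathbb{C}^d\mathbb{C}^d}$) multiplies $\rank$ by $d$ and $\lambda_{\max}$ by $1/d$, so it leaves $\rank\cdot\lambda_{\max}$, hence $\Delta_0$, unchanged; product ancillas contribute nothing. Thus the initial value is $\Delta_0(\Tr_B(\dyad{\phi}))$, independent of the amount of entanglement assistance. (iii) When $q$ qubits cross the cut, the before/after reduced states of the sending party differ by a partial trace over (resp.\ extension by) a $2^q$-dimensional register $C$: with $\rho_{AC}$ the ``before'' state and $\rho_A=\Tr_C(\rho_{AC})$ the ``after'' one, the inclusion $\mathrm{supp}(\rho_{AC})\subseteq\mathrm{supp}(\rho_A)\otimes C$ (a positive-semidefinite block argument) together with the fact that $\rho_A$ is a sum of $2^q$ operators each of rank $\le\rank(\rho_{AC})$ gives $\rank(\rho_A)/\rank(\rho_{AC})\in[2^{-q},2^q]$; a parallel pair of estimates gives $\lambda_{\max}(\rho_A)/\lambda_{\max}(\rho_{AC})\in[2^{-q},2^q]$; hence each round moves $\Delta_0$ by at most $2q$. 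Summing over rounds and using (i)--(ii),
\[
Q\ \ge\ \tfrac12\big[\,\Delta_0(\text{final global state})-\Delta_0(\Tr_B(\dyad{\phi}))\,\big],
\]
with $Q$ the total number of qubits communicated. For classical communication the improvement comes from the retained copy: sending $c$ classical bits acts on the sender's reduced state as a dephasing of a $2^c$-dimensional register, which can only \emph{increase} $\rank$ (by a factor in $[1,2^c]$) and only \emph{decrease} $\lambda_{\max}$ (by a factor in $[2^{-c},1]$); these have opposite signs in $\Delta_0=\log\rank+\log\lambda_{\max}$, so a round moves $\Delta_0$ by at most $c$, which doubles the right-hand side above.

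It remains to lower bound $\Delta_0$ of the final global pure state $\Omega_{A_{\mathrm{tot}}B_{\mathrm{tot}}}$ by $\Delta_\delta(\Tr_B(\dyad{\psi}))$. Since $\ket{\Omega}$ purifies $\cE(\phi)$ and every purification of $\psi$ on this system is $\ket{\psi}_{AB}\otimes\ket{\theta}_{A_EB_E}$ for a unit vector $\ket{\theta}$ (here $A_E,B_E$ are the ancilla parts of $A_{\mathrm{tot}},B_{\mathrm{tot}}$), Uhlmann's theorem (Lemma~\ref{lem:Uhl-thm}) furnishes such a $\ket{\theta}$ with $F(\Omega,\psi\otimes\theta)\ge 1-\ve$; data processing of the fidelity under $\Tr_{B_{\mathrm{tot}}}$ (Proposition~\ref{prop:fidel-props}(3)) then gives $F\big(\Tr_{B_{\mathrm{tot}}}(\dyad{\Omega}),\ \Tr_B(\dyad{\psi})\otimes\Tr_{B_E}(\dyad{\theta})\big)\ge 1-\ve$. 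Because $\Delta_0$ is not continuous, I pass to the smoothed quantity: converting this fidelity bound to trace distance (Fuchs--van de Graaf) and feeding it through the gentle measurement lemma shows $\Delta_0\big(\Tr_{B_{\mathrm{tot}}}(\dyad{\Omega})\big)\ge\Delta_\delta\big(\Tr_B(\dyad{\psi})\otimes\Tr_{B_E}(\dyad{\theta})\big)+2\log(1-\delta)$, and carrying the nested square roots through this chain is exactly what produces $\delta=\sqrt[8]{\ve}$. Finally, a short lemma shows that tensoring an auxiliary state cannot decrease the smoothed spread, $\Delta_\delta(\sigma\otimes\tau)\ge\Delta_\delta(\sigma)$, so the junk $\Tr_{B_E}(\dyad{\theta})$ only helps; discarding it gives $\Delta_0(\text{final global state})\ge\Delta_\delta(\Tr_B(\dyad{\psi}))+2\log(1-\delta)$, and substituting into the displayed inequality yields \eqref{eq:Hayden-Winter} for qubits and its doubled form for classical bits.

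The part I expect to be routine is the $\Delta_0$ bookkeeping of the second paragraph: these are short facts about partial traces, block positivity, and dephasing. The main obstacle is the last paragraph. Since $\Delta_0$ is badly discontinuous --- a state arbitrarily close to one with flat spectrum can have $\Delta_0$ far from zero --- one must smooth, and converting the fidelity guarantee $F(\cE(\phi),\psi)\ge1-\ve$ into a clean statement about the mass-based smoothing $\Delta_\delta$ of the theorem requires carefully compounding gentle-measurement and Fuchs--van de Graaf estimates (including replacing the support projector of $\Tr_{B_{\mathrm{tot}}}(\dyad{\Omega})$ by a nearby projector that commutes with the target); this is what generates the unlovely $\sqrt[8]{\ve}$ and the additive $\log(1-\delta)$. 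One also has to establish the tensor monotonicity $\Delta_\delta(\sigma\otimes\tau)\ge\Delta_\delta(\sigma)$, which holds but is not immediate since the optimal smoothing projector for a product state need not be a product.
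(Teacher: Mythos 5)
This proposition is quoted from \cite[Theorem 8]{Hayden-2003a} and the paper offers no proof of its own, so there is nothing in-paper to compare against; the relevant benchmark is the original Hayden--Winter argument, and your sketch reconstructs it faithfully. The bookkeeping in your second paragraph is the heart of that proof and your estimates are right: local unitaries preserve the Schmidt spectrum, tensoring with $\mbb{1}_d/d$ leaves $\rank\cdot\lambda_{\max}$ invariant (which is exactly why the bound is independent of entanglement assistance), a $q$-qubit transfer moves both $\log\rank$ and $\log\lambda_{\max}$ by at most $q$ each, and the observation that dephasing can only increase rank and only decrease $\lambda_{\max}$ --- so that the two contributions to $\Delta_0$ partially cancel --- is precisely what makes classical bits half as costly and produces the factor of two. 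The only part you leave genuinely unproven is the one you flag yourself: converting $F(\cE(\phi),\psi)\ge 1-\ve$ into $\Delta_0$ of the actual final state being at least $\Delta_\delta(\Tr_B(\dyad{\psi}))+2\log(1-\delta)$ with $\delta=\sqrt[8]{\ve}$, which requires the Uhlmann/Fuchs--van de Graaf/gentle-measurement chain and the tensor monotonicity of $\Delta_\delta$; these are exactly the technical lemmas that occupy the corresponding section of \cite{Hayden-2003a}, so your sketch is an accurate map of where the real work lies rather than a complete proof of it.
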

While the above proposition is very powerful and implies two states with different Schmidt decompositions cannot be perfectly converted with zero communication, it is not sufficient in every scenario. In particular, the following example shows that in certain cases Proposition \ref{prop:Hayden-Winter} cannot eliminate \textit{any} state from being able to be converted to a given target state with relatively high fidelities.
\begin{example}[On the necessity of communication] \label{ex:necessity-of-comm}
Up to local unitaries, let the target state be $\ket{\psi} = \sqrt{0.54} \ket{00} + \sqrt{0.02} \ket{11} + \sqrt{0.44}\ket{22}$. Assume we are interested in a state transformation $\cE$ such that $F(\cE(\phi),\psi) = 0.99$, where $\ket{\phi}$ is the seed state. Then $\ve = 0.01$, so $\delta > 0.56$. Note $\Tr_{B}(\dyad{\psi}) = 0.54 \dyad{0} + 0.02\dyad{1} + 0.44\dyad{2}$. Then $\Delta_{\delta}(\Tr_{B}(\dyad{\psi})) = \log( |1| \cdot 0.44) < -1.18$, by removing the $0.02$ and $0.54$ eigenvalues. It may be shown \cite{Hayden-2003a} that $\Delta_{0}(\Tr_{B}(\dyad{\psi})) \geq 0$ and clearly $\log(1-\delta) < 0$. It follows that in this setting the R.H.S. of \eqref{eq:Hayden-Winter} is negative. Therefore, we have no proof from this bound that any transformation for any seed state which achieves this relatively high fidelity of $99\%$ requires any communication.
\end{example}

While the above example shows there are reasonably small tolerated errors $\varepsilon$ where Proposition \ref{prop:Hayden-Winter} is not helpful, when the tolerated error is sufficiently small, it will imply the need for communication. This sort of structure for sufficiently small $\varepsilon$ also appears when considering quantum embezzlement \cite{van-2003a}, which may be seen as a solution to Proposition \ref{prop:Hayden-Winter} implying communication is necessary. Quantum embezzlement in effect shows one can make pure state transformations with zero communication to any non-zero error if they have the right sufficiently large \edit{embezzling state}.
\begin{proposition}\label{prop:q-emb}(\cite{van-2003a}) Consider the family of \edit{embezzler} states  $\ket{\mu(n)}_{A'B'} = \frac{1}{\sqrt{H_{n}}} \sum_{j=1}^{n} \frac{1}{\sqrt{j}}\ket{j}_{A'}\ket{j}_{B'}$ where $H_{n} := \sum_{i=1}^{n} n^{-1}$ is the Harmonic number. For any $\ve > 0$ and target bipartite pure state $\ket{\psi}_{AB}$ with Schmidt rank $m$, for $n > m^{1/\ve}$ there exist unitaries $U_{AA'}, W_{BB'}$ such that
\begin{align*}
& F(U_{AA'} \otimes W_{BB'} (\ket{\mu(n)}_{A'B'}\ket{0}_{A}\ket{0}_{B}),\\
& \hspace{3cm} \ket{\mu(n)}_{A'B'} \otimes \ket{\psi}_{AB}) \geq 1 - \ve \ .
\end{align*}
Moreover, $U,W$ are in effect permutations on the joint Schmidt bases.
\end{proposition}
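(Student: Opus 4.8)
The plan is to reduce the statement to an elementary inequality about non-negative vectors and then verify that inequality with a direct estimate on sorted Schmidt-coefficient vectors.

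First I would apply local unitaries to $\ket\psi$ (folding them into the eventual $U,W$) so that $\ket\psi_{AB}=\sum_{k=1}^m\sqrt{q_k}\,\ket k_A\ket k_B$ with $q\in\cP(m)$ of full support. On the bipartition $A'A\mid B'B$ the seed $\ket{\mu(n)}_{A'B'}\ket0_A\ket0_B$ then has squared Schmidt coefficients $\{1/(jH_n):j\in[n]\}$ with Schmidt basis $\{\ket j_{A'}\ket0_A\otimes\ket j_{B'}\ket0_B\}_j$, while the target $\ket{\mu(n)}_{A'B'}\otimes\ket\psi_{AB}$ has squared Schmidt coefficients $\{q_k/(jH_n):j\in[n],\,k\in[m]\}$ with Schmidt basis $\{\ket j_{A'}\ket k_A\otimes\ket j_{B'}\ket k_B\}_{j,k}$. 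The one fact driving the argument is that \emph{both multisets sum to $1$}, since $\sum_kq_k=1$ and $\sum_{j\le n}j^{-1}=H_n$. Any injection $\phi\colon[n]\to[n]\times[m]$, written $\phi(j)=(a_j,b_j)$, is implemented by the unitary $U_{AA'}\colon\ket j_{A'}\ket0_A\mapsto\ket{a_j}_{A'}\ket{b_j}_A$ (extended to a unitary on $A'\otimes A$) together with its mirror image $W_{BB'}$, which is exactly a permutation of the joint Schmidt bases; by Proposition~\ref{prop:fidel-props}(4) it achieves fidelity $\bigl(\sum_{j\le n}\sqrt{(jH_n)^{-1}}\,\sqrt{q_{b_j}/(a_jH_n)}\bigr)^2$, and by the rearrangement inequality the optimal $\phi$ makes this the squared Bhattacharyya coefficient between the seed vector and the $n$ largest entries of the target vector. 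So it suffices to bound that quantity from below.

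For the estimate, I would sort the target's squared Schmidt coefficients in decreasing order and show the $i$-th largest is at least $\tfrac1{(i+m)H_n}$: for any threshold $t$ the number of pairs $(j,k)$ with $q_k/j\ge t$ is $\sum_k\#\{j\in[n]:j\le q_k/t\}$, which, using $\lfloor x\rfloor>x-1$ (and capping at $n$ in the skewed corner case), is at least $\tfrac1t-m\ge i$ once $t\le\tfrac1{i+m}$. Pairing sorted-with-sorted then gives square-root fidelity at least
$$\frac1{H_n}\sum_{i=1}^n\frac1{\sqrt{i(i+m)}}\ \ge\ \frac1{H_n}\sum_{i=1}^n\frac1{i+m}\ =\ \frac{H_{n+m}-H_m}{H_n}\ \ge\ 1-\frac{H_m}{H_n}\ .$$
Since $H_m=O(\log m)$ and $H_n=\Omega(\log n)$, the right-hand side is $1-O(\log m/\log n)$, hence $F=(\sqrt F)^2\ge 1-\ve$ once $\log n\gtrsim\tfrac1\ve\log m$, i.e.\ $n>m^{1/\ve}$. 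Because the matching realizing the bound is a Schmidt-basis permutation, the ``Moreover'' clause is immediate.

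The hard part will be the sorted-target estimate rather than the reduction: one has to pin down the shift by $m$ (which is where $\SR(\ket\psi)=m$ enters), control the truncation of $j$ at $n$, and check that a highly skewed $q$ (a single $q_k>1/2$) does not spoil the counting bound --- and, on the bookkeeping side, absorb the passage from $\sqrt F$ to $F$ and the Harmonic-versus-logarithm comparison into the clean threshold $n>m^{1/\ve}$. The remaining points, including the verification that every Schmidt-basis permutation arises from a genuine $U_{AA'}\otimes W_{BB'}$ (needing only $\dim A\ge m$, $\dim A'\ge n$), are routine.
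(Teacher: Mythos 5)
Your proposal is sound and lands on the same harmonic-series threshold, but your key estimate is the mirror image of the one in the paper's argument (the proposition is cited from van Dam--Hayden, and the paper reproduces that proof for the classical analogue in Appendix~\ref{app:rand-embez-proof}). Both routes reduce the problem to a permutation matching of sorted Schmidt-coefficient vectors followed by a threshold-counting argument on the multiset $\{q_k/(jH_n)\}$. The paper's counting runs in the opposite direction to yours: it shows the sorted target values satisfy $z_j \le \tfrac{1}{jH_n}$, i.e.\ are dominated by the seed's coefficients, uses this to collapse the Bhattacharyya sum to $\sum_{j\le n} z_j$, and then needs a second comparison (against $R_n\otimes\pi_m$, the uniform case) to evaluate that partial sum as $H_{\lfloor n/m\rfloor}/H_n$. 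You instead lower-bound the sorted target values by a shifted harmonic sequence, $s_i \ge \tfrac{1}{(i+m)H_n}$, and evaluate the Bhattacharyya sum directly as $(H_{n+m}-H_m)/H_n$; this removes the auxiliary uniform-distribution step and the floor function, and your handling of the cap at $j\le n$ (if any single $k$ saturates the cap, the count is already $\ge n\ge i$) is the right fix, so the counting lemma is correct. The one thing to be candid about is the constant-factor slack you flag: your chain loses a factor twice (in squaring $\sqrt F \ge 1-H_m/H_n$, and in comparing $H_m \le 1+\ln m$ against $H_n \ge \ln n$), so as written it yields $F\ge 1-c\,\ve$ for a small constant $c$ rather than exactly $1-\ve$ at the threshold $n>m^{1/\ve}$. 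The paper's reproduction carries essentially the same slack in the step $H_{\lfloor n/m\rfloor}/H_n \ge \ln(n/m)/\ln n$ and in its handling of the square, so you are no worse off, but neither argument closes the clean threshold without a slightly more careful harmonic-number comparison.
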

One can see quantum embezzlement implies a way to convert one pure state to another to non-zero error by picking a large enough \edit{embezzler} and then first `embezzling out' the original state (uncomputing $\ket{\phi}$ to $\ket{0}\ket{0}$ via embezzling) and then `embezzling in' the target state $\ket{\psi}$.

What is perhaps most remarkable about the above approach is that it was shown in the original work that even if we allow LOCC and a state dependent \edit{embezzler, $\ket{\zeta}$, the amount of entanglement $n$ in $\ket{\zeta}$ must scale proportionally to $\frac{1}{\log(n)}$ as the error $\ve$ becomes close to zero, but the universal embezzling family $\ket{\mu(n)}$ scales the same way in the entanglement $n$.} That is, as $\ve$ \edit{approaches zero}, \edit{embezzling using van Dam and Hayden's universal embezzling family} is effectively optimal. However, just as with the discussion pertaining to Proposition \ref{prop:Hayden-Winter}, it's clear embezzling isn't necessary for reasonable error levels in general. In fact, we show in the following example that for any non-zero error there exist states which can be converted without any catalyst \edit{or embezzler}.
\begin{example}[On the necessity of embezzling] \label{ex:on-necc-of-emb}
As noted, as $\ve \to 0$, embezzling is necessary. However, it is not in general clear at what point embezzling becomes necessary. This can be seen as follows. Consider $\ve \in (0,1)$ and two probability distributions $p,q \in \cP(m)$ such that the $\mrm{BC}(p,q)^{2} \geq 1 - \ve$. Define the seed state as $\ket{\phi} = \sum_{i \in [m]} \sqrt{p(i)} \ket{i}_{A} \ket{i}_{B}$ and the target state as $\ket{\psi} = \sum_{i \in [m]} \sqrt{q(i)} \ket{i}_{A} \ket{i}_{B}$. Then we have
$$ F(\dyad{\phi},\dyad{\psi}) = BC(p,q)^{2} \geq 1 - \ve \ , $$
where we have used Item 5 of Proposition \ref{prop:fidel-props}. Therefore, given $\ket{\phi}$, it requires \textit{no} communication or entanglement to generate $\ket{\psi}$ to error $\ve$. In fact, as we show later (Proposition \ref{prop:local-unitaries}), this will be true for converting the set of states with Schmidt coefficients defined via $p$ to the set of states with Schmidt coefficients defined via $q$ in general.
\end{example}

Given these two examples, we see that while these results give strong characterizations of pure state transformations with zero communication, neither the need for communication by Proposition \ref{prop:Hayden-Winter} nor  the optimality of Proposition \ref{prop:q-emb} when the error tends to zero give us a full understanding of this setting. It would therefore be of value to better understand this task, and this is what the rest of this work addresses.

\section{Single Copy Pure State Conversion with Zero Communication}\label{sec:single-copy}
Our primary goal of this section is to determine the minimal error of conversion between pure states with zero communication, which would resolve the gap presented in Example \ref{ex:necessity-of-comm}. To establish the minimal error of conversion, we will use the correspondence between the probability simplex and Schmidt coefficients under local unitaries (LU), which we establish in the following subsection. We also note that this implies the existence of a classical equivalent of embezzling, which we call randomness embezzling (Theorem \ref{prop:classical-embezzling}). This correspondence motivates the idea that the optimal fidelity of pure state conversion under local unitaries is simply re-ordering the Schmidt coefficients, which we in fact prove (Theorem \ref{thm:pure-state-conversions-with-LU}). We then use the local unitary result to establish a bounded but non-linear optimization program that determines the optimal achievable fidelity under conversion via local operations and shared randomness (LOSR), which does not require shared randomness (Theorem \ref{thm:pure-state-conversion-via-LO}). We end the section by discussing the relationship between the LU and LOSR strategies and introducing an SDP relaxation for efficiently establishing upper bounds on the achievable fidelity of pure state conversions under LOSR.

\subsection{Correspondence Under Local Unitaries between Schmidt Coefficients and the Probability Simplex}
In this subsection we establish the bijection between Schmidt coefficients, which define the equivalence classes of bipartite pure states under local unitaries, and the probability simplex. One reason for this is because the rest of the results of this work might be best seen as verifying that in the zero communication setting this correspondence is all that matters. Indeed, we will see this in the subsequent subsections which show that the minimal fidelity error of pure state transformations under zero communication will always be functions of only the Schmidt coefficients.

\begin{definition}
We define the the ordered probability simplex, $\cP^{\downarrow}(d)$ as $p \in \cP^{\downarrow}$ if $p(i) \geq p(i+1)$ for all $i \in [d-1]$.
\end{definition}

\begin{proposition}\label{prop:local-unitaries}
Up to local unitaries, any pure quantum state is of the form 
$$ \ket{\psi}_{AB} = \sum_{i \in [k]} \sqrt{p^{\downarrow}(i)} \ket{i}_{A} \otimes \ket{i}_{B} \ , $$
where $p^{\downarrow}(i) \geq p^{\downarrow}(i+1)$ for all $i \in [k-1]$, $k = \max\{d,d'\}$, $p^{\downarrow} \in \cP^{\downarrow}(k)$, and $\{\ket{i}\}$ is the computational basis in both cases. In other words, there exist both equivalence classes on pure states under local unitary operations in terms of Schmidt coefficients and ordered Schmidt coefficients.
\end{proposition}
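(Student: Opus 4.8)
The plan is to derive this essentially immediately from the Schmidt decomposition recalled in \eqref{eq:Schmidt-decomp}, combined with two elementary facts: local unitaries can rotate the Schmidt bases onto the computational basis, and a synchronized permutation applied to both systems can sort the Schmidt coefficients.

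First I would invoke \eqref{eq:Schmidt-decomp} to write $\ket{\psi}_{AB} = \sum_{i \in [k]} \sqrt{p(i)}\, \ket{u_i}_A \otimes \ket{v_i}_B$ with $\{\ket{u_i}\}_{i\in[k]}$ and $\{\ket{v_i}\}_{i\in[k]}$ orthonormal bases of $A$ and $B$, padding with zero coefficients when $d \neq d'$ so that both index sets have size $k = \max\{d,d'\}$. Let $U_A \in \Unitary(A)$ be the unitary defined by $U_A\ket{u_i} = \ket{i}$ and $W_B \in \Unitary(B)$ by $W_B\ket{v_i} = \ket{i}$; these are genuine unitaries since each maps an orthonormal basis to an orthonormal basis. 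Then $(U_A \otimes W_B)\ket{\psi}_{AB} = \sum_{i\in[k]} \sqrt{p(i)}\,\ket{i}_A \otimes \ket{i}_B$, which is the stated form for a not-necessarily-ordered $p \in \cP(k)$.

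Next I would sort. Choose a permutation $\pi$ of $[k]$ with $p(\pi^{-1}(1)) \geq p(\pi^{-1}(2)) \geq \cdots \geq p(\pi^{-1}(k))$ and let $P_\pi$ be the permutation unitary $P_\pi\ket{i} = \ket{\pi(i)}$. Applying the local unitary $P_\pi \otimes P_\pi$ and reindexing gives $\sum_{j\in[k]} \sqrt{p^{\downarrow}(j)}\,\ket{j}_A \otimes \ket{j}_B$ with $p^{\downarrow} \in \cP^{\downarrow}(k)$ the decreasing rearrangement of $p$. Composing, the single pair of local unitaries $V_A = P_\pi U_A$ and $V_B = P_\pi W_B$ brings $\ket{\psi}_{AB}$ to the claimed ordered form.

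Finally, to obtain the ``equivalence class'' statement I would observe that under any $X_A \otimes Y_B$ the coefficient matrix of $\ket{\psi}$ in fixed bases transforms as $M \mapsto X M Y^{\t}$, so its singular values---the Schmidt coefficients---are invariant, and a multiset of Schmidt coefficients has a unique decreasing rearrangement; hence two pure states are LU-equivalent if and only if they share the same $p^{\downarrow}$, yielding the correspondence with $\cP^{\downarrow}(k)$ (equivalently with the unordered tuple, i.e.\ the simplex, claimed in the statement). I do not expect a genuine obstacle here: the only points needing care are the dimension bookkeeping when $d \neq d'$ and insisting on the \emph{same} permutation on $A$ and $B$ so that the output remains ``diagonal.''
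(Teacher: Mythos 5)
Your proposal is correct and takes essentially the same approach as the paper: the paper simply composes your basis-change and sorting permutation into a single pair of local unitaries $U_A = \sum_{j}\ket{\pi(j)}\bra{u_j}$, $W_B = \sum_{j}\ket{\pi(j)}\bra{v_j}$ rather than factoring them into two steps. Your closing remark on LU-invariance of the singular values is a slightly more explicit justification of the equivalence-class claim than the paper gives, but it is the same underlying argument.
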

\begin{proof}
Consider $\ket{\psi}_{AB} = \sum_{j \in [k]} \sqrt{p'(j)} \ket{u_{j}} \otimes \ket{v_{j}}$ as decomposed in \eqref{eq:Schmidt-decomp}. Now fix the permutation $\pi$ on $[k]$ such that $p'(\pi^{-1}(i)) \geq p'(\pi^{-1}(i+1))$ for all $i \in [k-1]$, i.e.\ $\pi$ re-labels $p'$ so that it is decreasing. Define the unitaries $U_{A} = \sum_{j \in [k]} \ket{\pi(j)}\bra{u_{j}}$, $W_{B} = \sum_{j \in [k]} \ket{\pi(j)}\bra{v_{j}}$, which may be verified to be unitaries by direct calculation. Then $(U_{A} \otimes W_{B})\ket{\psi}_{AB}$ will be of the form given in the proposition statement. Finally, we could make this argument for any pure state without ordering the Schmidt coefficients to get one set of equivalence classes. As such, under local unitaries, we can define equivalence classes of pure states in terms of ordered or non-ordered Schmidt coefficients. This completes the proof.
\end{proof}

\begin{definition}
The space of (representatives of the equivalence class of) ordered Schmidt coefficient pure states with Schmidt rank bounded by $d$ is given by $\mrm{SR}^{\downarrow}(d)$. That is, if $\ket{\psi} \in \mrm{SR}^{\downarrow}(d)$, then $\ket{\psi} = \sum_{i \in [d]} \sqrt{p^{\downarrow}(i)} \ket{u_{i}} \ket{i}\ket{i}$ where $p^{\downarrow} \in \cP^{\downarrow}(d)$.
\end{definition}

We can use the previous proposition to relate the (ordered) probability simplex over $d$ elements to to the equivalence classes of (ordered) Schmidt decompositions with Schmidt rank bounded by $d$. This will make use of the vec mapping.

\begin{definition} Given the space of linear operators $\Lin(A,B)$ where $A \cong \mbb{C}^{d}$, $B \cong \mbb{C}^{d'}$, the vec mapping $\opvec: \Lin(A \otimes B) \to A \otimes B$ is defined by $\opvec(\ket{i}\bra{j}) = \ket{j} \otimes \ket{i}$ where $\{\ket{i}\}_{i \in [d]}$ and $\{\ket{j}\}_{j \in [d']}$ are the computational bases for $A$ and $B$ respectively.
\end{definition} 
This choice of definition for the vec mapping satisfies the identity 
\begin{equation}\label{eq:vec-map-ident}
	(X_{1}^{T} \otimes X_{0})\opvec(Y) = \opvec(X_0 Y X_1 ) \ ,
\end{equation}
where $X_0 \in \Lin(A_0 , B_0)$, $X_1 \in \Lin(A_1, B_1)$, and $Y \in \Lin(B_1 , B_0)$. 

\begin{proposition}
Consider the functions $\opvec(\sqrt{\cdot}): \Lin(\mbb{C}^{d}) \to \mbb{C}^{d} \otimes \mbb{C}^{d}$ and $\opvec^{-1}(\cdot^{\odot 2}): \mbb{C}^{d} \otimes \mbb{C}^{d} \to \Lin(\mbb{C}^{d})$ where $\cdot^{\odot 2}$ is the entry-wise square of a vector. These functions define a bijection between $\cP(d)$ (resp.\ $\cP^{\downarrow}(d)$) and the space of equivalence classes of Schmidt decompositions under local unitaries with Schmidt rank bounded by $d$ (resp.\ the space $\SR^{\downarrow}(d)$.)
\end{proposition}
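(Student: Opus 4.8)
The plan is to realize both maps concretely on the diagonal-matrix picture of probability vectors and then verify they are mutually inverse, using Proposition~\ref{prop:local-unitaries} to identify the target space. First I would identify $\cP(d)$ with the set of diagonal density operators $\mathcal{D} = \{\sum_{i\in[d]} p(i)\ket{i}\bra{i} : p\in\cP(d)\}\subset\Lin(\mbb{C}^d)$; this is a bijection (read off the diagonal) and it carries $\cP^{\downarrow}(d)$ onto those $P\in\mathcal{D}$ whose diagonal is weakly decreasing. Then the forward map unwinds: for $P = \sum_i p(i)\ket{i}\bra{i}$ one has $\sqrt{P} = \sum_i \sqrt{p(i)}\ket{i}\bra{i}$ (spectral calculus for a diagonal positive operator), so by linearity of $\opvec$ and $\opvec(\ket{i}\bra{i}) = \ket{i}\otimes\ket{i}$ we get $\opvec(\sqrt{P}) = \sum_{i\in[d]}\sqrt{p(i)}\ket{i}\otimes\ket{i}$, which by Proposition~\ref{prop:local-unitaries} is exactly the canonical (ordered, when $p$ is) Schmidt-form representative of a pure state with Schmidt coefficients $\{\sqrt{p(i)}\}$ and Schmidt rank $\abs{\mrm{supp}(p)}\le d$, hence an element of the claimed target space (and of $\SR^{\downarrow}(d)$ when $p\in\cP^{\downarrow}(d)$). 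Conversely, Proposition~\ref{prop:local-unitaries} together with the definition of $\SR^{\downarrow}(d)$ says every element of the target space is of this form for some such $p$, so the forward map is a well-defined surjection onto it.

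Next I would check invertibility by direct computation. Since $\opvec$ is a linear bijection $\Lin(\mbb{C}^d)\to\mbb{C}^d\otimes\mbb{C}^d$ (it is a relabelling of the standard basis; see also \eqref{eq:vec-map-ident}), $\opvec^{-1}$ is defined, and for any vector $\sum_i a_i\ket{i}\otimes\ket{i}$ supported on the ``diagonal'' basis vectors $\{\ket{i}\otimes\ket{i}\}_i$ the entrywise square is $\sum_i a_i^2\ket{i}\otimes\ket{i}$ (no cross terms, these being distinct basis vectors), so $\opvec^{-1}$ of it is $\sum_i a_i^2\ket{i}\bra{i}$. Taking $a_i = \sqrt{p(i)}$ gives $\opvec^{-1}((\opvec(\sqrt{P}))^{\odot 2}) = P$; running the computation backwards, and using that $\opvec^{-1}(\cdot^{\odot 2})$ applied to a target representative $\ket{\psi} = \sum_i\sqrt{p(i)}\ket{i}\otimes\ket{i}$ returns the positive operator $\sum_i p(i)\ket{i}\bra{i}$ (which lies in $\mathcal{D}\cong\cP(d)$ since $\ket{\psi}$ is a unit vector) with square root $\sum_i\sqrt{p(i)}\ket{i}\bra{i}$, gives $\opvec(\sqrt{\opvec^{-1}(\ket{\psi}^{\odot 2})}) = \ket{\psi}$. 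Hence the two maps are mutual inverses, and restricting throughout to $p\in\cP^{\downarrow}(d)$ specializes the statement to the $\cP^{\downarrow}(d)\leftrightarrow\SR^{\downarrow}(d)$ bijection verbatim.

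The only genuinely delicate point — and the one I would spell out most carefully — is pinning down ``the space of equivalence classes of Schmidt decompositions under local unitaries with Schmidt rank bounded by $d$'' and checking the forward map is a bijection onto it. Following Proposition~\ref{prop:local-unitaries}, I would read it as the set of canonical Schmidt-form states $\{\sum_{i\in[d]}\sqrt{p(i)}\ket{i}\otimes\ket{i} : p\in\cP(d)\}$ (resp.\ $\SR^{\downarrow}(d)$ once the coefficients are ordered), which is what makes the inverse map $\opvec^{-1}(\cdot^{\odot 2})$ well-defined on a chosen representative; surjectivity onto this set is precisely Proposition~\ref{prop:local-unitaries}, and injectivity is immediate since distinct diagonals give distinct operators (and, in the ordered case, two ordered coefficient vectors that agree as multisets are equal). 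No estimate or nontrivial lemma is needed beyond this: the whole argument is bookkeeping with the $\opvec$ identity \eqref{eq:vec-map-ident}, the spectral calculus for diagonal operators, and Proposition~\ref{prop:local-unitaries}.
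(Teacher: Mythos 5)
Your proposal is correct and follows essentially the same route as the paper's proof: a direct calculation showing $\opvec(\sqrt{\cdot})$ sends $P=\sum_i p(i)\dyad{i}$ to the canonical Schmidt-form representative $\sum_i\sqrt{p(i)}\ket{i}\otimes\ket{i}$ and that $\opvec^{-1}(\cdot^{\odot 2})$ recovers $P$, with Proposition~\ref{prop:local-unitaries} supplying the identification of the target space. If anything you are slightly more explicit than the paper about injectivity and about the two maps being mutual inverses, while the paper spends its care instead on the isometries embedding the canonical space $C\otimes C'$ into $A\otimes B$; both are routine bookkeeping and the arguments are interchangeable.
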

\begin{proof}
We prove it via direct calculation for $\cP(d)$ and the space of Schmidt decompositions. The proof in the other case works the same. Let $C \cong \mbb{C}^{d}$. First, consider $p \in \cP(d)$ which we write in its density matrix form, e.g.\ $P = \sum_{i \in [d]} p(i) \dyad{i}$. Then 
\begin{align*} \opvec(\sqrt{P}) =& \opvec\left( \sum_{i \in [d]} \sqrt{p(i)} \dyad{i} \right) \\
=& \sum_{i \in [d]} \sqrt{p(i)} \ket{i}_{C} \otimes \ket{i}_{C'} \ ,
\end{align*}
which is in the specified equivalence class by applying an isometries that take the computational bases from $C,C'$ to $A,B$. In the other direction, take the Schmidt decomposition in the purified basis, $\ket{\psi}_{AB} = \sum_{i \in [d]} \sqrt{q(i)} \ket{i}_{A} \otimes \ket{i}_{B}$. We can convert the $A$ space to $C$ via the channel 
$$ \cF_{A \to C}(\cdot) := V^{\dagger} \cdot V + (\mbb{1} - V^{\dagger}V) \cdot (\mbb{1} - V^{\dagger}V) \ , $$
where $V = \sum_{i \in [d]} \ket{i}_{A}\bra{i}_{C}$ is the isometry that takes the $C$ space to the $A$ space as $|A| \geq |C|$ by assumption. The same type of conversion holds for the $B$ and $C'$. Therefore, we have (up to equivalences) $\ket{\psi}_{AB} = \sum_{i \in [d]} \sqrt{q(i)} \ket{i}_{C} \ket{i}_{C'}$. Then,
\begin{align*}
    \opvec^{-1}(\ket{\psi}^{\cdot 2}) =& \opvec^{-1}(\sum_{i \in [d]} q(i) \ket{i}_{C}\ket{i}_{C'}) \\
    =& \sum_{i \in [d]} q(i) \dyad{i}_{C} \ , 
\end{align*}
where in the last line we used that $C' \cong C$ so that $\Lin(C,C') \cong \Lin(C)$. This completes the proof.
\end{proof}
The reason this is useful is it draws equivalence between the equivalence classes of entangled states in terms of Schmidt coefficients and probability distributions under fidelity.
\begin{proposition}
Consider $\ket{\phi} = \sum_{i \in [d]} \sqrt{p(i)} \ket{i}_{A}\ket{i}_{B}$, $\ket{\psi} = \sum_{i \in [d]} \sqrt{q(i)} \ket{i}_{A}\ket{i}_{B}$. Then 
$ F(\dyad{\phi},\dyad{\psi}) = BC(p,q)^{2}$.
\end{proposition}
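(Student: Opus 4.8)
The plan is to reduce the statement to Items~4 and~5 of Proposition~\ref{prop:fidel-props} by a direct computation of the overlap $\langle\phi|\psi\rangle$. Since both $\ket\phi$ and $\ket\psi$ are pure, Item~4 gives $F(\dyad\phi,\dyad\psi)=|\langle\psi|\phi\rangle|^{2}$, so it suffices to show $\langle\psi|\phi\rangle=BC(p,q)$.

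First I would expand the overlap in the shared orthonormal product basis $\{\ket{i}_{A}\otimes\ket{i}_{B}\}_{i\in[d]}$. Writing $\ket\phi=\sum_{i}\sqrt{p(i)}\,\ket{i}_{A}\ket{i}_{B}$ and $\ket\psi=\sum_{j}\sqrt{q(j)}\,\ket{j}_{A}\ket{j}_{B}$, orthonormality of the computational basis on each tensor factor gives $\langle j|i\rangle_{A}\langle j|i\rangle_{B}=\delta_{ij}$, so all cross terms vanish and $\langle\psi|\phi\rangle=\sum_{i\in[d]}\sqrt{q(i)}\sqrt{p(i)}=\sum_{i\in[d]}\sqrt{p(i)q(i)}$.

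Then I would observe that every summand is a nonnegative real number, since the amplitudes $\sqrt{p(i)},\sqrt{q(i)}$ are real and nonnegative; hence the sum is itself real and nonnegative, so $|\langle\psi|\phi\rangle|=\sum_{i}\sqrt{p(i)q(i)}=BC(p,q)$ by the defining expression for the Bhattacharyya coefficient recalled in Item~5 of Proposition~\ref{prop:fidel-props}. Squaring and invoking Item~4 yields $F(\dyad\phi,\dyad\psi)=|\langle\psi|\phi\rangle|^{2}=BC(p,q)^{2}$, as claimed.

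There is essentially no obstacle here: the only point requiring a moment's care is the nonnegativity of the amplitudes, which rules out cancellation and makes the modulus of the overlap coincide with the Bhattacharyya coefficient rather than merely being bounded by it. Indeed the statement is just the special case of Item~6 of Proposition~\ref{prop:fidel-props} in which the shared eigenbasis is the product Schmidt basis $\{\ket{i}_{A}\ket{i}_{B}\}$; an alternative, slightly more roundabout route would be to apply the bijection of the preceding proposition to pass to the classical distributions and then quote Item~5, but the one-line overlap computation above is the cleanest argument.
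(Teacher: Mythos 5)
Your proof is correct, but it takes a different route from the paper's. You compute the overlap $\langle\psi|\phi\rangle$ directly in the product basis, note that all summands are nonnegative reals so no cancellation occurs, and conclude via Item~4 of Proposition~\ref{prop:fidel-props} that $F(\dyad{\phi},\dyad{\psi})=\bigl(\sum_{i}\sqrt{p(i)q(i)}\bigr)^{2}=BC(p,q)^{2}$. The paper instead defines the isometry $V:\ket{i}_{A}\mapsto\ket{i}_{A}\ket{i}_{B}$, observes that $\ket{\phi}=V\ket{\phi'}$ and $\ket{\psi}=V\ket{\psi'}$ for the single-system states $\ket{\phi'}=\sum_{i}\sqrt{p(i)}\ket{i}_{A}$, $\ket{\psi'}=\sum_{i}\sqrt{q(i)}\ket{i}_{A}$, and then chains isometric invariance (Item~2) with Item~6. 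Your computation is the more elementary and self-contained of the two; the paper's version buys thematic consistency with the surrounding subsection, which is precisely about transporting statements between bipartite Schmidt-coefficient states and single-system probability vectors via isometries and the vec map, so the isometric reduction is doing double duty as an illustration of that correspondence. Your closing remark that the proposition is literally the special case of Item~6 with shared eigenbasis $\{\ket{i}_{A}\ket{i}_{B}\}$ is also a valid one-line proof and arguably the shortest of all three. No gaps.
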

\begin{proof}
First note $V: \ket{i}_{A} \to \ket{i}_{A}\ket{i}_{B}$ is an isometry. \edit{Define $\ket{\phi'} := \sum_{i} \sqrt{p(i)}\ket{i}_{A}$ and similarly $\ket{\psi'}$. Note $V\ket{\phi'} = \ket{\phi}$ and $V\ket{\psi'} = \ket{\psi}$. Thus, we have
\begin{align*} 
F(\dyad{\phi},\dyad{\psi}) =& F(V\dyad{\phi'}V^{\dagger}, V\dyad{\psi'}V^{\dagger})\\
 =& F(\dyad{\psi'},\dyad{\phi'}) \\
 =& BC(p,q)^{2} \ , 
\end{align*}
where the first equality is our observation, the second is isometric equivalence (Item 2 of Proposition \ref{prop:fidel-props}), and the third is using Item 6 of Proposition \ref{prop:fidel-props} where we note $\ket{\psi'},\ket{\phi'}$ are of the form given in that item. This completes the proof.}
\end{proof}

\paragraph*{Randomness Embezzling}
Before moving forward, we note that independent of the focus of this work, this equivalence between Schmidt coefficients and the probability simplex means that the proof of quantum embezzlement also proves the existence of a classical version. Specifically, \edit{the proof of quantum embezzlement \cite{van-2003a} only bounds the fidelity between the Schmidt coefficients of the embezzling state and the target state tensored with the embezzling state (under some permutation). This allows them to bound the fidelity by reducing it to the Bhattacharyya coefficient of the Schmidt coefficients:}
\begin{align*}
F(\ket{\psi},\ket{\phi}) = |\langle \psi, \phi \rangle|^{2} =& \left|\langle \sqrt{P}, \sqrt{Q} \rangle \right|^{2} \\
=& \left(\sum_{i} \sqrt{p(i)q(i)} \right)^{2} \\ 
=& \mrm{BC}(p,q)^{2} \ \edit{.}
\end{align*}
\edit{This argument} follows the same \edit{form} as the previous few propositions. \edit{This allows us} to ultimately conclude the same proof bounds a classical equivalent of embezzling (Proposition \ref{prop:classical-embezzling}). \edit{This is shown in further formal detail in Appendix \ref{app:rand-embez-proof}.}

\edit{Moreover, we note this above idea has been previously explored in the context of quantum thermodynamics. Specifically, this exact idea was sketched in \cite{Brandao-2015a} where it was used to show that any process can be done cyclically when an approximate error condition is permitted and there are no restrictions on the embezzling state. Moreover, \cite{Ng-2015a} explored this issue in further detail and in particular showed thermal embezzling no longer violates the second law in an approximate sense when there are physical restrictions on the embezzling state.} As we did not present the proof for embezzlement of quantum states \edit{and the proof is omitted in \cite{Brandao-2015a},} we present the proof of embezzlement of probability distributions in full for clarity in Appendix \ref{app:rand-embez-proof}. 

\begin{figure}
\centering
\begin{subfigure}[b]{0.4\columnwidth}
    \begin{center}
    \includegraphics[width=\columnwidth]{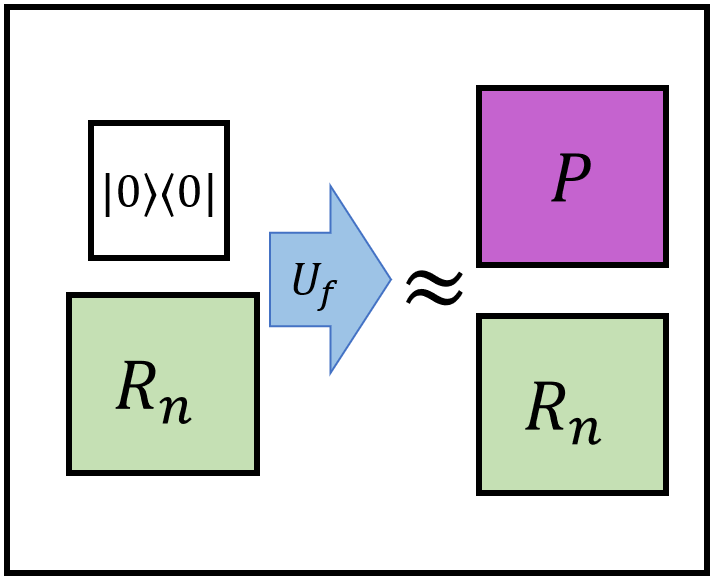}
    \end{center}
   \caption{}
\end{subfigure}
\begin{subfigure}[b]{0.54\columnwidth}
   \begin{center}
    \includegraphics[width=\columnwidth]{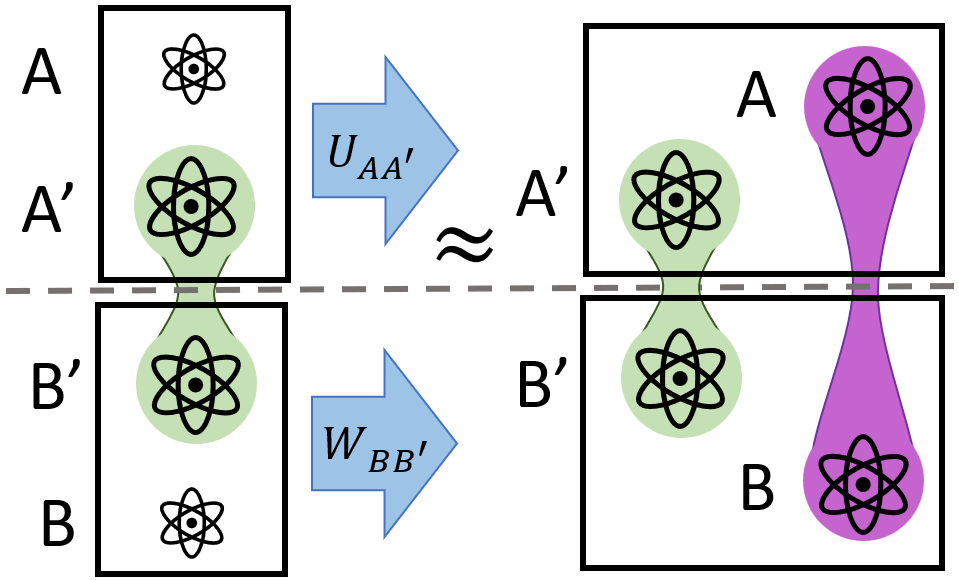}
    \end{center}
   \caption{}
\end{subfigure}
\caption{\small{Comparison between embezzlement of classical distributions and quantum states.
(a) The embezzlement of classical distributions happens within one lab and a local permutation of the joint computational basis. (b) The embezzling of quantum states happens across two labs where each party applies the permutation of the joint computational basis on their local halves. \edit{Note in both cases this is done in an approximate fashion as denoted by the $\approx$ symbol in the diagrams.}}}
\label{fig:classical-versus-quantum-embezzling}
\end{figure}

\begin{proposition}\label{prop:classical-embezzling}\edit{(See also \cite{Brandao-2015a,Ng-2015a})}
For any $\ve > 0$ and target probability distribution $P \in \cP(m)$, the \edit{embezzling} distribution $R_{n} := \frac{1}{H_{n}} \sum_{j=1}^{n} \frac{1}{j} \dyad{j}$ is such that for $n > m^{1/\ve}$ there exists a unitary representation of a basis relabeling $U_{f}$ of the joint distribution such that 
$$ F(U_{f}(R_{n} \otimes \dyad{0})U_{f}^{\dagger}, R_{n} \otimes P) \geq 1 - \ve \ . $$
\end{proposition}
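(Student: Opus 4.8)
The plan is to transport van Dam and Hayden's embezzlement argument to the classical world via the fidelity--Bhattacharyya correspondence of the preceding propositions. Since $U_f$ is a permutation matrix, $U_f(R_n\otimes\dyad 0)U_f^{\dagger}$ is diagonal, with diagonal a permutation of that of $R_n\otimes\dyad 0$, and by Item~5 of Proposition~\ref{prop:fidel-props} the fidelity of two diagonal states equals the squared Bhattacharyya coefficient of their diagonals. Thus the claim reduces to: for $n>m^{1/\ve}$ there is a permutation $f$ of the index set $[n]\times[m]$ with $\mathrm{BC}\bigl(f\cdot(r_n\otimes\delta),\,r_n\otimes p\bigr)^{2}\ge 1-\ve$, where $r_n=\tfrac{1}{H_n}\bigl(1,\tfrac12,\dots,\tfrac1n\bigr)$ is the diagonal of $R_n$, $\delta=(1,0,\dots,0)\in\cP(m)$ the diagonal of $\dyad 0$, $p$ the diagonal of $P$, and $f\cdot v$ the vector $v$ with entries permuted by $f$. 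As any relabeling of $P$ can be folded into $f$, I would assume $p$ non-increasing.

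First I would fix $f$. The source $r_n\otimes\delta$ has exactly $n$ nonzero entries $\{\tfrac1{jH_n}\}_{j\in[n]}$, whereas $r_n\otimes p$ has entries $\{\tfrac{p(l)}{j'H_n}\}_{(j',l)\in[n]\times[m]}$. Take $f$ to be the monotone rearrangement: sort both entry lists in non-increasing order and match them positionwise, so the $n(m-1)$ zero entries of the source are paired with the smallest target entries. By the rearrangement inequality applied to the square roots this $f$ is optimal, and it gives
\[
\mathrm{BC}\bigl(f\cdot(r_n\otimes\delta),\,r_n\otimes p\bigr)=\frac{1}{H_n}\sum_{j=1}^{n}\sqrt{\frac{u_{(j)}}{j}},
\]
where $u_{(1)}\ge u_{(2)}\ge\cdots$ is the non-increasing rearrangement of the multiset $\{\,p(l)/j':j'\in[n],\,l\in[m]\,\}$. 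This is essentially van Dam and Hayden's permutation: the target atom $(j',l)$, of weight $\tfrac{p(l)}{j'H_n}$, is sent to a source index $j\approx j'/p(l)$, at which the harmonic weight $\tfrac1{jH_n}$ best matches it.

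It remains to lower-bound that sum. Since $\#\{(j',l):p(l)/j'\ge 1/j\}=\sum_{l}\lfloor j\,p(l)\rfloor\ge j-m$, the $(j-m)$-th largest element of the multiset is at least $1/j$; hence $u_{(j)}\ge\tfrac1{j+m}$ for $j\le n-m$, and therefore $\tfrac1{H_n}\sum_{j=1}^{n}\sqrt{u_{(j)}/j}\ge\tfrac1{H_n}\sum_{j=1}^{n-m}\tfrac1{j+m}=1-H_m/H_n$. A sharper count --- grouping the $u_{(j)}$ into blocks of size $m$ as in the uniform-$p$ case, so that $u_{(j)}$ is close to $\tfrac1j$ for most $j$ --- upgrades this to $\mathrm{BC}\ge H_{\lfloor n/m\rfloor}/H_n\ge 1-\log m/\log n$, from which the fidelity bound follows for $n>m^{1/\ve}$, the precise constant in the threshold being the content of the appendix estimate. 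The main obstacle is exactly this last quantitative step: the crude estimate $u_{(j)}\ge\tfrac1{j+m}$ is too lossy to reach the stated threshold, so the finer harmonic-sum bookkeeping is needed; this is the calculation I would carry out in detail in Appendix~\ref{app:rand-embez-proof}. Everything else --- the reduction to Bhattacharyya coefficients and the optimality of the monotone rearrangement --- is routine given the tools already in hand.
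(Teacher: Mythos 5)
Your architecture is the same as the paper's: pass to the Bhattacharyya coefficient of the diagonals, let $f$ be the monotone matching of the sorted entry lists, and lower-bound the resulting sum $\frac{1}{H_n}\sum_{j=1}^{n}\sqrt{u_{(j)}/j}$. The reduction, the choice of $f$, and your counting bound $u_{(j)}\ge 1/(j+m)$ (hence $\mathrm{BC}\ge 1-H_m/H_n$) are all correct. The problem is that the step you defer is exactly the content of the proposition, and the route you sketch for closing it does not go through. You propose to show that $u_{(j)}$ is close to $1/j$ blockwise ``as in the uniform-$p$ case,'' i.e.\ an entrywise comparison of $u_{(j)}$ with the uniform-target values $\tfrac{1}{m\lceil j/m\rceil}$. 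That entrywise comparison is false in general: for $m=2$, $p=(0.6,0.4)$, $n=2$, the sorted multiset is $(0.6,0.4,0.3,0.2)$ while the uniform case gives $(0.5,0.5,0.25,0.25)$, so $u_{(2)}=0.4<0.5$.

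The paper's appendix closes the gap with two ingredients absent from your sketch. First, a count in the opposite direction from yours: for every $t$ there are fewer than $t$ pairs $(j',l)$ with $p(l)/j'>1/t$, which yields the upper bound $u_{(j)}\le 1/j$. This is what linearizes the sum, via $\sqrt{u_{(j)}/j}\ge\sqrt{u_{(j)}\cdot u_{(j)}}=u_{(j)}$, so that $\mathrm{BC}\ge\frac{1}{H_n}\sum_{j\le n}u_{(j)}$. Second, the comparison with the uniform target is made at the level of partial sums rather than entries: since $p$ majorizes $\pi_m$, one has $\sum_{j\le n}u_{(j)}\ge\sum_{j\le n}\tfrac{1}{m\lceil j/m\rceil}\ge H_{\lfloor n/m\rfloor}$, and $H_{\lfloor n/m\rfloor}/H_n\ge 1-\log m/\log n>1-\ve$. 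Neither ingredient follows from what you wrote, and the lower bound $u_{(j)}\ge 1/(j+m)$ that you did prove plays no role in the paper's argument. (One further small point, shared with the paper's own writeup: the chain ultimately bounds $\mathrm{BC}=\sqrt{F}$ by $1-\ve$ rather than $F=\mathrm{BC}^2$, so as literally stated one reaches $F\ge(1-\ve)^2$; this costs only a factor of two in $\ve$.)
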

We note the major difference between randomness and quantum embezzlement is the role of locality. In the classical case there is a single party and the distribution is not bipartite, both of which remove the notion of locality. These differences are non-trivial: one cannot construct a non-local classical equivalent of embezzling that at the same time demands that the \edit{embezzler} remains decoupled as in Proposition \ref{prop:q-emb}, and one cannot find a quantum equivalent of the non-local classical variation that one can implement as follows from Proposition \ref{prop:Hayden-Winter}. As it is not central to the rest of this work, we provide an extended discussion of this nuance for the interested reader in Appendix \ref{app:rand-embez-proof} after the proof of Theorem \ref{prop:classical-embezzling}. 

\subsection{Pure State Conversion under Local Unitaries}
Having established the relationship between the equivalence classes of pure states in terms of Schmidt coefficients and the probability simplex, we now show the optimal strategy for converting one pure state to another under local unitaries is simply re-labeling the Schmidt basis so the ordering of the Schmidt coeffficients is the same. This is not necessarily surprising. It is not clear what more one could do, and indeed this is the strategy that is used to implement quantum embezzlement \cite{van-2003a}. 

\begin{lemma}\label{lem:opt-fid-of-commuting-systems}
Let $R \in \Pos(\mbb{C}^{d})$, $T \in \Pos(\mbb{C}^{d'})$. Then 
$$ \max_{U} F(P,UQ U^{\dagger}) = F(\wt{R}^{\downarrow},\wt{T}^{\downarrow}) \ , $$
where $R^{\downarrow} = \sum_{i} \nu_{i}(R) \dyad{i}$, $\sigma_{1}(R) \geq n_{2}(R) \geq ... $ are the (decreasing) ordered eigenvalues of $R$, and likewise for $T^{\downarrow}$. In other words, the fidelity between $R$ and $T$ maximized over unitaries is equal to the fidelity of their ordered eigenvalues.
\end{lemma}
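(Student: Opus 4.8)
The plan is to establish two inequalities: an upper bound $F(R,UTU^{\dagger})\le F(\wt R^{\downarrow},\wt T^{\downarrow})$ valid for \emph{every} unitary $U$, together with matching achievability by an explicitly chosen $U$. First I would normalize the setup: pad $R$ and $T$ with zeros so that both act on $\mbb{C}^{n}$ with $n=\max\{d,d'\}$, and write $\wt R^{\downarrow}=\sum_{i}\nu_{i}(R)\dyad{i}$, $\wt T^{\downarrow}=\sum_{i}\nu_{i}(T)\dyad{i}$ for the diagonal operators carrying the decreasingly-ordered eigenvalues. Note that $\sqrt{\wt R^{\downarrow}}\sqrt{\wt T^{\downarrow}}$ is then diagonal with entries $\sqrt{\nu_{i}(R)\nu_{i}(T)}$, so $F(\wt R^{\downarrow},\wt T^{\downarrow})=\bignorm{\sqrt{\wt R^{\downarrow}}\sqrt{\wt T^{\downarrow}}}_{1}^{2}=\bigl(\sum_{i}\sqrt{\nu_{i}(R)\nu_{i}(T)}\bigr)^{2}$, i.e.\ the evaluation of Item~5 of Proposition~\ref{prop:fidel-props} read for commuting (not necessarily normalized) positive operators.

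For the upper bound I would start from $F(R,UTU^{\dagger})=\bignorm{\sqrt{R}\sqrt{UTU^{\dagger}}}_{1}^{2}=\bignorm{\sqrt{R}\,U\sqrt{T}\,U^{\dagger}}_{1}^{2}=\bignorm{\sqrt{R}\,U\sqrt{T}}_{1}^{2}$, using $\sqrt{UTU^{\dagger}}=U\sqrt{T}U^{\dagger}$ and unitary invariance of the Schatten $1$-norm. The key analytic input is the product inequality $\norm{XY}_{1}\le\sum_{i}\sigma_{i}(X)\sigma_{i}(Y)$ for $X,Y\in\Lin(\mbb{C}^{n})$: this is a standard fact, and if desired can be derived in two lines from von Neumann's trace inequality $\abs{\tr(AB)}\le\sum_{i}\sigma_{i}(A)\sigma_{i}(B)$ applied to $\abs{\tr(WXY)}=\abs{\tr((WX)Y)}$ for the norm-one $W$ attaining $\norm{XY}_{1}$, together with $\sigma_{i}(WX)\le\sigma_{i}(X)$. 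Applying it with $X=\sqrt{R}$, $Y=U\sqrt{T}$, and using $\sigma_{i}(\sqrt{R})=\sqrt{\nu_{i}(R)}$ and $\sigma_{i}(U\sqrt{T})=\sigma_{i}(\sqrt{T})=\sqrt{\nu_{i}(T)}$, yields $\bignorm{\sqrt{R}\,U\sqrt{T}}_{1}\le\sum_{i}\sqrt{\nu_{i}(R)\nu_{i}(T)}$ and hence $F(R,UTU^{\dagger})\le F(\wt R^{\downarrow},\wt T^{\downarrow})$ for all $U$.

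For achievability I would take spectral decompositions $R=\sum_{i}\nu_{i}(R)\dyad{r_{i}}$ and $T=\sum_{i}\nu_{i}(T)\dyad{t_{i}}$ with eigenvalues listed in decreasing order, and set $U=\sum_{i}\dyad{r_{i}}{t_{i}}$, which is unitary. Then $UTU^{\dagger}=\sum_{i}\nu_{i}(T)\dyad{r_{i}}$ commutes with $R$, so in the common eigenbasis $\{\ket{r_{i}}\}$ the operator $\sqrt{R}\sqrt{UTU^{\dagger}}$ is diagonal with entries $\sqrt{\nu_{i}(R)\nu_{i}(T)}$, giving $F(R,UTU^{\dagger})=\bigl(\sum_{i}\sqrt{\nu_{i}(R)\nu_{i}(T)}\bigr)^{2}=F(\wt R^{\downarrow},\wt T^{\downarrow})$. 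Combining with the upper bound gives $\max_{U}F(R,UTU^{\dagger})=F(\wt R^{\downarrow},\wt T^{\downarrow})$.

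The only real obstacle is the upper bound, which genuinely needs the product inequality $\norm{XY}_{1}\le\langle\sigma(X),\sigma(Y)\rangle$; a naive data-processing/pinching argument does not suffice, since pinching in an eigenbasis reduces the claim to maximizing $\sum_{i}\sqrt{\nu_{i}(R)}\sqrt{d_{i}}$ over $d\prec\nu(T)$, whose maximizer need not be $\nu(T)^{\downarrow}$, so that route only gives a (generally loose) bound. The reduction to a common dimension and the commuting-case evaluation are routine. If a more compact presentation is preferred, one can instead cite the known identity $\max_{U}\bignorm{\sqrt{R}\,U\sqrt{T}}_{1}=\langle\sigma(\sqrt{R}),\sigma(\sqrt{T})\rangle$, from which both the bound and its saturation follow simultaneously.
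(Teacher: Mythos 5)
Your proof is correct and follows essentially the same route as the paper: both reduce the problem to $\max_U \bignorm{\sqrt{R}\,U\sqrt{T}}_1^2$ and then invoke von Neumann's trace inequality, the paper by citing the two-unitary maximization form (Corollary 7.4.1.3 of Horn and Johnson), you by splitting it into the product inequality $\norm{XY}_1 \le \langle\sigma(X),\sigma(Y)\rangle$ for the upper bound plus an explicit eigenbasis-aligning unitary for achievability. The only difference is presentational; your version makes the optimal $U$ explicit, which the paper's cited corollary packages implicitly.
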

\begin{proof}
This proof is a combination of the definition of fidelity and a corollary of von Neumann's trace theorem. A similar identity was established in \cite{Berta-2010a}.
\begin{align*}
	& \max_{U} F(R,UTU^{\ast})\\
	 =& \max_{U} \|\sqrt{R}\sqrt{UTU^{\ast}}\|_{1}^{2}\\
	=& \max_{U} \|\sqrt{R}U\sqrt{T}U^{\ast}\|_{1}^{2}	\\
	=& \max_{U} \left( \max_{W} \left|\Tr[W\sqrt{R}U\sqrt{T}U^{\ast}] \right| \right)^{2} \\
	=& \max_{U,W} \left|\Tr[W\sqrt{R}U\sqrt{T}] \right| ^{2} \\
	=& \left( \sum_{i \in [q]} \sigma_{i}(R)\sigma_{i}(T) \right)^{2} \\
	=& F(\wt{R}^{\downarrow},\wt{T}^{\downarrow}) \ . 
\end{align*}
The first equality is \eqref{eq:fid-def}. The second equality is because $\sqrt{UTU^{\ast}} = \sum_{i} \sqrt{\lambda_{i}(T)} \dyad{\psi_{i}} = U\left(\sum_{i} \sqrt{\lambda_{i}(T)} \dyad{\phi_{i}}\right)U^{\ast} = U\sqrt{T}U^{\ast}$ by defining $\ket{\psi_{i}} := U \ket{\phi_{i}}$ where $\{\ket{\phi_{i}}\}_{i}$ is the eigenbasis of $T$. The third equality is a well-known variational form of the $1$-norm \cite{Watrous-Book}. The fourth is using cyclicity of trace, redefining $U^{\ast}W \to W$, and pulling out the maximization. The fifth is \cite[Corollary 7.4.1.3]{Horn-2012a}, and the final equality is by definition of the operators (which are defined in the same basis).
\end{proof}
We now can use the above lemma to establish the pure state property we are actually interested in. For notational simplicity, we define the following notation:
\begin{align}\label{eq:FLU-notation-definition}
	 F_{\mrm{LU}}(\rho,\sigma) := \max_{U,V} F(\rho,(U \otimes V)(\sigma)) \ , 
\end{align}
which is without loss of generality unitaries as we can just trivially embed the \edit{states $\rho,\sigma$ so that they both are defined on the same local spaces. That is, in general one would optimize over isometries, but by treating the states embedded into the same local spaces already, we can focus on unitaries. To see this, consider $\rho_{AB}$ and $\sigma_{A'B'}$ such that $|A| =: d_{A} > d_{A'} := |A'|$. We can then embed the local space $A'$ into $A$ by taking any bases of $A$ and $A'$ respectively, e.g. $\{\ket{\chi_{i}\}}_{i \in [d_{A}]}$ and $\{\ket{\zeta_{j}}\}_{j \in [d_{A'}]}$, and defining the isometry $W_{A' \to A} := \sum_{j \in [d_{A'}]} \ket{\chi_{i}}\bra{\zeta_{i}}$. We can then consider $W\sigma W^{\ast}$ instead of $\sigma$ directly. The same idea works for embedding $B'$ into $B$ when $|B'| \leq |B|$ and we can always treat $\rho$ as in the higher dimensional space (again, by embedding it). All of this is because it does not affect the eigenvalues of the respective/Schmidt coefficients of the relevant states. Finally, the choice of isometries that we do the embedding for does not matter as the $U,V$ we optimize over can take any choice of the local isometries on the $\sigma$ state to any others.}

\begin{theorem}\label{thm:pure-state-conversions-with-LU}
Let $\ket{\psi},\ket{\phi} \in A \otimes B$ be (possibly unnormalized) vectors with (possibly unnormalized) Schmidt coefficients $r_{1} \geq r_{2} \geq ...$, $t_{1} \geq t_{2} \geq ...$ respectively. Then,
\begin{equation}
 \max_{U,V} F(\ket{\psi},U \otimes V \ket{\phi}) = F(R^{\downarrow},T^{\downarrow}) \ ,
\end{equation}
where $R^{\downarrow} = \sum_{i} r_{i}\dyad{i}$, $T^{\downarrow} = \sum_{i} t_{i}\dyad{i}$. In particular, if we treat these as quantum states, this gives the optimal fidelity of converting $\ket{\phi}$ to $\ket{\psi}$ under local unitaries.
\end{theorem}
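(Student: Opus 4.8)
The plan is to establish $F_{\mathrm{LU}}(\dyad{\psi},\dyad{\phi})=F(R^{\downarrow},T^{\downarrow})$, with $F_{\mathrm{LU}}$ as in \eqref{eq:FLU-notation-definition}, by proving two matching inequalities: achievability by explicitly aligning the Schmidt bases with local unitaries (Proposition~\ref{prop:local-unitaries}), and the converse by discarding system $B$ and invoking Lemma~\ref{lem:opt-fid-of-commuting-systems}. A preliminary reduction is convenient: $F_{\mathrm{LU}}(\dyad{\psi},\dyad{\phi})$ is unchanged if $\ket{\psi}$ is replaced by $(W_{A}\otimes W_{B})\ket{\psi}$, because by isometric invariance of the fidelity (Item~2 of Proposition~\ref{prop:fidel-props}) the factors $W_{A}^{\dagger},W_{B}^{\dagger}$ can be absorbed into the optimization variables $U,V$. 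Hence, by Proposition~\ref{prop:local-unitaries}, I may assume without loss of generality that $\ket{\psi}=\sum_{i}\sqrt{p^{\downarrow}(i)}\,\ket{i}_{A}\ket{i}_{B}$ is already in sorted Schmidt form in the computational basis, so that $\Tr_{B}\dyad{\psi}$ is diagonal with its eigenvalues in decreasing order; I write $R^{\downarrow}$ for this sorted reduced operator, and similarly $T^{\downarrow}$ for $\Tr_{B}\dyad{\phi}$ sorted, which is the object appearing on the right-hand side of the statement.

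For achievability, Proposition~\ref{prop:local-unitaries} also supplies local unitaries $U_{0},V_{0}$ with $(U_{0}\otimes V_{0})\ket{\phi}=\sum_{i}\sqrt{q^{\downarrow}(i)}\,\ket{i}_{A}\ket{i}_{B}$. Evaluating the objective at $(U_{0},V_{0})$, the two pure states share the computational Schmidt basis and have real amplitudes, so by Item~6 (together with Item~5) of Proposition~\ref{prop:fidel-props} their fidelity equals the squared Bhattacharyya coefficient $\bigl(\sum_{i}\sqrt{p^{\downarrow}(i)\,q^{\downarrow}(i)}\bigr)^{2}$, which is exactly $F(R^{\downarrow},T^{\downarrow})$. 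Therefore $F_{\mathrm{LU}}(\dyad{\psi},\dyad{\phi})\ge F(R^{\downarrow},T^{\downarrow})$.

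For the converse I would apply data processing (Item~3 of Proposition~\ref{prop:fidel-props}) under the partial-trace channel $\Tr_{B}$. The essential point is that the local unitary $V$ on $B$ is annihilated by the trace, so $\Tr_{B}\!\left[(U\otimes V)\dyad{\phi}(U\otimes V)^{\dagger}\right]=U(\Tr_{B}\dyad{\phi})U^{\dagger}$, and hence for every $U,V$
\[
F\!\left(\dyad{\psi},(U\otimes V)\dyad{\phi}(U\otimes V)^{\dagger}\right)\le F\!\left(\Tr_{B}\dyad{\psi},\,U(\Tr_{B}\dyad{\phi})U^{\dagger}\right).
\]
Maximizing the right-hand side over $U$ and applying Lemma~\ref{lem:opt-fid-of-commuting-systems} with $R=\Tr_{B}\dyad{\psi}$, $T=\Tr_{B}\dyad{\phi}$ gives $F_{\mathrm{LU}}(\dyad{\psi},\dyad{\phi})\le F(R^{\downarrow},T^{\downarrow})$, matching the lower bound. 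The ``in particular'' clause then follows at once, since an LU channel on $A\otimes B$ is exactly a map $\rho\mapsto(U\otimes V)\rho(U\otimes V)^{\dagger}$; and the unnormalized case requires no change, as $\Tr_{B}$ is a channel regardless, Lemma~\ref{lem:opt-fid-of-commuting-systems} is stated for arbitrary positive operators, and Proposition~\ref{prop:local-unitaries} applies to non-unit vectors with its probability distributions replaced by nonnegative vectors (cf.\ the remark after \eqref{eq:Schmidt-decomp}).

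I do not anticipate a serious obstacle: the real computation is already packaged inside Lemma~\ref{lem:opt-fid-of-commuting-systems} (a von~Neumann/Ky~Fan trace-inequality argument), and the only genuinely new ingredient is the simple but load-bearing observation that a unitary on the discarded system leaves the reduced state untouched, which is what lets the two-unitary optimization collapse onto the one-unitary optimization the lemma solves. Should one wish to bypass Lemma~\ref{lem:opt-fid-of-commuting-systems} entirely, an equivalent route is to write $\ket{\psi}=\opvec(M_{\psi})$ and $\ket{\phi}=\opvec(M_{\phi})$ using \eqref{eq:vec-map-ident} (so that $M_{\psi},M_{\phi}$ have singular values equal to the respective Schmidt coefficients), note that $F\!\left(\dyad{\psi},(U\otimes V)\dyad{\phi}(U\otimes V)^{\dagger}\right)=\bigabs{\Tr\!\left(M_{\psi}^{\dagger}V M_{\phi}U^{\t}\right)}^{2}$, and maximize over the two unitaries directly via \cite[Corollary~7.4.1.3]{Horn-2012a}; this yields the same closed form but is less in keeping with the section's build-up.
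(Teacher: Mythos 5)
Your proposal is correct, and both routes funnel into the same computational core (Lemma~\ref{lem:opt-fid-of-commuting-systems}, i.e.\ the von~Neumann trace inequality), but the reduction from the two-unitary bipartite pure-state problem to the one-unitary single-system problem is done differently. The paper proves the \emph{exact identity} $\max_{U,V}F(\ket{\psi},(U\otimes V)\ket{\phi})=\max_{U}F(R^{\downarrow},UQU^{\dagger})$ in one shot via Uhlmann's theorem: it observes that $(\ol{U}\otimes U)\ket{\phi}=\opvec(\sqrt{UQU^{\dagger}})$ is a purification of $UQU^{\dagger}$, so optimizing the second local unitary $V$ is precisely optimizing over purifications, and Lemma~\ref{lem:Uhl-thm} collapses that inner maximization to the mixed-state fidelity. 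You instead sandwich the quantity between two inequalities: the upper bound by data processing under $\Tr_{B}$ together with the observation that $V$ is annihilated by the trace, and the lower bound by the explicit Schmidt-basis-aligning unitaries from Proposition~\ref{prop:local-unitaries} evaluated with Items~5 and~6 of Proposition~\ref{prop:fidel-props}. Your version is the more standard ``monotonicity plus achievability'' pattern and is arguably more elementary (it never needs to identify which $V$ is optimal); the paper's Uhlmann route buys the equality without a separate achievability step and makes transparent that the optimal $V$ is exactly the purifying-basis alignment. Two small housekeeping points: Item~3 of Proposition~\ref{prop:fidel-props} is stated for normalized states, so for the unnormalized claim you should note (as you implicitly do) that $F(aR,bS)=ab\,F(R,S)$ lets you rescale before invoking it; and your closing remark about bypassing Lemma~\ref{lem:opt-fid-of-commuting-systems} via $F=\bigabs{\Tr(M_{\psi}^{\dagger}VM_{\phi}U^{\t})}^{2}$ and \cite[Corollary~7.4.1.3]{Horn-2012a} is essentially the content of that lemma's own proof unrolled, so it is a legitimate shortcut rather than a new argument.
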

\begin{proof}
Up to local unitaries, $\ket{\psi} = \sum_{i} \sqrt{r_{i}} \ket{i}\ket{i}$. Therefore without loss of generality, that can be taken as our target state by allowing free local unitaries on the seed state. We can take the seed state to be of the form $\ket{\phi} = \sum_{i} \sqrt{t_{i}} \ket{i}\ket{i}$ by the same argument. Then by assumption, we are interested in $\max_{U,V} F(\ket{\psi},(U \otimes V)\ket{\phi})$ with the specified forms. Note 
\begin{align*}
& \Tr_{B}((U \otimes V)\dyad{\phi}(U\otimes V)^{\dagger})\\
=& \sum_{i,i'} \sqrt{t_{i}t_{i'}} U\ket{i}\bra{i'}U^{\dagger} \Tr(V\ket{i}\bra{i'}V^{\dagger}) \\
=& \sum_{i} t_{i} U\dyad{i}U^{\dagger} =: UQU^{\dagger}.
\end{align*}
Now for any unitary $U$ we define the following purification
\begin{align*}
    \ket{w^{|U}} := &\opvec(\sqrt{UQU^{\dagger}}) \\
    =& (\ol{U} \otimes U) \opvec(\sqrt{Q}) =(\ol{U} \otimes U)\ket{\phi} \ ,
\end{align*}
where we have used $\sqrt{UQU^{\dagger}} = U\sqrt{Q}U^{\dagger}$ and the vec map identity \eqref{eq:vec-map-ident}. Now we have 
\begin{equation}\label{eq:fid-pure-to-classical-step-1}
\begin{aligned}
F(R^{\downarrow},UQU^{\dagger}) =& \max_{\ket{w'}} F(\ket{\psi},\ket{w'})  \\
=& \max_{V} F\left(\ket{\psi},(\mbb{1} \otimes V)\ket{w^{|U}}\right) \\
=& \max_{V} F(\psi, (\ol{U} \otimes VU)\ket{\phi})  \ ,
\end{aligned}
\end{equation}
where the first equality is by  Uhlmann's theorem (Lemma \ref{lem:Uhl-thm}), the second is because all purifications of a given operator are unitarily equivalent on the purifying space \cite{Watrous-Book}, so there exists a $V$ such that $(\mbb{1} \otimes V)\ket{w^{|U}} = \ket{w'}$. The final line is just expanding the definition of $\ket{w^{|U}}$.

It follows,
\begin{align*}
    &\max_{W,V} F(\ket{\psi},(W \otimes V)\ket{\phi}) \\
    =&  \max_{\ol{U},V'} F(\ket{\psi}, (\ol{U} \otimes V'U) \ket{\phi}) \\
    = & \max_{\ol{U},V'} F(\ket{\psi}, (\mbb{1} \otimes V)\ket{w^{|U}}) \\
    = & \max_{U} F(R^{\downarrow},UQU^{\dagger}) \\
    = & F(R^{\downarrow},T^{\downarrow}) \ ,
\end{align*} 
where the first equality is because unitaries are closed under multiplication and the optimizations are independent, the second and third are both by \eqref{eq:fid-pure-to-classical-step-1} for clarity, the third is because unitaries are closed under conjugation and then the final equality is by applying Lemma \ref{lem:opt-fid-of-commuting-systems}. This completes the proof.
\end{proof}
This means under local unitaries, it is efficient to compute the optimal fidelity and that in fact the optimal strategy is simply Alice and Bob re-ordering the basis so that the Schmidt coefficients are in the same relative ordering. It also follows from Item 1 of Proposition \ref{prop:fidel-props} that unless all the Schmidt coefficients are equal, the fidelity cannot be one under local unitary strategies.

\subsection{Pure State Conversions under Local Operations and Shared Randomness}
While the previous section is nice in that it finds an efficient way of calculating the optimal conversion strategy under local unitaries, it would be natural to ask if local operations can do better than local unitaries as it is a much more general class of operations. In fact, we can see that it must do better in some cases in a trivial manner. Consider the target state $\ket{\psi}$ and the seed state $\ket{\phi} = \ket{\psi} \otimes \ket{\zeta}$ where $\ket{\zeta}$ is not product. Under local unitaries this transformation isn't possible to arbitrary precision because of $\ket{\zeta}$, but of course in reality the parties could trace out whichever portion(s) of $\ket{\zeta}$ they hold. Thus, we need a theory of transformations under local operations. 

Note that this trivial example we have given would not be resolved by local mixed unitary strategies\edit{, i.e. strategies where each party varies their choice of unitary according to some local randomness}. Indeed, we begin by noting that local mixed unitary strategies cannot ever outperform local unitary strategies.
\begin{corollary}
Let $\ket{\psi}$ be the target state and $\ket{\phi}$ be the seed state and only optimize over Alice and Bob using mixed unitary channels. Then the optimal is the same as in Theorem \ref{thm:pure-state-conversions-with-LU}.
\end{corollary}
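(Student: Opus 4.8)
The plan is to reduce the statement to Theorem~\ref{thm:pure-state-conversions-with-LU} by using that, for a \emph{fixed} pure target state, the fidelity is affine in its second argument, which is Item~4 of Proposition~\ref{prop:fidel-props}. A local mixed-unitary channel acting on $A\otimes B$ has the form $\Lambda(\cdot)=\sum_{\lambda}r(\lambda)\,(U_{\lambda}\otimes V_{\lambda})(\cdot)(U_{\lambda}\otimes V_{\lambda})^{\dagger}$ for some probability vector $r$ and collections of local unitaries $\{U_{\lambda}\},\{V_{\lambda}\}$; if Alice's and Bob's randomness is required to be independent this is just the special case in which $\lambda=(a,b)$ and $r(a,b)=p(a)q(b)$ factors, and since the argument below never uses independence it in fact also covers classically correlated mixtures. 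As in the remark preceding Theorem~\ref{thm:pure-state-conversions-with-LU}, we may assume the seed and target are embedded in common local spaces so that one genuinely optimizes over unitaries rather than isometries.

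The key computation is then immediate. Using Item~4 of Proposition~\ref{prop:fidel-props},
\begin{align*}
F(\dyad{\psi},\Lambda(\dyad{\phi})) &= \bra{\psi}\Lambda(\dyad{\phi})\ket{\psi} \\
&= \sum_{\lambda} r(\lambda)\,\bra{\psi}(U_{\lambda}\otimes V_{\lambda})\dyad{\phi}(U_{\lambda}\otimes V_{\lambda})^{\dagger}\ket{\psi} \\
&= \sum_{\lambda} r(\lambda)\, F\bigl(\dyad{\psi},(U_{\lambda}\otimes V_{\lambda})(\dyad{\phi})\bigr) \\
&\leq \max_{\lambda} F\bigl(\dyad{\psi},(U_{\lambda}\otimes V_{\lambda})(\dyad{\phi})\bigr) \leq F_{\mrm{LU}}(\dyad{\psi},\dyad{\phi}),
\end{align*}
where the final inequality is the definition \eqref{eq:FLU-notation-definition}. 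By Theorem~\ref{thm:pure-state-conversions-with-LU}, $F_{\mrm{LU}}(\dyad{\psi},\dyad{\phi})=F(R^{\downarrow},T^{\downarrow})$, so every local mixed-unitary strategy achieves at most this value. The reverse inequality is trivial: a local unitary channel is a single-outcome local mixed-unitary channel, so the optimum over local mixed-unitary strategies is at least $\max_{U,V}F(\dyad{\psi},(U\otimes V)(\dyad{\phi}))=F_{\mrm{LU}}(\dyad{\psi},\dyad{\phi})$. Combining the two bounds yields equality with the value in Theorem~\ref{thm:pure-state-conversions-with-LU}.

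I do not anticipate a genuine obstacle: the content is entirely the affinity of $\sigma\mapsto F(\dyad{\psi},\sigma)$ together with convexity of the mixed-unitary set, so the only points worth stating with care are (i) that ``local mixed unitary'' is interpreted as an arbitrary (possibly correlated) convex mixture of local unitary channels, which only strengthens the claim, and (ii) the harmless dimension-embedding reduction. This corollary is precisely what makes the passage to general local operations in the next subsection necessary rather than optional.
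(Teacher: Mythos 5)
Your proposal is correct and follows essentially the same route as the paper: both use that for a pure target the fidelity $\sigma\mapsto\bra{\psi}\sigma\ket{\psi}$ is affine (Item 4 of Proposition \ref{prop:fidel-props}), bound the average over unitaries by the maximum, and invoke Theorem \ref{thm:pure-state-conversions-with-LU} plus the trivial reverse inclusion. The only (harmless) difference is that you explicitly allow classically correlated mixtures of local unitaries, which slightly strengthens the statement relative to the paper's product-measure formulation $dU\,dW$.
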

\begin{proof}
Letting $\cE_{U},\cF_{W}$ be local mixed unitary maps,
\begin{align*}
& \max_{\cE_{U},\cF_{W}} F(\psi,(\cE_{U} \otimes \cF_{W})(\phi)) \\
=& \bra{\psi}(\cE_{U} \otimes \cF_{W})(\phi)\ket{\psi} \\
=& \int_{U,W} \bra{\psi}(U \otimes W)(\phi)\ket{\psi} dU \, dW  \\
\leq& \int_{U,W} \max_{U,W} \bra{\psi}(U \otimes W)(\phi)\ket{\psi}  \\
=& \max_{U,W} \bra{\psi} (U \otimes W)(\phi)\ket{\psi} \\ 
=& F(P^{\downarrow},Q^{\downarrow}) \ ,
\end{align*}
where the first equality is by Item 4 of Proposition \ref{prop:fidel-props}, the second is letting the mixed unitary map be for any probability measures $dU$,$dW$ over the unitary group. The inequality is because the inner product is real and so it is lower bounded by the maximum. The second to last equality is by linearity, and the final equality is by Theorem \ref{thm:pure-state-conversions-with-LU}. Noting that a specific choice of local unitaries is a special case of mixed unitary channels completes the proof.
\end{proof}
The above tells us that we must escape the use of unitaries to improve our bounds. Note however that in general the only maps that preserve pure states are isometries, and our results so far have been in terms of pure states, so we need to maintain this structure to build on them. For this reason, the following proof will make use of the isometric representation of quantum channels. 

For notational simplicity, we define the optimal fidelity of conversion under local operations and shared randomness (LOSR) fidelity
$$ F_{\mathrm{LOSR}}(\rho,\sigma) := \max_{\mu,\cE_{\lambda},\cF_{\lambda}} F(\rho,\int (\cE_{\lambda} \otimes \cF_{\lambda})(\sigma) d\mu(\lambda) ) \ , $$
where $\mu$ is a probability measure over an index set for sets of local channels $\{\cE_{\lambda}\}$ and $\{\cF_{\lambda}\}$. Similarly, we can define optimal fidelity of conversion under local operations (LO) as
$$ F_{LO}(\rho,\sigma) := \max_{\cE,\cF} F(\rho,\cE \otimes \cF)(\sigma)) \ . $$
With these defined, we prove the following.

\begin{theorem}\label{thm:pure-state-conversion-via-LO}
Let $\ket{\psi},\ket{\phi} \in A \otimes B$ be (possibly unnormalized) vectors with (possibly unnormalized) Schmidt coefficients $r_{1} \geq r_{2} \geq ...$ and $t_{1} \geq t_{2} \geq ...$ respectively. Then,
\begin{equation}
\begin{aligned}
& F_{LOSR}(\ket{\psi},\ket{\phi}) \\ 
=& F_{LO}(\ket{\psi},\ket{\phi}) \\
=& \max_{P' \in \cP(\Sigma)} F((R \otimes P')^{\downarrow},T^{\downarrow}_{\text{embed}}) \ ,
\end{aligned}
\end{equation} 
where \edit{the finite alphabet $\Sigma$ satisfies} $|\Sigma| \leq \edit{\Big[} \SR(\ket{\phi})\cdot \SR(\ket{\psi})\edit{\Big]}$, $R = \sum_{i} r_{i} \dyad{i}$ and similarly for $\edit{T}_{\text{embed}}$ \edit{which is} the distribution \edit{T} embedded into the joint \edit{probability simplex over the finite alphabet indexing $R$ and $\Sigma$}. In particular, if we treat $\ket{\psi},\ket{\phi}$ as quantum states, this gives the optimal fidelity of converting $\ket{\phi}$ to $\ket{\psi}$ under local operations and shared randomness.
\end{theorem}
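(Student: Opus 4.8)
The plan is to reduce the problem to the local-unitary conversion of Theorem~\ref{thm:pure-state-conversions-with-LU} applied to an \emph{enlarged} target state. Concretely, I will prove
\[ F_{LOSR}(\ket{\psi},\ket{\phi}) = F_{LO}(\ket{\psi},\ket{\phi}) = \sup_{P'} F_{\mrm{LU}}\bigl(\ket{\psi}\otimes\ket{P'}_{A'B'},\,\ket{\phi}\bigr), \]
where $\ket{P'}_{A'B'} := \sum_{x}\sqrt{P'(x)}\ket{x}_{A'}\ket{x}_{B'}$ ranges over bipartite pure states whose Alice-marginal has spectrum $P'$, and then note that, because the value in Theorem~\ref{thm:pure-state-conversions-with-LU} depends only on Schmidt spectra and the Alice-marginal of $\ket{\psi}\otimes\ket{P'}$ has spectrum $R\otimes P'$, each term on the right equals $F((R\otimes P')^{\downarrow},T^{\downarrow}_{\text{embed}})$. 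The equality $F_{LOSR}=F_{LO}$ is immediate: the target is pure, so Item~4 of Proposition~\ref{prop:fidel-props} makes $F(\dyad{\psi},\cdot)$ affine, hence for any probability measure $\mu$ and any families of local channels $F(\psi,\int(\cE_{\lambda}\otimes\cF_{\lambda})(\phi)\,d\mu) = \int F(\psi,(\cE_{\lambda}\otimes\cF_{\lambda})(\phi))\,d\mu \le F_{LO}(\psi,\phi)$, while $F_{LO}\le F_{LOSR}$ holds trivially.

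For ``$F_{LO}\ge\sup_{P'}(\cdots)$'' I fix $P'$, pick local unitaries $U_{AA'},V_{BB'}$ optimal for $F_{\mrm{LU}}(\ket{\psi}\otimes\ket{P'},\ket{\phi})$ (with $\ket{\phi}$ embedded as $\ket{\phi}\otimes\ket{0}_{A'}\ket{0}_{B'}$, legitimate by the discussion after \eqref{eq:FLU-notation-definition}), and have Alice apply $U_{AA'}$ then discard $A'$ and Bob apply $V_{BB'}$ then discard $B'$; this realizes a particular $\cE\otimes\cF$, and since $\Tr_{A'B'}(\dyad{\psi}\otimes\dyad{P'})=\dyad{\psi}$, data processing (Item~3 of Proposition~\ref{prop:fidel-props}) gives $F(\psi,(\cE\otimes\cF)(\phi))\ge F(\ket{\psi}\otimes\ket{P'},(U\otimes V)(\ket{\phi}\otimes\ket{00}))$, and taking suprema finishes this direction. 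The matching bound ``$F_{LO}\le\sup_{P'}(\cdots)$'' is the crux. Given local channels $\cE,\cF$, pass to Stinespring isometries $\hat A\colon A\to A_{\mrm{out}}A'$ and $\hat B\colon B\to B_{\mrm{out}}B'$, so that $(\cE\otimes\cF)(\phi)=\Tr_{A'B'}\dyad{\Theta}$ with $\ket{\Theta}:=(\hat A\otimes\hat B)\ket{\phi}$ \emph{pure}. Every purification of the pure target $\dyad{\psi}$ with purifying system $A'B'$ has the product form $\ket{\psi}\otimes\ket{\chi}$ for a unit vector $\ket{\chi}\in A'\otimes B'$, so Uhlmann's theorem (Lemma~\ref{lem:Uhl-thm}) yields
\[ F\bigl(\psi,\Tr_{A'B'}\dyad{\Theta}\bigr)=\max_{\ket{\chi}}\bigl|(\bra{\psi}\otimes\bra{\chi})(\hat A\otimes\hat B)\ket{\phi}\bigr|^{2}=\max_{\ket{\chi}}F\bigl(\ket{\psi}\otimes\ket{\chi},(\hat A\otimes\hat B)\ket{\phi}\bigr). \]
Since $\hat A\otimes\hat B$ is a \emph{local} isometry, the embedding argument after \eqref{eq:FLU-notation-definition} absorbs it into the local-unitary optimization, so this is $\le\max_{\ket{\chi}}F_{\mrm{LU}}(\ket{\psi}\otimes\ket{\chi},\ket{\phi})$; applying Theorem~\ref{thm:pure-state-conversions-with-LU} to the optimal $\ket{\chi}$, with $P'$ the spectrum of $\Tr_{B'}\dyad{\chi}$, this equals $F((R\otimes P')^{\downarrow},T^{\downarrow}_{\text{embed}})$, and maximizing over $\cE,\cF$ closes the argument.

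It remains to turn the supremum into a maximum over $P'\in\cP(\Sigma)$ with $|\Sigma|$ finite. For this I use that $F((R\otimes P')^{\downarrow},T^{\downarrow}_{\text{embed}})=\mrm{BC}((R\otimes P')^{\downarrow},T^{\downarrow})^{2}$ by the Schmidt/Bhattacharyya correspondence of the preceding subsection, and that $T^{\downarrow}$ has only $\SR(\ket{\phi})$ nonzero entries, so only the largest $\SR(\ket{\phi})$ entries of $R\otimes P'$ enter the sum; at most $\SR(\ket{\phi})$ distinct labels appear among those entries, and restricting $P'$ to them and renormalizing does not decrease the value, so $P'$ may be taken on an alphabet of size at most $\SR(\ket{\phi})\SR(\ket{\psi})$, after which compactness of $\cP(\Sigma)$ upgrades $\sup$ to $\max$. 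I expect the main obstacle to be the upper-bound step: recognizing that the Stinespring environments $A',B'$ \emph{are} the ``leftover'' bipartite system, that the Uhlmann-optimal purification of the pure target over those environments is a product $\ket{\psi}\otimes\ket{\chi}$, and that although $\ket{\chi}$ need not be a product across $A'\!:\!B'$, it is only its Alice-side spectrum $P'$ that survives into the final formula via Theorem~\ref{thm:pure-state-conversions-with-LU}; nailing the dimension bound cleanly is the other delicate point.
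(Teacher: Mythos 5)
Your proof is correct, and its core follows the same route as the paper's: the reduction $F_{LOSR}=F_{LO}$ via affineness of $F(\psi,\cdot)$ for a pure target, then the Stinespring dilation of the local channels, the observation that every purification of the pure target over the environments is a product $\ket{\psi}\otimes\ket{\chi}$, Uhlmann's theorem, and finally Theorem \ref{thm:pure-state-conversions-with-LU} applied with $P'$ the spectrum of the Alice-side environment marginal of $\ket{\chi}$. Where you genuinely diverge is the finiteness of $\Sigma$. The paper obtains $|\Sigma|\le \SR(\ket{\phi})\cdot\SR(\ket{\psi})$ at the channel level: it compresses the input and output spaces of $\cE,\cF$ to the Schmidt supports of seed and target (projecting, then lifting back to trace-preserving maps), which caps the Choi rank and hence the environment dimension before Stinespring is ever invoked. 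You instead bound the alphabet at the level of the final classical optimization: since $T^{\downarrow}$ has only $\SR(\ket{\phi})$ nonzero entries, only the top $\SR(\ket{\phi})$ entries of $(R\otimes P')^{\downarrow}$ contribute to the Bhattacharyya sum, these involve at most $\SR(\ket{\phi})$ distinct labels of $P'$, and restricting $P'$ to those labels and renormalizing only scales the relevant products up (the sorted pairing being optimal among injective pairings then guarantees the value does not decrease). This is a clean argument, and it actually yields the sharper bound $|\Sigma|\le\SR(\ket{\phi})$, which of course implies the stated one; the paper's channel-compression argument buys something else, namely an operational statement about the optimal local maps themselves (their input/output dimensions), which is reused implicitly elsewhere. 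Your explicit data-processing construction for the achievability direction (append $\ket{P'}$, rotate, discard) is also a welcome addition, as the paper leaves that direction somewhat implicit in its chain of equalities.
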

\begin{proof}
The first equivalence follows similarly to the mixed unitary case. Clearly the class of LOSR strategies is more general than the class of LO strategies, so we just need to show LOSR is only as strong as LO here.
\begin{align*}
    F_{\mrm{LOSR}}(\phi,\psi)
    =&  F\left (\psi,\int (\cE_{\lambda} \otimes \cF_{\lambda})(\phi) d\mu(\lambda) \right) \\
    =& \int  \bra{\psi} (\cE_{\lambda} \otimes \cF_{\lambda})(\phi) \ket{\psi} d\mu(\lambda) \\
    \leq & \int \max_{\cE,\cF} \left[ \bra{\psi} (\cE \otimes \cF)(\phi)\ket{\psi} \right]  d\mu(\lambda) \\
    =& \max_{\cE,\cF} \bra{\psi}\cE \otimes \cF)(\phi)\ket{\psi} \\
    =& F_{LO}(\phi,\psi) \ ,
\end{align*}
where the first equality is by definition and denoting the optimizers by $\mu,\{\cE_{\lambda}\},\{\cF_{\lambda}\}$, the second is by linearity of the Lebesgue integral, the inequality is because $\bra{\psi} (\cE \otimes \cF)(\phi) \ket{\psi}$ is a real number for any choice of local channels, the third equality is because $\mu$ is a probability measure that is now independent of the argument of the integral, and the final equality is by definition. This proves the reduction of LOSR to LO if the target state is pure.

Next, we bound the dimension of $\Sigma$. We want to consider $\max_{\cE,\cF} F(\psi,(\cE \otimes \cF)(\phi))$. Without loss of generality, we assume the local spaces are `compressed' such that $d_{in} := \SR(\ket{\phi})$ so that $\cE,\cF$ both act on $\Lin(\mbb{C}^{d_{in}})$. We now show that without loss of generality we may restrict the output dimension of $\cE,\cF$ to be $d_{out} := \SR(\ket{\psi})$. This is just because we can project onto the support of the marginal of $\ket{\psi}$ on both local spaces, so we can restrict the local maps to this space. Formally, this can be seen as follows. Consider arbitrary $\cE,\cF$ and consider the target state (up to LU) $\ket{\psi} = \sum_{i} \sqrt{r_{i}} \ket{i}\ket{i}$. Define $\Pi_{P} := \sum_{i: r_{i} > 0}$, i.e. the projector onto the support of $\Tr_{B}(\psi) = \Tr_{A}(\psi)$, where the equality is up to the change in space. Note $\mrm{rank}(\Pi_{P}) = \mrm{Schmidt}(\psi)$. By construction, $(\Pi_{P} \otimes \Pi_{P}) \ket{\psi} = \ket{\psi}$.  Therefore,
\begin{align*}
 & F(\psi, (\cE \otimes \cF)(\phi)) \\
 =& \bra{\psi}(\cE \otimes \cF)(\phi)\ket{\psi} \\
 =& \Tr[\dyad{\psi} (\cE \otimes \cF)(\phi)] \\
 =& \Tr[\psi \Pi_{P}^{\otimes 2}(\cE \otimes \cF)(\phi)\Pi_{P}^{\otimes 2}] \ ,
\end{align*}
where in the first equality we have used Item 4 of Proposition \ref{prop:fidel-props} and the other two use cyclicity of trace along with invariance of $\psi$ under the projector.
Now we can expand, 
\begin{align*}
    & \Pi_{P}^{\otimes 2}(\cE \otimes \cF)(\phi)\Pi_{P}^{\otimes 2} \\
    =& \sum_{k,l} \Pi_{P}A_{k} \otimes \Pi_{P}B_{k} \phi A_{k}^{\dagger}\Pi_{P} \otimes B_{l}^{\dagger}\Pi_{P} \\
    \equiv& (\cE_{\Pi} \otimes \cF_{\Pi})(\psi) \ ,
\end{align*}
where $\{A_{k}\},\{B_{l}\}$ are the Kraus operators of $\cE,\cF$ respectively and $\cE_{\Pi},\cF_{\Pi}$ are CPTNI maps defined by $\{\Pi_{P} A_{k}\},\{\Pi_{P} B_{l}\}$ respectively. Note this equivalence holds as $(\Pi A_{k})^{\dagger} = A_{k}^{\dagger}\Pi_{P}$ since $\Pi_{P}^{\dagger} = \Pi_{P}$ so it is CP and it is TNI because
\begin{align*}
    \sum_{k} (\Pi_{P} A_{k})^{\dagger}(\Pi_{P} A_{k}) =& \sum_{k} A_{k}^{\dagger}\Pi_{P} A_{k}  \\
    \leq& \sum_{k} A_{k}^{\dagger} \mbb{1} A_{k} = \mbb{1} \ , 
\end{align*}
where we used $\Pi_{P}^{2} = \Pi_{P}$ in the first equality, $\Pi_{P} \leq \mbb{1}$ and that $\cE$ is CP in the inequality, and that $\cE$ is TP in the last inequality. An identical argument holds for $\cF_{P}$. This proves the optimizer is achieved with CPTNI maps  $\Trans(\Lin(\mbb{C}^{d_{in}}),\Lin(\mbb{C}^{d_{out}}))$. Finally, we can lift $\cE_{P},\cF_{P}$ to being CPTP, denoted $\widehat{\cE},\widehat{\cF} \in \Trans(\Lin(\mbb{C}^{d_{in}}),\Lin(\mbb{C}^{d_{out}}))$ by adding one Kraus operator, e.g.\ for $\cE_{P}$ add the Kraus operator $Z \in \Lin(\mbb{C}^{d_{in}},\mbb{C}^{d_{out}})$ where $Z^{\dagger}Z = (\mbb{1} - \sum_{k} A_{k}^{\dagger}\Pi A_{k}) \geq 0$ which always exists by definition of the space of positive semidefinite operators. By linearity, 
\begin{align*}
    F(\psi,(\cE \otimes \cF)(\phi)) =& \Tr[\psi(\cE_{\Pi} \otimes \cF_{\Pi})(\phi)] \\
    \leq& \Tr[\psi(\widehat{\cE} \otimes \widehat{\cF})(\phi)]  \ .
\end{align*}
Therefore, without loss of generality, the optimal channels are $\cE,\cF \in \Channel(\mbb{C}^{d_{in}},\mbb{C}^{d_{out}})$. Note this means that $\mrm{Rank}(J_{\cE}) \leq d_{in} d_{out}$ and likewise for $J_{\cF}$.

We now derive the equation using the isometric representation of the channel \cite{Watrous-Book}.
\begin{align*}
& \max_{\cE,\cF} F(\psi,(\cE \otimes \cF)(\phi)) \\
=& \langle \psi, (\cE \otimes \cF)(\phi) \rangle \\
=& \max_{V_{1},V_{2}, \ket{\zeta}} \left|\bra{\psi}\bra{\zeta} (V_{1} \otimes V_{2})\ket{\phi} \right|^{2} \\
=& \max_{U_{1},U_{2},\ket{\zeta}} \left| \bra{\psi}\bra{\zeta} (U_{1} \otimes U_{2})\ket{\phi}\ket{0}_{E_{1}}\ket{0}_{E_{2}} \right|^{2} \\
=& \max_{U_{1}',U_{2}'
,\ket{\zeta_{p'}}} \left|\bra{\psi}\bra{\zeta_{p'}}(U_{1}' \otimes U_{2}')\ket{\phi}\ket{0}_{E_{1}}\ket{0}_{E_{2}}\right| \\
=& \max_{P'} F( (R \otimes P')^{\downarrow}, T^{\downarrow}_{\mrm{embed}}) \ ,
\end{align*}
where the second equality is because there exists an isometric representation of each channel which means $(V_{1} \otimes V_{2})\ket{\phi}$ is a pure state, so we can apply Uhlmman's theorem to find a purification of $\ket{\psi}$ that saturates the bound, but as $\ket{\psi}$ is already pure, any purification will be a product state with a unit vector. The third line is because we can always convert an isometry into a unitary on the appropriately large space. The fourth line means that $\zeta_{p'} = \sum_{i'} \sqrt{p'(i)} \ket{i}\ket{i}$, which can always be achieved by local unitaries on the $E_{1}$ and $E_{2}$ spaces, which result on new unitaries on the other side but the same maximum. The final equality is just using Theorem \ref{thm:pure-state-conversions-with-LU} and we write $T_{\mrm{embed}}$ to stress it is defined over the whole alphabet. Lastly, as we established bounds on the ranks of the local maps Choi matrices, we have bounds $E_{1},E_{2} \leq d_{in}d_{out}$, which justifies the maximum and tells us how large of a system we have to consider in the statement of the theorem.
\end{proof}

It is useful to see how this result works. It in effect shows the following equivalence of conversion distance when measured under fidelity \edit{
\begin{equation}
\begin{aligned}
    &\mrm{F}\left\{\ket{\phi} \xrightarrow[LO]{} \ket{\psi}]\right\} \\
    & \hspace{1cm} = \,
    \max_{\ket{\zeta}} \mrm{F}\left\{\ket{\phi} \xrightarrow[LU]{} \ket{\psi} \otimes \ket{\zeta}  \right\} \ ,
\end{aligned}
\end{equation}
where here $\mrm{F}\{a \to_{\cO} b \}$ denotes the optimal conversion between $a$ and $b$ using maps included in set $\cO$ according to fidelity. (See \cite{Gour-2024a} for a discussion of conversion distance in terms of trace norm for general resource theories.)} This statement can be viewed both by proof and via intuition as a special case of the isometric representation of a channel.
Moreover, it is easy to see in this form how it handles our motivating example. Indeed, if the target state is $\ket{\psi}$ and the seed state is $\ket{\psi} \otimes \ket{\zeta}$, then clearly the maximizer is chosen by the ancillary state being $\ket{\zeta}$ and the local unitaries being trivial. 

\begin{example}[Tightened Error Bounds for Zero Communication]\label{ex:tightened-error}
We now recall Example \ref{ex:necessity-of-comm}, which showed Proposition \ref{prop:Hayden-Winter} could not show that any seed state $\ket{\phi}$ would require communication to be mapped to $\ket{\psi} = \sqrt{0.54}\ket{00} + \sqrt{0.02}\ket{11} + \sqrt{0.44}\ket{22}$. Here we calculate the error if the seed is \edit{a qutrit} maximally entangled state\edit{, which we note is an example of the task of dilution.} By Theorem \ref{thm:pure-state-conversion-via-LO},
\begin{align*}
	F_{\mrm{LOSR}}(\ket{\psi},\ket{\Phi^{+}_{3}}) &= \max_{P' \in \cP([9])} F((P \otimes P')^{\downarrow},\frac{1}{3}\mbb{1}_{\mbb{C}^{3}}) \\
	&= \frac{1}{3} \max_{P' \in \cP^{\downarrow}([9])} F((P \otimes P')^{\downarrow},\mbb{1}_{\mbb{C}^{3}}) \ ,
\end{align*}
where we have used a slight abuse of notation as $\mbb{1}_{\mbb{C}^{3}}$ is embedded in a larger space. We note $P'$ can be ordered as it won't change $(P \otimes P')^{\downarrow}$. \edit{By the definition of fidelity \eqref{eq:fid-def},}
\begin{align*}
& F((P \otimes P')^{\downarrow},\mbb{1}_{\mbb{C}^{3}}) \\
=& \edit{\left( \max_{p' \in \cP^{\downarrow}([9])} \sqrt{0.54p'(1)} + \sqrt{0.44p'(1)} + \sqrt{0.54p'(2)} \right)^{2}}  \ ,
\end{align*}
where we have used $p\edit{'}(1) \geq p\edit{'}(2)$ and $0.54>0.44$, so these must be the largest 3 elements without loss of generality. As \edit{we are considering} a maximization, it can only be increased by letting $p\edit{'}(2) = 1-p\edit{'}(1)$. \edit{This allows us to parameterize in terms of $p' \in [0,1]$. Taking the derivative with respect to $p'$ and setting equal to zero, we find ${p'}^{\star} \approx 0.6014$}. Plugging this back in, we get 
$$ F_{\mrm{LOSR}}(\ket{\psi},\ket{\Phi^{+}_{3}}) \edit{\approx \frac{1}{3}(1.59018)^{2} \approx 0.8429} \ . $$
Thus, not only would communication be necessary, but without any, the fidelity can be far from one.
\end{example}

\subsection{Relation between LO and LU Strategies}
The natural question given the previous theorems is if we can better understand the relationship between LO and LU strategies. We first show that LU and LO strategies are equivalent when either the target or the seed state is a two qubit state. 

\subsubsection{LU and LO Equivalence for Two-Qubit Seed or Target State}

\begin{proposition}
Consider (possibly unnormalized) entangled two qubit seed state $\ket{\phi} \in \mbb{C}^{2} \otimes \mbb{C}^{2}$. Let the (possibly unnormalized) target entangled state be $\ket{\psi} \in \mbb{C}^{d} \otimes \mbb{C}^{d'}$. Then the optimal non-communicative strategy is the local unitary strategy.
\end{proposition}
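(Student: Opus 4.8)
The plan is to invoke Theorem~\ref{thm:pure-state-conversion-via-LO} to replace $F_{\mrm{LO}}$ by a finite optimization over an auxiliary probability vector, exploit the fact that a two-qubit entangled seed has only two nonzero (squared) Schmidt coefficients to collapse that optimization onto two scalar variables, and then show the optimum is attained at the point corresponding to the local-unitary strategy. Concretely, write the (squared) Schmidt coefficients of the two-qubit entangled seed as $t_{1}\geq t_{2}>0$ and those of the target as $r_{1}\geq r_{2}\geq\cdots$. By Theorem~\ref{thm:pure-state-conversion-via-LO}, $F_{\mrm{LOSR}}(\ket{\psi},\ket{\phi})=F_{\mrm{LO}}(\ket{\psi},\ket{\phi})=\max_{P'}F\big((R\otimes P')^{\downarrow},T^{\downarrow}_{\mrm{embed}}\big)$ with $R=\sum_{i}r_{i}\dyad{i}$ and $T_{\mrm{embed}}=t_{1}\dyad{1}+t_{2}\dyad{2}$ (suitably embedded). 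Since $T^{\downarrow}_{\mrm{embed}}$ is supported on two coordinates, Item~5 of Proposition~\ref{prop:fidel-props} gives $F\big((R\otimes P')^{\downarrow},T^{\downarrow}_{\mrm{embed}}\big)=\big(\sqrt{\mu_{1}t_{1}}+\sqrt{\mu_{2}t_{2}}\,\big)^{2}$, where $\mu_{1}\geq\mu_{2}$ denote the two largest eigenvalues of $R\otimes P'$; taking $P'$ a point mass recovers $\mu_{1}=r_{1},\mu_{2}=r_{2}$, which by Theorem~\ref{thm:pure-state-conversions-with-LU} is precisely $F_{\mrm{LU}}(\ket{\psi},\ket{\phi})$. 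Thus it suffices to prove $\sqrt{\mu_{1}t_{1}}+\sqrt{\mu_{2}t_{2}}\leq\sqrt{r_{1}t_{1}}+\sqrt{r_{2}t_{2}}$ for every admissible $P'$.

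Next I would pin down $\mu_{1},\mu_{2}$ explicitly. If $p'_{1}\geq p'_{2}\geq\cdots$ are the sorted entries of $P'$, then the largest eigenvalue of $R\otimes P'$ is $\mu_{1}=r_{1}p'_{1}$ and the second largest is $\mu_{2}=\max\{r_{1}p'_{2},\,r_{2}p'_{1}\}$. A short two-case argument then yields $\mu_{1}\leq r_{1}$ and $\mu_{1}+\mu_{2}\leq r_{1}+r_{2}$: when $\mu_{2}=r_{1}p'_{2}$ we get $\mu_{1}+\mu_{2}=r_{1}(p'_{1}+p'_{2})\leq r_{1}$, and when $\mu_{2}=r_{2}p'_{1}$ we get $\mu_{1}+\mu_{2}=(r_{1}+r_{2})p'_{1}\leq r_{1}+r_{2}$.

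The hard part is the remaining scalar estimate: showing that $g(x,y):=\sqrt{x t_{1}}+\sqrt{y t_{2}}$ obeys $g(\mu_{1},\mu_{2})\leq g(r_{1},r_{2})$ given $\mu_{1}\leq r_{1}$, $\mu_{1}+\mu_{2}\leq r_{1}+r_{2}$, $\mu_{1}\geq\mu_{2}$, together with the orderings $t_{1}\geq t_{2}$ and $r_{1}\geq r_{2}$. Because $g$ is concave and increasing in each argument, its maximum over the feasible region lies on the boundary, but certifying that this boundary maximum sits at $(r_{1},r_{2})$ genuinely requires the interplay of the two orderings and not merely the ``majorization-type'' dominance of the previous step. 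I would attempt it by substituting $\mu_{1}=r_{1}p'_{1}$ and $\mu_{2}=\max\{r_{1}p'_{2},r_{2}p'_{1}\}$ directly into $g$, and, using $p'_{1}\leq1$ and $p'_{1}+p'_{2}\leq1$ in each of the two cases, reducing to a one-parameter inequality in which the weight imbalance $t_{1}\geq t_{2}$ is exactly what makes spreading the target's Schmidt weight unprofitable; one should also check that the relaxed constraint set used above is the exact image of the admissible $P'$, or else carry the explicit $\mu_{1},\mu_{2}$ expressions through the whole argument. I expect this scalar optimization to be where essentially all the content lies.

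Finally, the two-qubit \emph{target} case should be handled symmetrically: when $\ket{\psi}$ is two-qubit, $R^{\downarrow}$ has at most two nonzero entries, so only the two largest eigenvalues of $R\otimes P'$ can contribute to the Bhattacharyya sum $F((R\otimes P')^{\downarrow},T^{\downarrow})$, and one runs the analogous scalar estimate with the roles of $r$ and $t$ interchanged to conclude that the trivial ancilla—hence the local-unitary strategy—is optimal.
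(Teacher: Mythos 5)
Your reduction is exactly the paper's: invoke Theorem~\ref{thm:pure-state-conversion-via-LO}, observe that the two-qubit seed contributes only two nonzero entries to $T^{\downarrow}_{\mrm{embed}}$ so the Bhattacharyya sum has two terms, and identify the two largest entries of $R\otimes P'$ as $\mu_{1}=r_{1}p'_{1}$ and $\mu_{2}=\max\{r_{1}p'_{2},r_{2}p'_{1}\}$. All of that is correct and matches the paper's setup line for line. The problem is that you stop exactly where the content begins: the scalar inequality $\sqrt{\mu_{1}t_{1}}+\sqrt{\mu_{2}t_{2}}\leq\sqrt{r_{1}t_{1}}+\sqrt{r_{2}t_{2}}$ is announced as ``the hard part'' but never established, and the relaxation you propose as a first attack ($\mu_{1}\leq r_{1}$ and $\mu_{1}+\mu_{2}\leq r_{1}+r_{2}$) is demonstrably too weak: $r=(0.9,0.1)$, $t=(0.5,0.5)$, $\mu=(0.5,0.5)$ satisfies both constraints and $\mu_1 \geq \mu_2$, yet $\sqrt{0.25}+\sqrt{0.25}=1>\sqrt{0.45}+\sqrt{0.05}\approx 0.894$. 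So as written this is a plan, not a proof.

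More seriously, the inequality you would need cannot be proven even carrying the exact expressions for $\mu_{1},\mu_{2}$, because the proposition as stated is false. Take the seed $\ket{\phi}=\tfrac{1}{\sqrt{2}}(\ket{00}+\ket{11})$, so $t=(0.5,0.5)$, and the target $\ket{\psi}=\sqrt{0.99}\ket{00}+\sqrt{0.01}\ket{11}$, so $r=(0.99,0.01)$; both are entangled two-qubit states, hence admissible. Choosing $P'=(0.5,0.5)$ gives $\mu_{1}=\mu_{2}=0.495$ and $\bigl(\sqrt{\mu_{1}t_{1}}+\sqrt{\mu_{2}t_{2}}\bigr)^{2}=0.99$, whereas the LU value is $\bigl(\sqrt{0.495}+\sqrt{0.005}\bigr)^{2}\approx 0.60$. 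Physically this LO strategy is just ``discard the seed and locally prepare $\ket{00}$,'' which already has fidelity $r_{1}=0.99$ with the near-product target and beats every local-unitary strategy. The paper's own proof reaches the opposite conclusion only because of a sign error: in the case $\mu_{2}=r_{1}p'_{2}$ it differentiates $g(p')$ and records the derivative of $\sqrt{1-q}\sqrt{p(1)(1-p')}$ as $+\sqrt{p(1)(1-q)}/(2\sqrt{1-p'})$ when it should be negative; with the correct sign $g$ has an interior maximum at $p'=q$ where $g(q)=\sqrt{p(1)}-\sqrt{qp(1)}-\sqrt{(1-q)p(2)}$, which is strictly positive whenever $p(2)/p(1)$ is sufficiently small. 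So the gap in your write-up is real, but it is not one you could have closed; the statement needs an extra hypothesis (or the roles of seed and target reversed, where the analogous argument does behave as expected) before any proof can go through.
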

\begin{proof}
First, we point out that it suffices to consider normalized distributions. This is because if we have unnormalized vectors, then, using the definition of fidelity,
$$ F((R\otimes P')^{\downarrow},T^{\downarrow}) = \Tr[R]\Tr[T]F((P\otimes P')^{\downarrow},Q^{\downarrow}) \ , $$
where $P := \Tr[R]^{-1}R$, $Q:= \Tr[T]^{-1}T$. It follows since these scaling factors will appear for both the LU and LO case (and thus cancel when comparing the values), the normalized case is sufficient.

Without loss of generality, $q^{\downarrow} = (q,1-q)$ where $q \geq 1/2$ and $p^{\downarrow} = (p(1),p(2),...)$. Then the optimal local unitary strategy is $\sqrt{q p(1)} + \sqrt{(1-q)p(2)}$. For any $P'$ we can write $(p')^{\downarrow} = (p'(1),p'(2),...)$. The optimal CPTP strategy (up to a square) is of the form 
$$ \sqrt{qp(1) p'(1)} + \sqrt{(1-q) \max\{p(1)p'(2),p(2)p'(1)\}} \ . $$
These values can only increase by assuming $p'$ has two outcomes, so let us assume so without loss of generality and parameterize the distribution by $p' \in [1/2,1]$ to obtain 
$$\sqrt{qp(1)p'} + \sqrt{(1-q)\max\{p(1)(1-p'),p(2)p'\}} \ . $$
Moreover note $p(2)p' < p(2)$ unless $p' = 1$, which is equivalent to the LU strategy, so the second entry in the maximization would be lower than the LU setting. Therefore, we focus on the remaining case. We are specifically interested in when the following strict inequality holds:
\begin{align*}
& \sqrt{qp(1)p'} + \sqrt{(1-q)p(1)(1-p')} \\
& \hspace{2cm} > \sqrt{qp(1)} + \sqrt{(1-q)p(2)} \\
\Leftrightarrow \, & g(p') := \sqrt{qp(1)}(\sqrt{p'}-1) \\
& \hspace{1.5cm} + \sqrt{1-q}(\sqrt{p(1)(1-p')} \\
& \hspace{4cm} - \sqrt{p(2)}) > 0 \ .
\end{align*}
Then $\frac{d}{dp'}g(p') = \frac{\sqrt{qp(1)}}{2\sqrt{p'}} + \frac{\sqrt{p(1)(1-q)}}{2\sqrt{1-p'}}$. It follows,
\begin{align*}
    & \frac{\sqrt{qp(1)}\sqrt{1-p'}}{2\sqrt{p'}\sqrt{1-p'}} + \frac{\sqrt{p'}\sqrt{p(1)(1-q)}}{2\sqrt{1-p'}\sqrt{p'}} \geq 0 \\
    \Leftrightarrow & \sqrt{qp(1)}\sqrt{1-p'} + \sqrt{p'}\sqrt{p(1)(1-q)} \geq 0 \\
    \Leftrightarrow & \sqrt{q}\sqrt{1-p'} + \sqrt{p'}\sqrt{(1-q)} \geq 0 \\
    \Leftrightarrow & \sqrt{F}(Q^{\downarrow},P'^{\downarrow}) \geq 0 \ ,
\end{align*}
where the first line is multiplying to get identical denominators, the second line is multiplying by the denominator, the third is dividing out $p(1)$, and the final is by the definition of square root fidelity. Note the final inequality will always hold strictly unless $q \in \{0,1\}$, i.e. the state is a product state, by Item 1 of Proposition \ref{prop:fidel-props}. If $q \in \{0,1\}$, then the state is a product state which would contradict that we assume the state is entangled. Therefore, in our setting, $g(p')$ only increases over its interval, $p' \in [0,1]$. Thus, the optimal choice of $p'$ is $p' = 1$, but in this case the value is $\sqrt{qp(1)} \leq \sqrt{qp(1)} + \sqrt{(1-q)p(2)}$, i.e.\ the optimal choice is lower bounding the optimal local unitary strategy. It follows this is never optimal. This completes the proof.
\end{proof}

\subsubsection{LU and LO Inequivalence for States with Schmidt Rank Greater than Two}
If there is equivalence for two qubit seed or target states, it is natural to ask if this property persists. One might expect that this is a special property of qubit systems as are found throughout quantum information science results. Indeed, generally this property does not hold, which we will prove via example.
\begin{theorem}
For seed and target state with Schmidt rank $\geq 3$, the optimal LO strategy may be better than the optimal LU strategy.
\end{theorem}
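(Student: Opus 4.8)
The plan is to prove the statement by exhibiting a single explicit seed/target pair: the theorem only asserts the \emph{existence} of states (each of Schmidt rank at least three) for which the optimal local-operations strategy strictly beats the optimal local-unitary one, so one counterexample to equivalence suffices. The approach leans entirely on the fact that both fidelities have already been reduced to optimizations over probability vectors via the Schmidt-coefficient/simplex correspondence. By Theorem~\ref{thm:pure-state-conversions-with-LU}, $F_{\mathrm{LU}}(\ket{\psi},\ket{\phi})$ is the squared Bhattacharyya coefficient $\bigl(\sum_i \sqrt{r^{\downarrow}(i)\,t^{\downarrow}(i)}\bigr)^2$ of the ordered Schmidt distributions of target and seed, while by Theorem~\ref{thm:pure-state-conversion-via-LO}, $F_{\mathrm{LO}}(\ket{\psi},\ket{\phi}) = \max_{P'} F\bigl((R\otimes P')^{\downarrow},\,T^{\downarrow}_{\mathrm{embed}}\bigr)$. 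So I would just pick a convenient pair and compare the two numbers.

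The gap comes from the second formula: a non-trivial ancilla $P'$ refines the target's Schmidt spectrum, and a refined, more spread-out vector $(R\otimes P')^{\downarrow}$ can overlap the seed's ordered spectrum $T^{\downarrow}$ better than $R^{\downarrow}$ alone. A clean instance takes the seed to be a maximally entangled qutrit $\ket{\phi}=\ket{\Phi^{+}_{3}}$ (so $t$ is uniform) and the target $\ket{\psi}$ a rank-$3$ state with lopsided ordered squared Schmidt coefficients, e.g. $(0.54,\,0.44,\,0.02)$---precisely the pair analyzed in Example~\ref{ex:tightened-error} (a rank-$4$ maximally entangled seed mapped to $\ket{\Phi^{+}_{3}}$ works equally well as a fully self-contained alternative). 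Concretely I would: (i) instantiate this pair and note both states have Schmidt rank $3\geq 3$; (ii) compute $F_{\mathrm{LU}}$ directly from Theorem~\ref{thm:pure-state-conversions-with-LU} as $\bigl(\sqrt{0.54/3}+\sqrt{0.44/3}+\sqrt{0.02/3}\bigr)^2\approx 0.790$; (iii) invoke Example~\ref{ex:tightened-error}, where a two-outcome ancilla $P'$ whose weight satisfies a one-variable stationarity condition gives $F_{\mathrm{LOSR}}=F_{\mathrm{LO}}\approx 0.843$; (iv) conclude $F_{\mathrm{LO}} > F_{\mathrm{LU}}$, so the local-unitary strategy is suboptimal for this pair.

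The only genuine content is step~(iii): one must confirm that the LOSR program strictly exceeds---rather than merely equals---the LU value. This reduces to ordering the product vector $R\otimes P'$ correctly (identifying the largest entries using $0.54>0.44$ and $p'(1)\geq p'(2)$) and maximizing a concave one-variable function, exactly the bookkeeping carried out in Example~\ref{ex:tightened-error}. The hard part is therefore mild, and it is made even softer by the observation that we do not need the optimal $P'$: any single $P'$ strictly improving on the point mass $\delta_1$ (which recovers precisely $F_{\mathrm{LU}}$) already proves the claim. Everything else is a direct substitution into the two theorems already established.
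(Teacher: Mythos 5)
Your proposal is correct and follows essentially the same route as the paper: both prove the claim by exhibiting one explicit Schmidt-rank-3 pair, computing $F_{\mathrm{LU}}$ from Theorem~\ref{thm:pure-state-conversions-with-LU} and then producing a single nontrivial ancilla $P'$ whose value in the Theorem~\ref{thm:pure-state-conversion-via-LO} program strictly exceeds it (the paper uses the ordered distributions $(0.85,0.08,0.07)$ against $(0.45,0.45,0.1)$ with $P'=(0.55,0.28,0.17)$, giving $>0.82$ versus $<0.796$; your pair from Example~\ref{ex:tightened-error} gives $\approx 0.843$ versus $\approx 0.790$ and works just as well, and your observation that any $P'$ beating the point mass suffices is exactly the right logical reduction). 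The only thing you omit is the paper's closing remark that a small perturbation plus continuity of the fidelity extends the rank-3 example to seed and target states of any Schmidt rank strictly greater than 3, which you would want to add if the theorem is read as covering every rank $\geq 3$ rather than merely asserting the existence of one such pair.
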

\begin{proof}
We construct an example for Schmidt rank $3$. By continuity of the fidelity, one can embed the target and seed in bigger spaces with arbitrarily small perturbations for it to hold in higher dimensions, which is why this is sufficient. Consider target state $\ket{\psi} = 0.85\ket{00}+ 0.08\ket{11} + 0.07\ket{22}$ and seed state $\ket{\phi} = 0.45(\ket{00} + \ket{11}) + 0.1\ket{22}$. Then, the optimal LU strategy fidelity is
\begin{align*} 
    & F(P^{\downarrow},Q^{\downarrow})\\
    =&  \left(\sqrt{0.45}(\sqrt{0.85}+\sqrt{0.08}) + \sqrt{0.1(0.07)}\right)^{2} \\
    <& 0.796 \ .
\end{align*}
In contrast, if we consider $P' = [0.55,0.28,0.17]$, then 
\begin{align*}
& F((P \otimes P')^{\downarrow},Q^{\downarrow}) \\
=& \left(\sqrt{0.45}\sqrt{0.4675} + \sqrt{0.45}\sqrt{0.238} + \sqrt{0.1}\sqrt{0.1445} \right)^{2} \\
>& 0.82 \ .
\end{align*}
As we maximize over $P'$, the optimal LO strategy achieves a value that is strictly above the LU strategy. This completes the proof.
\end{proof}

\subsection{Inefficiency of Optimal LOSR Fidelity and Computable Upper Bounds}
In the above we have constructed an example where the local operations strategy outperforms the local unitary strategy (though we have not shown what the strategy itself is). A natural question would then be how easy it is to solve for the optimal fidelity value or even a bound. By Theorem \ref{thm:pure-state-conversions-with-LU}, we can conclude the optimal local unitary strategy is polynomial time to solve as all one needs to do is sort the Schmidt coefficients and calculate the fidelity. Indeed, one could solve for the ordering of the Schmidt coefficients using the linear program for sorting a vector. 

In contrast, for optimizing LO strategies, we have no such luck. In effect this is because there are two things to optimize over at once. Indeed, recall from Theorem \ref{thm:pure-state-conversion-via-LO} that
$$ F_{LO}(\ket{\psi},\ket{\phi}) = \max_{P' \in \cP(\Sigma)} F((P\otimes P')^{\downarrow},Q^{\downarrow}) \ . $$
Then the problem is that one must first tensor $P$ onto variable $P'$ and then re-order the vector. One cannot even in general order an optimization variable, which we will refer to as `sorting,' as sorting is in general non-convex. In sorting a vector using a linear program, one relaxes to bistochastic channels and considers a linear function so that the optimizer is an extreme point which by the Birkhoff von Neumann theorem is a specific permutation. However, we are many levels of involvement above that: we want the distribution $P'$ such that its product distribution $P \otimes P'$ when sorted optimizes the fidelity with $Q^{\downarrow}$. Therefore, we need to optimize over $P'$ and the permutation at the same time. It's not clear that we can actually relax to bistochastic strategies because of the joint concavity of fidelity. That is to say, for any bistochastic channel $\cE$,
\begin{align*}
    F(\cE(P \otimes P'),Q^{\downarrow}) =& F(\sum_{\pi} r(\pi) V_{\pi}(P \otimes P'),Q^{\downarrow}) \\
    \geq & \sum_{\pi} F(r(\pi)V_{\pi}(P \otimes P'),r(\pi)Q^{\downarrow}) \\
    =& \sum_{\pi} r(\pi) F(V_{\pi}(P \otimes P'),Q^{\downarrow}) \ ,
\end{align*}
where the first line is Birkhoff-von Neumann theorem, the second is joint concavity using $Q^{\downarrow} = \sum_{\pi} r(\pi) Q^{\downarrow}$ as $r$ is a probability distribution, and the last line is because $F(\lambda P,Q) = \lambda F(P,Q) = F(P,\lambda Q)$. Thus any bistochastic channel \textit{may} strictly do better than the average of its extreme points. Moreover, even if we could optimize over bistochastic channels, we would have a non-convex objective function as the bistochastic channel, an optimization variable, would be applied to $P \otimes P'$ which is also partially an optimization variable.

Given the above, it seems likely the best option if one were to try and find a (near) optimum would be to use gradient descent from random initial $P'$, realizing it will only work locally and will break down at `kinks' where the ordering changes. Otherwise more sophisticated non-convex optimization techniques might be used.

\paragraph*{Computable Upper Bound Methods}
Perhaps even worse than our inability to calculate the exact fidelity, is that it is not clear in general how to determine good bounds. Certainly we have the following result.
\begin{theorem}
Unless the target state is \edit{(up to local unitaries)} $\ket{\psi}=\ket{\phi} \otimes \ket{\zeta}$ for some pure state $\ket{\zeta}$ where $\ket{\phi}$ is the seed state, there exists $\varepsilon > 0$ such that there does not exist local operations that will take $\ket{\phi}$ to $\ket{\psi}$.
\end{theorem}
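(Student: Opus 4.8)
The statement is a corollary of Theorem~\ref{thm:pure-state-conversion-via-LO} together with the rigidity clause in Item~1 of Proposition~\ref{prop:fidel-props}; the plan is to unwind the formula there and read off exactly when its value equals $1$. First I would reduce to normalized states, as in the two-qubit case treated above: rescaling $\ket{\psi}$ and $\ket{\phi}$ multiplies $F_{LO}$ by a fixed positive factor and does not change which case we are in, so assume $\ket{\psi},\ket{\phi}$ are unit vectors and let $R=\sum_i r_i\dyad{i}$ and $T=\sum_k t_k\dyad{k}$ be the (now genuine) Schmidt-coefficient distributions of the target $\ket{\psi}$ and the seed $\ket{\phi}$. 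By Theorem~\ref{thm:pure-state-conversion-via-LO} there is a finite alphabet $\Sigma$ with $|\Sigma|\le\SR(\ket{\phi})\,\SR(\ket{\psi})$ and
$$ F_{LO}(\ket{\psi},\ket{\phi})=\max_{P'\in\cP(\Sigma)}F\bigl((R\otimes P')^{\downarrow},\,T^{\downarrow}_{\mrm{embed}}\bigr). $$
Since $R$ is normalized, $\sum_{i,j}r_i\,p'(j)=\sum_j p'(j)=1$, so both $(R\otimes P')^{\downarrow}$ and $T^{\downarrow}_{\mrm{embed}}$ are honest probability distributions, and by Item~1 of Proposition~\ref{prop:fidel-props} the inner fidelity equals $1$ precisely when $(R\otimes P')^{\downarrow}=T^{\downarrow}_{\mrm{embed}}$, i.e.\ when the multiset $\{t_k\}$ of Schmidt coefficients of the seed equals the multiset $\{r_i\,p'(j)\}$.

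The core step is to identify the existence of such a $P'$ with the exceptional case. Given a $P'$ realizing the multiset equality, set $\ket{\zeta}:=\sum_{j:\,p'(j)>0}\sqrt{p'(j)}\,\ket{j}\ket{j}$; then $\ket{\psi}\otimes\ket{\zeta}$ has Schmidt-coefficient multiset $\{r_i\,p'(j):p'(j)>0\}=\{t_k:t_k>0\}$, so by the local-unitary classification of bipartite pure states via their Schmidt coefficients (Proposition~\ref{prop:local-unitaries} and the bijection following it) $\ket{\phi}$ equals $\ket{\psi}\otimes\ket{\zeta}$ up to local unitaries. Conversely, if $\ket{\phi}$ is, up to local unitaries, $\ket{\psi}\otimes\ket{\zeta}$ for a pure state $\ket{\zeta}$, then the (normalized) Schmidt coefficients of $\ket{\zeta}$ furnish a $P'$ attaining value $1$, and operationally Alice and Bob simply apply the local unitaries bringing $\ket{\phi}$ into the product form $\ket{\psi}_{AB}\otimes\ket{\zeta}_{A'B'}$ and each discards their half of $\ket{\zeta}$ (a tensor product of local partial-trace channels), which sends $\ket{\phi}$ exactly to $\ket{\psi}$, so $F_{LO}=1$. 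If instead no such $P'$ exists, then $F\bigl((R\otimes P')^{\downarrow},T^{\downarrow}_{\mrm{embed}}\bigr)<1$ for every $P'$; as this is a continuous function on the compact simplex $\cP(\Sigma)$ the maximum is attained and hence strictly below $1$, so $\varepsilon:=1-F_{LO}(\ket{\psi},\ket{\phi})>0$ is a fixed error floor that no local operation can beat, and in particular exact conversion is impossible.

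I expect the only step needing genuine care is the multiset identification in the middle paragraph: one has to track the zeros introduced by embedding $T$ into the joint simplex and by weights $p'(j)=0$, verify that the supports multiply correctly so that $\SR(\ket{\phi})=\SR(\ket{\psi})\cdot\SR(\ket{\zeta})$, and check that quantifying over $P'\in\cP(\Sigma)$ is genuinely the same as quantifying over pure states $\ket{\zeta}$ --- which is legitimate only because Theorem~\ref{thm:pure-state-conversion-via-LO} already caps $|\Sigma|$. There is no deeper obstacle; once Theorems~\ref{thm:pure-state-conversions-with-LU} and~\ref{thm:pure-state-conversion-via-LO} are in hand this is essentially bookkeeping plus the compactness remark.
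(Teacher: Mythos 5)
Your proposal is correct and follows essentially the same route as the paper: apply Theorem \ref{thm:pure-state-conversion-via-LO}, invoke the saturation condition in Item 1 of Proposition \ref{prop:fidel-props}, and identify the equality $(R\otimes P')^{\downarrow}=T^{\downarrow}_{\text{embed}}$ with a tensor-product decomposition; your added converse direction and the compactness remark guaranteeing $\varepsilon>0$ are small completions of details the paper leaves implicit. Note that, exactly like the paper's own proof, you in fact derive $\ket{\phi}=\ket{\psi}\otimes\ket{\zeta}$ up to local unitaries (seed equals target tensor a discardable pure state), which is the intended condition and matches the paper's motivating example, even though the theorem's wording literally swaps the roles of seed and target.
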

\begin{proof}
\edit{Theorem \ref{thm:pure-state-conversion-via-LO} states that 
$$ F_{\text{LOSR}}(\ket{\psi},\ket{\phi}) = \max_{P' \in \cP(\Sigma)} F((R \otimes P')^{\downarrow}, T^{\downarrow}_{\text{embed}}) \ . $$ 
Item 1 of Proposition \ref{prop:fidel-props} states that fidelity between two normalized states (and thus distributions) is one if and only if the two arguments are the same. Thus, the maximization obtains one if and only if there exists $P' \in \cP(\Sigma)$ such that $(R \otimes P')^{\downarrow} = T^{\downarrow}_{\text{embed}}$. This means the Schmidt coefficients of $\ket{\phi}$ are $\{r_{i}p'_{k}\}_{(i,k)}$. That is, (up to local unitaries) $\ket{\phi} = \sum_{(i,k)} r_{i}p'_{k} \ket{(i,k)}\ket{(i,k)}$. Defining local isometry $U_{A \to A_{0}A_{1}} \ket{(i,k)}_{A} =\ket{i}_{A_{0}}\ket{k}_{A_{1}}$, we have
\begin{align*}
&(U\otimes U)\ket{\psi} \\
 =& \sum_{i,k} r_{i}p'_{k} \left(\ket{i}_{A_{0}}\ket{k}_{B_{0}}\ket{i}_{A_{1}}\ket{k}_{B_{1}}\right) \\
 =& \left(\sum_{i} r_{i}\ket{i}_{A_{0}}\ket{i}_{B_{0}} \right) \otimes \left(\sum_{k} p_{k}\ket{k}_{A_{1}}\ket{k}_{B_{1}} \right) \\
 =:& \ket{\psi} \otimes \ket{\zeta} \ ,
\end{align*}
where we defined $\ket{\zeta}$ as the second state. This completes the proof.}
\end{proof}
The above theorem, while derived from a very different strategy than Proposition \ref{prop:Hayden-Winter}, does not seem to give us much more information as to at what point communication is necessary. What we would want to efficiently improve this would be to establish upper bounds on the equation given in Theorem \ref{thm:pure-state-conversion-via-LO} that have a closed form that does not depend on $P'$. One option is to use the data processing inequality for fidelity. This can be seen in the following proposition.
\begin{proposition}
Consider (possibly unnormalized) target state $\ket{\psi}$ and seed state $\ket{\phi}$ with corresponding Schmidt distributions $p,q$ respectively. If $p_{\max} \leq q_{\max}$, then 
$$ F_{LO}(\ket{\psi},\ket{\phi}) \leq F(\ol{p},\ol{q}) \ , $$
where $\ol{p} = p_{\max} \dyad{0} + (1-p_{\max})\dyad{1}$ and likewise for $\ol{q}$.
\end{proposition}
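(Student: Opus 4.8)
The plan is to read off the bound from the exact characterization in Theorem~\ref{thm:pure-state-conversion-via-LO}, then post-process both arguments of the fidelity through a single coarse-graining channel and close with a one-variable monotonicity estimate. By the rescaling observation already used in the two-qubit proposition (the scalars $\tr[R]\tr[T]$ factor out of $F((R\otimes P')^{\downarrow},T^{\downarrow})$), it suffices to treat $p$ and $q$ as normalized distributions, so write $P=\sum_i p_i\dyad{i}$ and $Q=\sum_i q_i\dyad{i}$. Theorem~\ref{thm:pure-state-conversion-via-LO} then gives
\[
 F_{LO}(\ket{\psi},\ket{\phi}) \;=\; \max_{P'\in\cP(\Sigma)} F\bigl((P\otimes P')^{\downarrow},\,Q^{\downarrow}_{\mrm{embed}}\bigr),
\]
and we may also assume $P'$ is ordered without loss of generality, since ordering $P'$ does not change $(P\otimes P')^{\downarrow}$.

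Next, fix $P'$ and let $\cM$ be the classical channel on the joint alphabet that leaves the first coordinate of an ordered vector intact and bins all remaining coordinates into a single outcome; this is a genuine fixed stochastic map, hence a legitimate channel to which data processing applies. Because both $(P\otimes P')^{\downarrow}$ and $Q^{\downarrow}_{\mrm{embed}}$ are listed in nonincreasing order, their first coordinate is their largest entry, so
\[
 \cM\bigl((P\otimes P')^{\downarrow}\bigr) = \bigl(p_{\max}p'_{\max},\,1-p_{\max}p'_{\max}\bigr),
 \qquad
 \cM\bigl(Q^{\downarrow}_{\mrm{embed}}\bigr) = \bigl(q_{\max},\,1-q_{\max}\bigr)=\ol q,
\]
using that the largest entry of $P\otimes P'$ is the product of the two largest entries. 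Data processing for fidelity (Item~3 of Proposition~\ref{prop:fidel-props}) then yields $F\bigl((P\otimes P')^{\downarrow},Q^{\downarrow}_{\mrm{embed}}\bigr)\le F\bigl((p_{\max}p'_{\max},1-p_{\max}p'_{\max}),\ol q\bigr)$, and by Item~5 of Proposition~\ref{prop:fidel-props} the right-hand side equals $g(p_{\max}p'_{\max})^2$, where $g(a):=\sqrt{a q_{\max}}+\sqrt{(1-a)(1-q_{\max})}$. A direct computation shows $g'(a)\ge 0$ precisely when $a\le q_{\max}$, so $g$ is nondecreasing on $[0,q_{\max}]$, and squaring preserves this since $g\ge 0$. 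Since $p_{\max}p'_{\max}\le p_{\max}\le q_{\max}$ — which is exactly where the hypothesis $p_{\max}\le q_{\max}$ is used — both $p_{\max}p'_{\max}$ and $p_{\max}$ lie in this regime, giving $g(p_{\max}p'_{\max})^2 \le g(p_{\max})^2 = F(\ol p,\ol q)$. This bound is uniform in $P'$, so taking the maximum over $P'$ finishes the proof.

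I expect the only delicate point to be the monotonicity step: one must be careful that $p_{\max}\le q_{\max}$ is exactly the condition keeping both arguments inside the increasing interval $[0,q_{\max}]$ of $a\mapsto F((a,1-a),\ol q)$, and note that no auxiliary assumption such as $q_{\max}\ge 1/2$ is required (the identity simply fails for $q_{\max}<p_{\max}$ because then $p_{\max}$ can land in the decreasing regime). It is also worth stating explicitly that $\cM$ is applied as the \emph{same} channel to both ordered vectors, which is what legitimizes identifying ``first coordinate'' with ``largest entry'' in each case and hence makes the data-processing step valid.
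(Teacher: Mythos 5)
Your proof is correct and follows essentially the same route as the paper's: reduce via Theorem~\ref{thm:pure-state-conversion-via-LO}, apply the coarse-graining channel that keeps the top entry and bins the rest to both ordered distributions, invoke data processing, and observe that since $p_{\max}p'_{\max}\le p_{\max}\le q_{\max}$ the resulting two-outcome fidelity is maximized at $p'_{\max}=1$. Your only addition is to make the final monotonicity step explicit via the derivative of $a\mapsto\sqrt{aq_{\max}}+\sqrt{(1-a)(1-q_{\max})}$ on $[0,q_{\max}]$, which the paper asserts without computation.
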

\begin{proof}
Without loss of generality let $d$ be the maximum local dimension. Let $\cE(\cdot) =  \dyad{0} \cdot \dyad{0} + \sum_{i \in \{1,...,d-1\}} \ket{1}\bra{i} \cdot \ket{i}\bra{1}$. That is, $\cE$ coarse-grains a probability distribution to the Bernoulli distribution with its first element untouched and the sum of all the others as the other outcome. Then using data processing of fidelity (Item 3 of Proposition \ref{prop:fidel-props}),
\begin{align*}
    & \max_{P' \in \cP(\Sigma)} F((P \otimes P')^{\downarrow},Q^{\downarrow}) \\
    \leq & \max_{P' \in \cP(\Sigma)} F(\cE((P \otimes P')^{\downarrow}),\cE(Q^{\downarrow})) \\
    = & \max_{p' \in [0,1]} F\left(\wt{P}(p'), \cE(Q^{\downarrow})\right) \ ,
 \end{align*}
 where $\wt{P}(p') := p_{\max}p' \dyad{0} + (1-p_{\max}p')\dyad{1}$ and $\cE(Q^{\downarrow}) = q_{\max}\dyad{0}- (1-q_{\max})\dyad{1}$. Now note that by assumption $p_{\max} \leq q_{\max}$. As the fidelity will only decrease as $p_{\max}p'$ moves away from $q_{\max}$, the optimal choice is $p' = 1$. This completes the proof. 
\end{proof}
The problem with the above bound is that there will be cases where $p_{\max} > q_{\max}$. Why the inequality in the other direction was required was to know for a fact what element of $p$ was relevant, namely $p_{\max}$ and that any choice of $p' \neq 1$ would be sub-optimal. In general this strategy would require $q^{\downarrow}(j)$ is sufficiently large relative to $p^{\downarrow}(j)$. This can be determined in some cases. Here we provide a simple example.
\begin{example}
Let 
\begin{align*}
p^{\downarrow} = [3/4,1/8,1/8]^{\Trans} \quad q^{\downarrow} = [1/2,1/2]^{\Trans} \ .
\end{align*}
Then $(p \otimes p')^{\downarrow}[1:2] = p'(1)[3/4, 1/8]^{\Trans}$, and so we can coarse-grain on the second element to obtain $\ol{P}(p') = 1/8p'\dyad{0}+ (1-1/8p')\dyad{1}$ and $\ol{Q} = Q^{\downarrow}$. Then as $1/8p' < 1/2$, the upper bound is $F(\frac{1}{8} \dyad{0} + \frac{7}{8}\dyad{1}, \frac{1}{2}\mbb{1}) \approx 0.83$.
\end{example}

The above shows that while data processing can be sufficient in certain cases, it does not provide an easy general method. Another common alternative in quantum information theory is semidefinite relaxations of optimization problems because semidefinite programs are efficient to evaluate. In Appendix \ref{app:SDP-relaxation}, we establish the following upper bound and show it may be expressed as a semidefinite program, which, as everything is in terms of probability distributions, is due to the non-linearity of fidelity and nothing particularly quantum.
\begin{theorem}\label{thm:sdp-upper-bound}
Consider (possibly unnormalized) target state $\ket{\psi}$ and (possibly unnormalized) seed state $\ket{\phi}$. Let $\mrm{SR}(\psi) = d$ and $\mrm{SR}(\phi) = d'$. Define $A = \mbb{C}^{d}$, $B = \mbb{C}^{d \cdot d'}$. Then,
\begin{equation}\label{eq:SDP-upper-bound}
    \begin{aligned}
    F_{\mrm{LOSR}}(\ket{\psi},\ket{\phi}) \leq \max & \; F(R,Q^{\downarrow}_{\text{embed}}) \\
    \mrm{s.t.} & \; \Tr_{B }[R] = P^{\downarrow} \\
    & \; R \in \cP^{\downarrow}(d^{2} \cdot d') \ ,
    \end{aligned}
\end{equation}
where $P$ and $Q$ are the distributions defined by $\ket{\psi}$ and $\ket{\phi}$'s Schmidt coefficients respectively. Moreover, this admits the following simple semidefinite program over the reals:
\begin{equation}\label{eq:Reals-SDP-representation}
    \begin{aligned}
        \max & \sum_{i \in [d^{2} \cdot d']} x(i) \\
        \text{s.t.} & \; \begin{pmatrix} \text{diag}(r) & \text{diag}(x) \\ \text{diag}(x) & \text{diag}(q^{\downarrow}_{\text{embed}}) \end{pmatrix} \succeq 0 \\
        & \Tr_{B}[\text{diag}(r)] = P^{\downarrow} \\
        & r \in \cP^{\downarrow}([d^{2} \cdot d]) \\ 
        & x \in \mbb{R}^{d^{2}\cdot d'} \ ,
    \end{aligned}
\end{equation}
\end{theorem}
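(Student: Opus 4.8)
The plan is to derive \eqref{eq:SDP-upper-bound} from the exact formula of Theorem~\ref{thm:pure-state-conversion-via-LO} by relaxing away the tensor‑product structure of the ancilla, and then to recognise the resulting program as an SDP via the semidefinite representation of fidelity between commuting (diagonal) operators.

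\textbf{Step 1: the relaxation.} By Theorem~\ref{thm:pure-state-conversion-via-LO}, $F_{\mathrm{LOSR}}(\ket{\psi},\ket{\phi}) = \max_{P'} F\big((P\otimes P')^{\downarrow}, Q^{\downarrow}_{\text{embed}}\big)$, the maximum being over $P'\in\cP(\Sigma)$ with $|\Sigma|\le \SR(\psi)\SR(\phi)=dd'$. View $P\otimes P'$ as a bipartite distribution on $A\otimes B$ with $A=\mbb{C}^{d}$ the target register and $B=\mbb{C}^{dd'}$ the ancilla register, so that $A\otimes B\cong\mbb{C}^{d^{2}d'}$; crucially its marginal on $A$ is $P^{\downarrow}$ (after the harmless reordering of $P$'s coordinates), independently of $P'$. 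I would enlarge the feasible set, replacing ``$P\otimes P'$ for some admissible $P'$'' by ``an arbitrary $R$ on $A\otimes B$ consistent with that marginal,'' i.e.\ with $\Tr_{B}R=P^{\downarrow}$. Since every admissible $P'$ yields such an $R$ while the objective is unchanged, the value can only increase, which gives $F_{\mathrm{LOSR}}(\ket{\psi},\ket{\phi})\le\max\{F(R,Q^{\downarrow}_{\text{embed}}): \Tr_{B}R=P^{\downarrow},\ R\in\cP^{\downarrow}(d^{2}d')\}$, i.e.\ \eqref{eq:SDP-upper-bound}. The ``$\cdot^{\downarrow}$'' bookkeeping and the passage from $F\big((P\otimes P')^{\downarrow},\cdot\big)$ to $F(R,\cdot)$ are handled using the rearrangement identity $F(X^{\downarrow},Y^{\downarrow})=\max_{\sigma}F(X,V_{\sigma}Y)$ together with the observation that $F(\cdot,Q^{\downarrow}_{\text{embed}})$ depends only on the $d'$ largest coordinates of its argument (since $Q^{\downarrow}_{\text{embed}}$ is supported there), which is also what lets one restrict $R$ to $\cP^{\downarrow}$ and keep the dimension at $d^{2}d'$.

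\textbf{Step 2: SDP reformulation.} To get \eqref{eq:Reals-SDP-representation}, I would invoke the standard SDP for the square‑root fidelity, $\sqrt{F}(R,S)=\max\{\Re\Tr(X): \left(\begin{smallmatrix}R & X\\ X^{\dagger}& S\end{smallmatrix}\right)\succeq0\}$, and note that maximising $F$ is equivalent to maximising $\sqrt{F}$ since $t\mapsto t^{2}$ is increasing. Because $R=\operatorname{diag}(r)$ and $Q^{\downarrow}_{\text{embed}}=\operatorname{diag}(q^{\downarrow}_{\text{embed}})$ are diagonal, the optimal $X$ may be taken diagonal, $X=\operatorname{diag}(x)$, so the positive‑semidefinite constraint decouples into the $2\times2$ blocks $\left(\begin{smallmatrix}r_{i}& x_{i}\\ x_{i}& q^{\downarrow}_{\text{embed},i}\end{smallmatrix}\right)\succeq0$, i.e.\ $x_{i}^{2}\le r_{i}q^{\downarrow}_{\text{embed},i}$; maximising $\sum_{i}x_{i}$ forces $x_{i}=\sqrt{r_{i}q^{\downarrow}_{\text{embed},i}}$, recovering $\sqrt{F}$. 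Adjoining the linear constraints defining the feasible set---the partial‑trace equality $\Tr_{B}\operatorname{diag}(r)=P^{\downarrow}$, the ordering inequalities $r_{i}\ge r_{i+1}$, and the simplex constraint---gives precisely the program \eqref{eq:Reals-SDP-representation}, whose optimal value is the square root of that of \eqref{eq:SDP-upper-bound}.

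\textbf{Main obstacle.} The delicate point is entirely in Step~1: one must verify that the relaxation is sound while simultaneously keeping the optimisation variable inside $\cP^{\downarrow}(d^{2}d')$ and the dimensions tight. Concretely, (i) the $|\Sigma|\le dd'$ bound of Theorem~\ref{thm:pure-state-conversion-via-LO} must be used to justify that $B$ may be taken $dd'$‑dimensional, and (ii) one must show that imposing ``$R$ sorted'' together with the marginal constraint $\Tr_{B}R=P^{\downarrow}$ loses none of the values $F\big((P\otimes P')^{\downarrow},Q^{\downarrow}_{\text{embed}}\big)$. Since sorting a joint distribution generically destroys a prescribed marginal, this needs a careful rearrangement/majorisation argument that exploits the fact that only the top $d'$ coordinates of $R$ enter the objective, so that mass in the complementary block can be redistributed to restore $\Tr_{B}R=P^{\downarrow}$ without lowering the fidelity. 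The SDP reformulation of Step~2 is then routine.
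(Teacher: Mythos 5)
Your overall architecture coincides with the paper's: the proof there is split into exactly your two steps, a first lemma that relaxes the feasible set $\{(P\otimes P')^{\downarrow}\}$ of Theorem~\ref{thm:pure-state-conversion-via-LO} to sorted distributions with prescribed $A$-marginal, and a second lemma that rewrites the resulting program as the standard fidelity SDP restricted to diagonal data. Your Step~2 is essentially the paper's second lemma and is fine: the paper justifies ``the optimal $X$ may be taken diagonal'' by pinching the block matrix onto the computational basis and averaging $X$ with $X^{\dagger}$, after which your $2\times 2$ block decoupling gives $\sum_i x(i)=\sum_i\sqrt{r(i)\,q^{\downarrow}_{\text{embed}}(i)}$, and your remark that the SDP computes the square root of the value in \eqref{eq:SDP-upper-bound} matches how the paper handles this.

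The gap is in Step~1, exactly where you place it, and you do not close it. To prove \eqref{eq:SDP-upper-bound} one must exhibit, for every admissible $P'$, a single $R$ that is simultaneously (i) globally sorted, (ii) satisfies $\Tr_B R=P^{\downarrow}$, and (iii) has $F(R,Q^{\downarrow}_{\text{embed}})\ge F((P\otimes P')^{\downarrow},Q^{\downarrow}_{\text{embed}})$. The two natural candidates each fail one constraint: $P^{\downarrow}\otimes P'$ has the right marginal but is not sorted (and sorting the first argument changes the objective against the fixed $Q^{\downarrow}_{\text{embed}}$), while $(P\otimes P')^{\downarrow}$ is sorted but its $A$-marginal is generically not $P^{\downarrow}$ --- for $p^{\downarrow}=(0.6,0.4)$ and $p'^{\downarrow}=(0.9,0.1)$ the sorted product is $(0.54,0.36,0.06,0.04)$, whose marginal on the leading register is $(0.9,0.1)$. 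This is precisely the block interleaving you flag as the ``main obstacle.'' The paper closes the gap by arguing directly that this interleaving does not disturb the marginal, i.e.\ that $\Tr_{\mbb{C}^{|\Sigma|}}\bigl[(P^{\downarrow}\otimes {P'}^{\downarrow})^{\downarrow}\bigr]=P^{\downarrow}$, so that $(P\otimes P')^{\downarrow}$ itself is feasible and no redistribution is needed; you instead defer to ``a careful rearrangement/majorisation argument'' that is never supplied. That deferred argument is the entire mathematical content of the first half of the theorem, and your sketched repair is not obviously adequate: under the lexicographic convention with $A$ as the leading index, sortedness forces the $d\cdot d'\ge d'$ largest entries of $R$ into the first $A$-block, so the marginal constraint caps $\sum_{i\le d'}R(i)$ by $p^{\downarrow}(1)$ and hence, by Cauchy--Schwarz, caps the objective by $p^{\downarrow}(1)$; a redistribution touching only coordinates beyond position $d'$ cannot escape this. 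So either the non-interleaving property must be established in the regime where it is used, or the feasibility argument must be rebuilt on a different footing; as written, Step~1 states the conclusion rather than proving it.
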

Physically, this relaxation may be seen as relaxing the isometric representation of the optimal LOSR strategy to one where one allows the ancillary environment start off entangled with the local system. Mathematically, this is not too loose because we require this entangled pure state has a notion of ``local Schmidt coefficients" that pertain to the original target state, although this physically does not seem to have a clean interpretation. Nonetheless, we can see that \eqref{eq:SDP-upper-bound} will not achieve unity unless there exists a joint distribution $Q=R$, which would require $Q^{\downarrow}_{\text{embed}}$ to have $P^{\downarrow}$ as it's marginal, which seems highly restrictive. Therefore, \eqref{eq:SDP-upper-bound} should provide an upper bound that is non-trivial.

\section{Many Copy Pure State Conversion with Zero Communication}\label{sec:many-copy}
Having established what happens for single copies, we consider many copies. We provide two motivations for doing this. First, we note that it's not clear what the limiting behaviour will be even in the LU setting. A reader may recall from other works that the fidelity is multiplicative so if $F(P,Q) < 1$, then $\lim_{n \to \infty} F(P^{\otimes n},Q^{\otimes n}) = \lim_{n \to \infty} F(P,Q)^{n} \to 0$. However, we lose the multiplicativity as we are considering $\lim_{n \to \infty} F((P^{\otimes n})^{\downarrow},(Q^{\otimes n})^{\downarrow})$. This issue is further aggravated if we consider local operations and the ancillary variable. 

The second motivation is that what was initially considered in the literature, albeit with LOCC \cite{Bennett-1996a}, was the conversion of many copies of states. A particular focus in the referenced work and subsequent ones is the case where either the target or seed state is the maximally entangled state, known as distillation and dilution respectively. With LOCC, we know there are `rates' in the conversions. By \cite{Hayden-2003a} along with previous results in this work, we would not expect there to be non-negative rates without the communication assuming the error is required to be vanishing, i.e.\ $\ve \to 0$.

In this section we establish convex optimization problems for dilution and distillation in the zero communication setting. These results are established in terms of the not-actually-a-norm $\|\cdot \|_{(k,1/2)}$, which we remind the reader is the $(k,p)-$norms extended to $p < 1$ introduced in Section \ref{sec:background} with the choice of $p = 1/2$. We also look at the limiting behaviour as the number of copies grows. In particular, we find a closed form when trying to convert $n-$fold two qubit states to a different $n-$fold two qubit state. Moreover, we prove the fidelity goes to zero in this case. We discuss the extension of this to entangled states with larger Schmidt rank. 

\subsubsection{Dilution Under Local Operations}
We begin by determining the limits of dilution. For intuition, we begin with local unitaries where there is no optimization. Recall that the Schmidt coefficients of the maximally entangled state are all $\sqrt{d^{-1}}$, so they correspond to the maximally mixed distribution under our bijection between Schmidt coefficients and probability distributions.
\begin{proposition}
For local unitary strategies the optimal dilution fidelity is given by
$$F_{LU}\left(\ket{\psi}, \ket{\Phi^{+}_{d}}^{\otimes n}\right) = d^{-n} \left\|P\right\|_{(d^{n},1/2)} \ . $$
\end{proposition}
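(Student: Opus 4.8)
The plan is to reduce to a classical computation via Theorem~\ref{thm:pure-state-conversions-with-LU}, which states that the optimal local-unitary fidelity between two pure states equals the fidelity of their ordered Schmidt distributions. So the first step is to record the Schmidt data of the seed $\ket{\Phi^{+}_{d}}^{\otimes n}$. Since $\ket{\Phi^{+}_{d}}$ has $d$ Schmidt coefficients, all equal to $d^{-1/2}$, and the Schmidt coefficients of a tensor product are the products of the factors' Schmidt coefficients, $\ket{\Phi^{+}_{d}}^{\otimes n}$ has $d^{n}$ Schmidt coefficients, all equal to $d^{-n/2}$; equivalently, under the bijection of Section~\ref{sec:single-copy}, its Schmidt distribution is the uniform distribution $\tfrac{1}{d^{n}}\mbb{1}_{\mbb{C}^{d^{n}}}$ (embedded in the appropriate local space if $\SR(\psi) > d^{n}$, as discussed around \eqref{eq:FLU-notation-definition}).

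Next I would invoke Theorem~\ref{thm:pure-state-conversions-with-LU} with target $\ket{\psi}$ (Schmidt distribution $P$) and seed $\ket{\Phi^{+}_{d}}^{\otimes n}$, giving
\[
F_{LU}\!\left(\ket{\psi},\ket{\Phi^{+}_{d}}^{\otimes n}\right) = F\!\left(P^{\downarrow},\tfrac{1}{d^{n}}\mbb{1}\right),
\]
and then evaluate the right-hand side with Item~5 of Proposition~\ref{prop:fidel-props}, which identifies the fidelity of two classical distributions with the square of their Bhattacharyya coefficient. This yields $\bigl(\sum_{i\in[d^{n}]}\sqrt{p^{\downarrow}(i)\,d^{-n}}\,\bigr)^{2} = d^{-n}\bigl(\sum_{i\in[d^{n}]}\sqrt{p^{\downarrow}(i)}\,\bigr)^{2}$ after pulling the constant out of the sum and the square.

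Finally I would recognize the bracketed quantity as $\|P\|_{(d^{n},1/2)}$. The singular values of the diagonal density operator associated to $P$ are exactly $p^{\downarrow}(1)\ge p^{\downarrow}(2)\ge\cdots$, so by the definition of $\|\cdot\|_{(k,p)}$ extended to $p<1$ with $k=d^{n}$ and $p=1/2$ one has $\|P\|_{(d^{n},1/2)} = \bigl(\sum_{i\in[d^{n}]}p^{\downarrow}(i)^{1/2}\bigr)^{2}$, which is precisely what appeared, and multiplying by $d^{-n}$ gives the claimed formula.

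I do not expect a genuine obstacle; the only point deserving a line of care is the comparison between $\SR(\psi)$ and $d^{n}$. If $\SR(\psi)\le d^{n}$, the extra summands are $\sqrt{0}=0$ and do not contribute; if $\SR(\psi)>d^{n}$, the seed distribution $\tfrac{1}{d^{n}}\mbb{1}$ is supported on only $d^{n}$ coordinates, so the Bhattacharyya sum again truncates after $d^{n}$ terms. In both cases the truncation built into $\|\cdot\|_{(k,p)}$ — it sums only the $k$ largest singular values — returns exactly the right value, so the identity holds as stated with $k=d^{n}$.
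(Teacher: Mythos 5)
Your proposal is correct and follows essentially the same route as the paper: reduce to the classical fidelity of the ordered Schmidt distributions via Theorem~\ref{thm:pure-state-conversions-with-LU}, use that the uniform distribution $\pi_d^{\otimes n}$ is invariant under sorting, pull out the factor $d^{-n}$, and identify the remaining square of the Bhattacharyya sum with $\|P\|_{(d^{n},1/2)}$. Your extra remark on the truncation when $\SR(\psi)$ and $d^{n}$ differ is a harmless refinement the paper leaves implicit.
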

\begin{proof}
Generally, if $\ket{\psi} \neq \ket{\Phi^{+}_{d}}$,
\begin{align*}
1 >& F(P^{\downarrow},\pi_{d}^{\otimes n})^{\downarrow})  \\
=& F(P^{\downarrow},\pi_{d}^{\otimes n}) \\
=& \left[ d^{-n/2} \sum_{i \in [d^{n}]} \sqrt{P^{\downarrow} (i)} \right]^{2} \\
=& d^{-n} \left[ \sum_{i \in [d^{n}]} \sqrt{P^{\downarrow} (i)} \right]^{2} \\
=& d^{-n} \|P\|_{(d^{n},1/2)} \ , 
\end{align*}
where the first equality is because $\pi^{\otimes n}_{d}$ is invariant under ordering, the second is using the definition of fidelity and that $\pi^{\otimes n}_{d}$ has uniform coefficients, and the final equality is the definition of the $(k,p)-$norms. In particular note we have dropped the sorting.
\end{proof}
We remark we could have set $\ket{\phi} = \ket{\phi'}^{\otimes m}$ to get a tradeoff, but this does not seem to provide any insight.

Just as in the one-shot setting, we know the above result isn't as useful in general because it can't throw out resources, so we now present the general result.
\begin{proposition}
The optimal fidelity of converting $n$ $d-$local dimensional EPR pairs to $\ket{\psi}$ under local operations is given by 
$$ F_{LO}(\ket{\psi},\ket{\Phi^{+}_{d}}^{\otimes n}) = d^{-n} \max_{P' \in \cP(\Sigma)} \|(P \otimes P')\|_{(d^{n},1/2)} \ , $$
where $\|\cdot\|_{(k,p)}$ is $(k,p)-$norm generalized to $p \geq 0$.
Moreover, for fixed $n$, this is a convex optimization problem.
\end{proposition}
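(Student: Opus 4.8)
The statement has two halves. The identity follows by specializing Theorem~\ref{thm:pure-state-conversion-via-LO} to the seed $\ket{\phi}=\ket{\Phi^{+}_{d}}^{\otimes n}$, which has $\SR(\ket{\phi})=d^{n}$ and Schmidt distribution the uniform distribution $\pi_{d}^{\otimes n}\in\cP(d^{n})$, and to the target $\ket{\psi}$ with Schmidt distribution $P$, writing $m:=\SR(\ket{\psi})$. The theorem gives $F_{LO}(\ket{\psi},\ket{\Phi^{+}_{d}}^{\otimes n})=\max_{P'\in\cP(\Sigma)}F((P\otimes P')^{\downarrow},(\pi_{d}^{\otimes n})^{\downarrow}_{\text{embed}})$ with $|\Sigma|\le d^{n}m$. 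Since $\pi_{d}^{\otimes n}$ is invariant under reordering and its embedding into the joint alphabet only pads it with zeros, $(\pi_{d}^{\otimes n})^{\downarrow}_{\text{embed}}$ equals $d^{-n}$ on its first $d^{n}$ coordinates and $0$ afterwards, so by the Bhattacharyya form of the fidelity (Item~5 of Proposition~\ref{prop:fidel-props}) one gets $F((P\otimes P')^{\downarrow},(\pi_{d}^{\otimes n})^{\downarrow}_{\text{embed}})=\bigl(\sum_{i\in[d^{n}]}\sqrt{d^{-n}(P\otimes P')^{\downarrow}(i)}\bigr)^{2}=d^{-n}\norm{P\otimes P'}_{(d^{n},1/2)}$, the last step being the definition of the generalized $(k,1/2)$-norm at $k=d^{n}$. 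This is precisely the computation already done for the local-unitary case above, the only new ingredient being the extra tensor factor $P'$ and the outer maximization; taking $\max_{P'}$ gives the formula.

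For the convexity claim at fixed $n$ the plan is to reformulate $\max_{P'}\norm{P\otimes P'}_{(d^{n},1/2)}$ rather than argue directly. I would first apply the variational form of the truncated sum, $\sum_{i\le k}\sqrt{v^{\downarrow}_{i}}=\max\{\sum_{i}y_{i}\sqrt{v_{i}}:0\le y_{i}\le1,\ \sum_{i}y_{i}=k\}$ (square roots entrywise), to express $\norm{P\otimes P'}^{1/2}_{(d^{n},1/2)}$ as a linear maximization over a fractional selection $y$ on $[m]\times\Sigma$. Maximizing over the weights $P'$ for fixed $y$ is a Cauchy--Schwarz problem, $\max_{P'}\sum_{b}\sqrt{p'(b)}\,c_{b}(y)=\norm{c(y)}_{2}$ with $c_{b}(y)=\sum_{a}y_{ab}\sqrt{p(a)}\ge0$ and optimizer $p'(b)\propto c_{b}(y)^{2}$, so the whole optimization collapses to $F_{LO}=d^{-n}\max_{y}\norm{c(y)}_{2}^{2}$ over the polytope $\{y:0\le y\le1,\ \sum_{a,b}y_{ab}=d^{n}\}$. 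Using that an optimal $y$ fills each column $b$ with a prefix of the rows ordered by $\sqrt{p(a)}$, this further reduces to distributing the $d^{n}$ ``slots'' among columns, a column of size $j_{b}$ contributing $\norm{P}_{(j_{b},1/2)}$, and I would then cast the resulting program --- or a relaxation of it that is tight at fixed $n$ --- as a convex, indeed semidefinite, problem, in the same spirit as Theorem~\ref{thm:sdp-upper-bound}.

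The main obstacle is exactly why this detour is needed: the raw objective $P'\mapsto\norm{P\otimes P'}_{(d^{n},1/2)}$ is \emph{neither} concave \emph{nor} convex on $\cP(\Sigma)$ --- taking the product $P\otimes P'$ and then keeping only its $d^{n}$ largest coordinates destroys both (already for a heavily peaked two-outcome $P$: a nontrivial mixture of two deterministic $P'$'s strictly beats both endpoints, while a slightly shifted mixture drops below the chord) --- so ``concave maximization over the simplex'' is not available. The delicate point is the allocation step: the clean relaxation (put every slot on the column with the best per-slot value $\norm{P}_{(j,1/2)}/j$) is only an upper bound in general, so to obtain the equality \emph{at fixed $n$} one must either show that relaxation is attained or show that the integrality of the slot count costs nothing. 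That tightness/integrality step is the part I expect to be hardest and would develop most carefully; the companion distillation statement is by comparison easy, since there the Cauchy--Schwarz step pins down $P'$ outright and no allocation survives.
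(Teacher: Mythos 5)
Your derivation of the identity is the same as the paper's: specialize Theorem~\ref{thm:pure-state-conversion-via-LO} to the seed $\ket{\Phi^{+}_{d}}^{\otimes n}$, use that $\pi_{d}^{\otimes n}$ is invariant under sorting and uniform on its support, and read the resulting Bhattacharyya sum as the generalized $(d^{n},1/2)$-norm. That half is correct and needs no further comment.

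The second half is where the gap lies: you never actually prove the convexity claim. Your plan (variational form of the truncated sum, Cauchy--Schwarz over $P'$, reduction to a slot-allocation problem) stalls exactly at the step you yourself flag as hardest --- the tightness of the allocation relaxation --- so as submitted the ``convex optimization problem'' assertion is unestablished. The paper's own argument is entirely different and three lines long: it rewrites $(\star)$ as $-\min_{P'}\Sigma_{d^{n}}\bigl(-\sqrt{\Phi_{P}(P')}\bigr)$ with $\Phi_{P}(\cdot)=P\otimes\cdot$ linear, $-\sqrt{\cdot}$ operator convex, and $\Sigma_{d^{n}}$ (sum of the $d^{n}$ largest eigenvalues) convex and L\"owner-monotone, so the composition is convex. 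You should know, however, that your instinct about the raw objective is right: $P'\mapsto\sum_{i\in[d^{n}]}\sqrt{(P\otimes P')^{\downarrow}(i)}$ is genuinely not concave. For $P=(0.9,0.1)$ and $d^{n}=2$, the values at $P'=(1,0)$, $P'=(\tfrac12,\tfrac12)$ and their midpoint $(\tfrac34,\tfrac14)$ are approximately $1.2649$, $1.3416$ and $1.2959$, and $1.2959<\tfrac12(1.2649+1.3416)\approx 1.3033$. This is in direct tension with the paper's rewriting: the $d^{n}$ largest eigenvalues of the negative semidefinite operator $-\sqrt{P\otimes P'}$ are the least negative ones and therefore select the $d^{n}$ \emph{smallest} entries of $P\otimes P'$, not the largest, so $\Sigma_{d^{n}}(-\sqrt{\Phi_{P}(P')})$ does not reproduce $(\star)$ once $P\otimes P'$ has more than $d^{n}$ nonzero entries. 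In short, you have correctly diagnosed that the convexity statement needs more than the obvious concave-maximization reading, but you have not supplied the missing argument; completing your allocation/tightness step (or repairing the eigenvalue bookkeeping in the composition route) is what a full proof requires.
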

\begin{proof}
Starting from the result of Theorem \ref{thm:pure-state-conversion-via-LO},
\begin{align*}
    & \max_{P' \in \cP(\Sigma)} F((P \otimes P')^{\downarrow}, (\pi_{d}^{\otimes n})^{\downarrow}) \\
    =& \max_{P' \in \cP(\Sigma)} F((P \otimes P')^{\downarrow}, \pi_{d}^{\otimes n}) \\
    =& \left[\frac{1}{d^{n/2}} \max_{P' \in \cP(\Sigma)} \sum_{i \in [d^{n}]} \sqrt{(P \otimes P')^{\downarrow}(i)} \right]^{2} \quad (\star) \\
    =& \frac{1}{d^{n}} \max_{P' \in \cP(\Sigma)} \|P \otimes P'\|_{(d^{n},1/2)} \ ,
\end{align*}
the first inequality is invariance of $\pi_{d}^{\otimes n}$ under sorting, the second is definition of fidelity and that each element of $\pi_{d}^{\otimes n}$ is the same, the last is the definition of $(k,p)$-norm extended to $p \geq 0$. 

To show this is a convex optimization problem, note that $\Phi_{P}(\cdot) := P \otimes \cdot$ is linear, $-\sqrt{\cdot}$ is operator convex, and the sum of the $k$ largest eigenvalues of a PSD $P$, which we will denote $\Sigma_{k}(P)$ is convex. Thus, starting from $(\star)$,  
\begin{align*}
& \left[d^{-n/2} \max_{P' \in \cP(\Sigma)} \sum_{i \in [d^{n}]} \sqrt{P \otimes P'}^{\downarrow}(i)\right]^{2} \\
=& \left[-d^{-n/2} \min_{P' \in \cP(\Sigma)} \mathlarger{\mathlarger{\Sigma}}_{d^{n}}\left(-\sqrt{\Phi_{P}(P')}\right)\right]^{2} \ ,
\end{align*}
where we have used $\max_{x \in \cC} f(x) = - \min_{x \in \cC} -f(x)$ and our definitions. Then ignoring the $-d^{-n/2}$ factor and the square, the optimization problem is over the probability simplex, which is a convex subset of the positive semidefinite matrices, and the objective function is convex over the positive semidefinite cone as $-\sqrt{\Phi_{P}(\cdot)}$ is operator convex and $\Sigma_{d^{n}}$ is a convex function over the space of Hermitian operators. this completes the proof.
\end{proof}
Unfortunately, while this gives computable bounds, it is not clear how one could determine the optimal value analytically.

\subsubsection{Distillation Under Local Operations}
We now present the same results in the distillation case,
where we take some state to many EPR states. For completeness, we state the local unitaries case.
\begin{proposition}
The fidelity of distillation under local unitaries and zero communication is given by
$$ F_{LU}(\ket{\Phi^{+}_{d}}^{\otimes m},\ket{\psi}^{\otimes n}) = d^{-m} \|P^{\otimes n}\|_{|S|,1/2} \ , $$
where $S = [\min\{d^{m},\mrm{rank}(P)^{n}\}]$.
\end{proposition}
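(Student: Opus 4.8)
The plan is to apply Theorem~\ref{thm:pure-state-conversions-with-LU} directly and then simplify, with the only subtlety being how the supports of the two Schmidt distributions overlap.

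First I would identify the two relevant distributions under the Schmidt-coefficient/probability-simplex correspondence. The target $\ket{\Phi^{+}_{d}}^{\otimes m}$ has all of its $d^{m}$ Schmidt coefficients equal to $d^{-m/2}$, so its (squared-Schmidt) distribution is the uniform distribution $R := \pi_{d}^{\otimes m}$ assigning weight $d^{-m}$ to each of $d^{m}$ outcomes; the seed $\ket{\psi}^{\otimes n}$ has Schmidt distribution $T := P^{\otimes n}$, supported on exactly $\mrm{rank}(P)^{n} = \SR(\ket{\psi})^{n}$ outcomes. By Theorem~\ref{thm:pure-state-conversions-with-LU} (with the target playing the role of $\ket{\psi}$ there and the seed that of $\ket{\phi}$), $F_{LU}(\ket{\Phi^{+}_{d}}^{\otimes m},\ket{\psi}^{\otimes n}) = F(R^{\downarrow}, T^{\downarrow})$ after embedding both into a common alphabet.

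Next I would invoke Item~5 of Proposition~\ref{prop:fidel-props} (the fidelity of commuting diagonal operators equals the squared Bhattacharyya coefficient), so that $F(R^{\downarrow},T^{\downarrow}) = \big(\sum_{i}\sqrt{R^{\downarrow}(i)\,T^{\downarrow}(i)}\big)^{2}$. Since $R^{\downarrow}$ is flat — equal to $d^{-m}$ on its first $d^{m}$ coordinates and $0$ thereafter, and thus invariant under sorting — this reduces to $d^{-m}\big(\sum_{i\in[d^{m}]}\sqrt{T^{\downarrow}(i)}\big)^{2}$. Here is the one place demanding care: if $\mrm{rank}(P)^{n} < d^{m}$ then $T^{\downarrow}(i)=0$ for $i>\mrm{rank}(P)^{n}$, so the sum truncates there (the surplus flat coordinates of $R^{\downarrow}$ pair with zeros of $T^{\downarrow}$ and contribute nothing); if instead $\mrm{rank}(P)^{n}\ge d^{m}$, sorting $T$ in decreasing order is precisely what makes the $d^{m}$ surviving summands the $d^{m}$ largest entries of $P^{\otimes n}$. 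In both cases the number of contributing terms is $|S| = \min\{d^{m},\mrm{rank}(P)^{n}\}$, and recognizing $\big(\sum_{i\in[|S|]}\sqrt{T^{\downarrow}(i)}\big)^{2}$ as $\|P^{\otimes n}\|_{(|S|,1/2)}$ via the definition of the $(k,p)$-norm extended to $p=1/2$ (with $\sigma_{i}(P^{\otimes n}) = (P^{\otimes n})^{\downarrow}(i)$) finishes the proof, exactly as in the dilution proposition immediately preceding it.

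There is essentially no hard step — the result follows by specialization of Theorem~\ref{thm:pure-state-conversions-with-LU} — so the step I would be most careful about is purely the bookkeeping of the two supports: making sure the truncation index is $\min\{d^{m},\mrm{rank}(P)^{n}\}$ rather than merely $d^{m}$, and that when the seed has more Schmidt components than the target we are genuinely keeping the $d^{m}$ \emph{largest}, which is what "$F(R^{\downarrow},T^{\downarrow})$" encodes.
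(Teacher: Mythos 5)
Your proposal is correct and follows essentially the same route as the paper, which simply remarks that the proof is identical to the dilution case by symmetry of the fidelity: apply Theorem~\ref{thm:pure-state-conversions-with-LU}, use the flatness of $\pi_{d}^{\otimes m}$ under sorting, and read off the $(k,1/2)$-norm. Your explicit bookkeeping of the truncation index $\min\{d^{m},\mrm{rank}(P)^{n}\}$ is the one detail the paper leaves implicit, and you handle it correctly.
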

\begin{proof}
The proof is effectively identical to the dilution case by symmetry of the fidelity.
\end{proof}
In contrast to the local unitary case, the symmetry is broken when one considers local operations.
\begin{theorem}
For fixed $d,m,n$ the optimal fidelity for \edit{distillation} under local operations is given by
\begin{align*} 
& F_{LO}(\ket{\Phi^{+}_{d}}^{\otimes m},\ket{\psi}^{\otimes n}) \\
=& d^{-m} \left[ \min_{P' \in \cP^{\downarrow}(\Sigma)} - \sum_{i \in \cI} \alpha_{i}\sqrt{p'(i)} \right]^{2} \ ,
\end{align*}
where $\cP^{\downarrow}(\Sigma)$ is the set of decreasing distributions as defined in Section \ref{sec:background},  $\cI \equiv [\lceil \rank(P)^{n}/d^{m} \rceil]$, and $\alpha_{i} := \sum_{j \in [(i-1)d^{m}: \min\{i\cdot d^{m},\rank(P)^{n}\}]} \sqrt{p^{\downarrow}_{n}(i)}$. Note the minimization is a convex optimization program.
\end{theorem}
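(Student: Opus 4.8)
The plan is to specialize Theorem~\ref{thm:pure-state-conversion-via-LO} to the i.i.d.\ data and then use the uniformity of the target's Schmidt spectrum to collapse the optimization to one real variable per ``block'' of $d^{m}$ Schmidt coefficients. In Theorem~\ref{thm:pure-state-conversion-via-LO} the target here is $\ket{\Phi^{+}_{d}}^{\otimes m}$, whose Schmidt distribution $R$ is uniform on $d^{m}$ outcomes (each weight $d^{-m}$), and the seed is $\ket{\psi}^{\otimes n}$, whose Schmidt distribution is $T=P^{\otimes n}$ with decreasing rearrangement $p^{\downarrow}_{n}$ having $\rank(P)^{n}$ nonzero entries. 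Hence
\[
F_{LO}\bigl(\ket{\Phi^{+}_{d}}^{\otimes m},\ket{\psi}^{\otimes n}\bigr)=\max_{P'\in\cP(\Sigma)} F\bigl((R\otimes P')^{\downarrow},T^{\downarrow}_{\mathrm{embed}}\bigr),
\]
with $|\Sigma|\le d^{m}\rank(P)^{n}$. Since $(R\otimes P')^{\downarrow}$ depends only on the multiset of entries of $P'$, permuting $P'$ leaves it unchanged, so I may restrict to $P'\in\cP^{\downarrow}(\Sigma)$ from the outset; this is the same reduction already used for dilution.

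The key structural fact is the block form of $R\otimes P'$ when $R$ is uniform: for $P'$ sorted, the entries of $R\otimes P'$ are the values $d^{-m}p'(k)$ each with multiplicity $d^{m}$, so $(R\otimes P')^{\downarrow}(i)=d^{-m}p'\bigl(\lceil i/d^{m}\rceil\bigr)$. I would expand the classical fidelity as a squared Bhattacharyya coefficient (Item~5 of Proposition~\ref{prop:fidel-props}); because $T^{\downarrow}_{\mathrm{embed}}$ vanishes beyond its first $\rank(P)^{n}$ entries, only $i\in[\rank(P)^{n}]$ contribute, giving
\[
F\bigl((R\otimes P')^{\downarrow},T^{\downarrow}_{\mathrm{embed}}\bigr)=d^{-m}\Bigl(\sum_{i=1}^{\rank(P)^{n}}\sqrt{p'\bigl(\lceil i/d^{m}\rceil\bigr)\,p^{\downarrow}_{n}(i)}\Bigr)^{2}.
\]
Re-indexing the inner sum by the block label $k=\lceil i/d^{m}\rceil$, which runs over $\cI=[\lceil \rank(P)^{n}/d^{m}\rceil]$, and pulling $\sqrt{p'(k)}$ out of each block produces exactly the coefficients $\alpha_{k}$ of the statement (sums of $\sqrt{p^{\downarrow}_{n}}$ over the $k$-th length-$d^{m}$ window, the last window truncated at $\rank(P)^{n}$, whence the $\min\{k d^{m},\rank(P)^{n}\}$). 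Thus $F_{LO}=d^{-m}\bigl(\max_{P'\in\cP^{\downarrow}(\Sigma)}\sum_{k\in\cI}\alpha_{k}\sqrt{p'(k)}\bigr)^{2}$, and since the Bhattacharyya-type sum is non-negative, writing $\max f=-\min(-f)$ and squaring gives the displayed identity.

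For the convexity claim, the feasible set $\cP^{\downarrow}(\Sigma)$ is the intersection of the probability simplex with the half-spaces $\{p'(k)\ge p'(k+1)\}$, hence a convex polytope, and the objective $P'\mapsto -\sum_{k\in\cI}\alpha_{k}\sqrt{p'(k)}$ is a non-negative combination ($\alpha_{k}\ge0$) of the convex maps $P'\mapsto-\sqrt{p'(k)}$, each a convex function composed with a linear functional; so the minimization is convex. The only delicate point I anticipate is bookkeeping for the final, possibly incomplete block: $\rank(P)^{n}$ need not be divisible by $d^{m}$, so $\cI$ must run up to $\lceil\rank(P)^{n}/d^{m}\rceil$ and the last $\alpha_{k}$ is a truncated window; one should also note that the $|\Sigma|$ permitted by Theorem~\ref{thm:pure-state-conversion-via-LO} exceeds $|\cI|$, so optimizing a $P'$ supported on $\cI$ is unconstrained (and mass placed outside $\cI$ only wastes probability since $\alpha_{k}=0$ there). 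Everything else is a direct substitution together with the permutation-invariance argument already established.
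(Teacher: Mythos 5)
Your proposal is correct and follows essentially the same route as the paper: specialize Theorem \ref{thm:pure-state-conversion-via-LO} with target distribution $\pi_d^{\otimes m}$ and seed $P^{\otimes n}$, use the uniformity of the target spectrum to write $(R\otimes P')^{\downarrow}(i)=d^{-m}p'(\lceil i/d^{m}\rceil)$ in blocks of size $d^{m}$, regroup the Bhattacharyya sum into the coefficients $\alpha_k$, and convert $\max$ to $-\min$. Your convexity argument (non-negative combination of $-\sqrt{\cdot}$ composed with coordinate projections over the polytope $\cP^{\downarrow}(\Sigma)$) is marginally cleaner than the paper's Hessian computation, and your handling of the truncated final block and of mass placed outside $\cI$ makes explicit what the paper leaves implicit, but these are presentational rather than substantive differences.
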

\begin{proof}
Yet again, we use the square root fidelity and then take the square at the end. Then, using Theorem \ref{thm:pure-state-conversion-via-LO}, we have
\begin{align*}
& F_{LO}((\Phi^{+}_{d})^{\otimes m},\psi^{\otimes n}) \\
=& \max_{P' \in \cP(\Sigma)} F((\pi_{d}^{\otimes m} \otimes P')^{\downarrow},(P^{\otimes n})^{\downarrow})  \\
=& \left[ \max_{P' \in \cP(\Sigma)} \sum_{i \in S} \sqrt{(\pi^{\otimes m}_{d} \otimes p')^{\downarrow}(i)}\sqrt{p_{n}^{\downarrow}(i)} \right]^{2} \ .
\end{align*} 
Next, note 
$$(\pi^{\otimes m}_{d} \otimes P')^{\downarrow} = d^{-m/2} \sum_{i' \in \Sigma}  p^{\downarrow}(i') \mbb{1}_{\mbb{C}^{d^{m}}} \ , $$
where we have just used that $\pi_{d}^{\otimes m}$ is invariant under ordering. It follows that if we let $\cI \equiv [\lceil \rank(P)^{n}/d^{m} \rceil]$, we can rewrite,
\begin{align*}
& F_{LO}((\Phi^{+}_{d})^{\otimes m},\psi^{\otimes n}) \\
=& d^{-m} \Bigg[ \max_{P' \in \cP(\Sigma)} \sum_{i \in \cI} \sqrt{(p')^{\downarrow}(i)} \\
& \hspace{1.5cm} \cdot \sum_{j \in [(i-1)d^{m}: \min\{i\cdot d^{m},\rank(P)^{n}\}]} \sqrt{p^{\downarrow}_{n}(i)} \Bigg]^{2} \ .
\end{align*}
Now first define $\alpha_{i} := \sum_{j \in [(i-1)d^{m}: \min\{i\cdot d^{m},\rank(P)^{n}\}]} \sqrt{p^{\downarrow}_{n}(i)}$ as these coefficients may be pre-computed. Second, note that the probability simplex restricted to descending distributions, $\cP^{\downarrow}(\Sigma)$ is itself convex as $r_{\lambda}^{\downarrow} := \lambda p^{\downarrow} + (1-\lambda)q^{\downarrow}$ satisfies 
\begin{align*} 
\lambda p^{\downarrow}(i) + (1-\lambda) q^{\downarrow}(i) \geq \lambda p^{\downarrow}(i+1) + (1-\lambda) q^{\downarrow}(i) \ ,
\end{align*}
for all $i \in [|r|]$. Thus we have, 
\begin{align*}
& F_{LO}(\ket{\Phi^{+}_{d}}^{\otimes m},\ket{\psi}^{\otimes n}) \\
=& \Bigg[ -d^{-m} \min_{P' \in \cP^{\downarrow}(\Sigma)} -\sum_{i \in \cI} \alpha_{i}\sqrt{p'(i)} \Bigg]^{2} \ .
\end{align*} 
The minimization is a convex optimization problem because if we consider $f(p'):= -\sum_{i}\alpha_{i}\sqrt{p'(i)}$, then its Hessian is $\nabla^{2} f = \sum_{i} [\alpha_{i}/4 p'(i)^{-3/2}] \dyad{i}$, which is positive semidefinite on the interior of the probability simplex (i.e. when $p'(i) > 0$ for all $i$). This completes the proof.
\end{proof}

\subsubsection{Two Qubit Setting}
We have now seen that even in the basic dilution and distillation setting, while we can determine convex optimization programs, we can't seem to get clean analytic results. In this section we consider an even more tractable setting to attempt to resolve this: many copy two-qubit seed and target states. We show in this setting under certain assumptions the local unitary strategy is optimal and lobby this to show in particular that the optimal fidelity of converting $n$ copies of $\ket{\phi}$ to $n$ copies $\ket{\psi}$ goes to zero as $n$ goes to infinity. We note that this setting is more manageable because we effectively only have to reason about Bernoulli distributions.
\begin{lemma}
Given Bernoulli distribution $P = p\dyad{0} + (1-p)\dyad{1}$, then $P^{\otimes n}$ is such that the sequence $x^{n}$ with $(n-k)$ zeros has probability $p^{n-k}(1-p)^{k}$. Moreover, there are $\binom{n}{k}$ sequences with probability $p^{k}(1-p)^{n-k}$ and the same for $p^{n-k}(1-p)^{k}$.
\end{lemma}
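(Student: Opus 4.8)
The plan is to unwind the definition of the $n$-fold tensor power of a diagonal (classical) state and then perform an elementary count. Writing $P$ in its diagonal form $P = \sum_{x \in \{0,1\}} P(x)\dyad{x}$ with $P(0) = p$ and $P(1) = 1-p$, the tensor power is $P^{\otimes n} = \sum_{x^n \in \{0,1\}^n} \big(\prod_{j \in [n]} P(x_j)\big)\dyad{x^n}$, where $x^n = x_1 \cdots x_n$ and $\ket{x^n} = \ket{x_1} \otimes \cdots \otimes \ket{x_n}$. Hence the probability assigned to the basis vector $\ket{x^n}$ is exactly the product $\prod_{j \in [n]} P(x_j)$.

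Next I would observe that this product depends on $x^n$ only through its Hamming weight. If $x^n$ has $(n-k)$ coordinates equal to $0$ and therefore $k$ coordinates equal to $1$, then $\prod_{j \in [n]} P(x_j) = p^{n-k}(1-p)^{k}$, since each $0$-coordinate contributes a factor $P(0) = p$ and each $1$-coordinate a factor $P(1) = 1-p$, and multiplication is commutative so the order of the coordinates is irrelevant. This is the first claim.

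For the counting statement, the number of strings in $\{0,1\}^n$ with exactly $j$ entries equal to $1$ is $\binom{n}{j}$, as such a string is determined by choosing which $j$ of the $n$ positions carry a $1$. Taking $j = n-k$ gives $\binom{n}{n-k} = \binom{n}{k}$ strings whose probability is $p^{k}(1-p)^{n-k}$, and taking $j = k$ gives $\binom{n}{k}$ strings whose probability is $p^{n-k}(1-p)^{k}$; the two multiplicities coincide by the symmetry $\binom{n}{k} = \binom{n}{n-k}$ of binomial coefficients. This establishes both counting claims.

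There is no genuine obstacle here: the lemma is a direct consequence of the definition of a product distribution together with the standard binomial count. The only point requiring a word of care is keeping track of the bijection between ``number of zeros $= n-k$'' and ``number of ones $= k$,'' so that the two probability values $p^{k}(1-p)^{n-k}$ and $p^{n-k}(1-p)^{k}$ are each matched with the correct multiplicity $\binom{n}{k}$.
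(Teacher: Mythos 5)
Your proposal is correct and follows essentially the same route as the paper's own (very terse) proof: the probability claim is a direct unwinding of the product distribution, and the multiplicity claim is the standard binomial count of strings with a fixed number of zeros, with the symmetry $\binom{n}{k}=\binom{n}{n-k}$ matching the two probability values to the same multiplicity. Your write-up simply supplies the details the paper declares straightforward.
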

\begin{proof}
The claim that $x^{n}$ with $(n-k)$ zeros has probability $p^{n-k}(1-p)^{k}$ is straightforward. The second point actually just follows from the fact there are $\binom{n}{k}$ sequences with $k$ zeros, which could be proven by induction in a straightforward manner.
\end{proof}
We can now use the above lemma along with Theorem \ref{thm:pure-state-conversions-with-LU} to get the optimal LU fidelity as a function of the number of copies $n$.
\begin{corollary}\label{corr:two-qubit-LU-F}
Consider entangled states $\ket{\psi},\ket{\phi} \in \mbb{C}^{2} \otimes \mbb{C}^{2}$. Then,
\begin{align*} 
    & F_{LU}(\psi^{\otimes n},\phi^{\otimes n}) \\
    & \hspace{0.5cm} = \sum_{k \in [n]} \binom{n}{k} (pq)^{(n-k)/2} ((1-p)(1-q))^{k/2} \ .
\end{align*}
\end{corollary}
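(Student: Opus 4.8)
\emph{Proof proposal.} The plan is to reduce the statement to a single Bhattacharyya-coefficient computation via Theorem~\ref{thm:pure-state-conversions-with-LU}, and then read the answer off the rigid ``block'' structure of tensor powers of a Bernoulli distribution recorded in the preceding lemma. By Proposition~\ref{prop:local-unitaries} a two-qubit entangled pure state is fixed up to local unitaries by its ordered squared Schmidt coefficients, so I would write $\psi,\phi$ with Schmidt-coefficient distributions $P=p\dyad{0}+(1-p)\dyad{1}$ and $Q=q\dyad{0}+(1-q)\dyad{1}$, choosing $p,q$ to be the \emph{larger} of the two squared Schmidt coefficients so that $p\ge 1/2$ and $q\ge 1/2$. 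Then $\psi^{\otimes n}$ (resp.\ $\phi^{\otimes n}$) has Schmidt-coefficient distribution $P^{\otimes n}$ (resp.\ $Q^{\otimes n}$), so Theorem~\ref{thm:pure-state-conversions-with-LU} gives $F_{\mathrm{LU}}(\psi^{\otimes n},\phi^{\otimes n})=F\big((P^{\otimes n})^{\downarrow},(Q^{\otimes n})^{\downarrow}\big)$, and Item~5 of Proposition~\ref{prop:fidel-props} turns the right-hand side into the (squared) Bhattacharyya coefficient $\big(\sum_i\sqrt{(P^{\otimes n})^{\downarrow}(i)\,(Q^{\otimes n})^{\downarrow}(i)}\big)^2$ of the two ordered vectors.

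To evaluate the sum I would apply the lemma. Since $p\ge 1/2$, the value $p^{\,n-k}(1-p)^{k}$ is non-increasing in $k$, so $(P^{\otimes n})^{\downarrow}$ is a concatenation of constant runs: for $k=0,1,\dots,n$ a run of $\binom{n}{k}$ entries all equal to $p^{\,n-k}(1-p)^{k}$, with block $0$ first and block $n$ last. The identical description holds for $(Q^{\otimes n})^{\downarrow}$ with $q$ in place of $p$. The decisive observation is that the two run-length profiles $\big(\binom{n}{0},\binom{n}{1},\dots,\binom{n}{n}\big)$ coincide and occur in the same order, so coordinate $i$ lies in block $k$ of $(P^{\otimes n})^{\downarrow}$ exactly when it lies in block $k$ of $(Q^{\otimes n})^{\downarrow}$. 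Summing the pairwise products block by block therefore collapses the sum over the $2^{n}$ coordinates to a sum over the $n+1$ blocks,
\begin{align*}
& \sum_i \sqrt{(P^{\otimes n})^{\downarrow}(i)\,(Q^{\otimes n})^{\downarrow}(i)} \\
& \quad = \sum_{k=0}^{n} \binom{n}{k}\,(pq)^{(n-k)/2}\big((1-p)(1-q)\big)^{k/2} \ ,
\end{align*}
which is the claimed expression (one squares it if one wants $F_{\mathrm{LU}}$ rather than its square root).

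I do not expect a genuinely hard step here: the content is carried by Theorem~\ref{thm:pure-state-conversions-with-LU} together with the explicit description of $P^{\otimes n}$ in the lemma. The one point that must be handled with care — and the natural place for an error to creep in — is the alignment argument: reordering $P^{\otimes n}$ and $Q^{\otimes n}$ decreasingly permutes their blocks \emph{in the same way}, which is precisely what the normalization $p,q\ge 1/2$ guarantees and precisely what licenses evaluating the Bhattacharyya coefficient block by block rather than tracking an explicit pairing of individual bit-strings between the two sorted lists. (If one instead took $p\ge 1/2$ but $q<1/2$ the two sorted orders would run opposite, and the resulting formula would be the one obtained from the displayed one by $k\mapsto n-k$; this is why $p,q$ must be chosen consistently.)
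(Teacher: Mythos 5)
Your proposal is correct and follows essentially the same route as the paper's proof: reduce to Bernoulli distributions via Theorem~\ref{thm:pure-state-conversions-with-LU}, use the $\binom{n}{k}$ block structure of $P^{\otimes n}$ from the preceding lemma, and collapse the Bhattacharyya sum block by block under the normalization $p,q\ge 1/2$. You are in fact more explicit than the paper about why the two sorted orders align (and you correctly note that the displayed expression is the Bhattacharyya coefficient, so the fidelity should carry an outer square), but these are refinements of the same argument, not a different one.
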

\begin{proof}
By Theorem \ref{thm:pure-state-conversions-with-LU} we can reduce to the Bernoulli distributions from the Schmidt coefficients, $\ket{\psi}^{\otimes n} \mapsto P^{\otimes n}$, $\ket{\phi}^{\otimes n} \mapsto Q^{\otimes n}$. Since these are Bernoulli distributions, if we assume without loss of generality $p \geq (1-p)$, we can order the probabilities simply by the exponent, e.g. $p^{j-k}(1-p)^{k} \geq p^{j-k-k'}(1-p)^{k+k'}$ for any $0 \leq k' \leq j-k$. Moreover, the cardinality of each set of sequences will be the same for both $P^{\otimes n}$ and $Q^{\otimes n}$ because $\ket{\psi},\ket{\phi}$ are only entangled if their Schmidt rank is two. Therefore,
\begin{equation}\label{eq:Bernoulli-expansion}
\begin{aligned}
    & F((P^{\otimes n})^{\downarrow},(Q^{\otimes n})^{\downarrow}) \\
    =& \sum_{k \in [n]} \binom{n}{k} (pq)^{(n-k)/2} ((1-p)(1-q))^{k/2} \
\end{aligned}
\end{equation}
where the sum is over the number of zeros in the string, the cardinality was proven in the previous lemma, and the last term is just a re-writing of $\sqrt{p^{n-k}(1-p)^{k}}\sqrt{q^{n-k}(1-q)^{k}}$.
\end{proof}
 We note it is straightforward to generalize the above result to the case where you have the number of states differs between the seed and the target, but the form would be ugly as one would need to count how many sequences of a given probability there are and keep track of this in the sum. Indeed at this point the problem is elaborate enough that there is no advantage with dealing with two-qubit states as it's a question of the type classes \cite{Cover-2006}. We state this as a remark.
\begin{remark}
Consider states $\ket{\psi},\ket{\phi}$ respectively with ordered probability distributions corresponding to their Schmidt coefficients, $P$ and $Q$ respectively. $F_{LU}(\ket{\psi}^{\otimes n},\ket{\phi}^{\otimes m})$ can be computed. This is because the probability of a given sequence drawn in i.i.d. form from a distribution has a closed form \cite[Theorem 11.1.2]{Cover-2006}. It follows that as long as one determines the type classes exactly and takes into account that the sizes of the type classes may differ between $P$ and $Q$, the computation is possible, albeit tedious.
\end{remark}

Rather than dealing with the computational nightmare of generalizing beyond two qubit states, we now show that the term in Corollary \ref{corr:two-qubit-LU-F} always goes to zero as $n$ goes to infinity.
\begin{proposition}
Consider entangled states $\ket{\psi},\ket{\phi} \in \mbb{C}^{2} \otimes \mbb{C}^{2}$.
$$ \lim_{n \to \infty} F_{LU}(\ket{\phi}^{\otimes n},\ket{\psi}^{\otimes n}) = 0 \ . $$
\end{proposition}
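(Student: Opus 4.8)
The plan is to read the answer off the closed form already established in Corollary~\ref{corr:two-qubit-LU-F}. Let $p,q \in [\tfrac{1}{2},1)$ denote the squares of the larger Schmidt coefficients of $\ket{\phi}$ and $\ket{\psi}$ respectively, so that $P = p\dyad{0}+(1-p)\dyad{1}$ and $Q = q\dyad{0}+(1-q)\dyad{1}$ are the Bernoulli distributions associated to the two states under the bijection of Proposition~\ref{prop:local-unitaries}; both states being entangled forces $p,q<1$. With the sum ranging over $k\in\{0,1,\dots,n\}$, Corollary~\ref{corr:two-qubit-LU-F} gives
$$ F_{LU}(\ket{\phi}^{\otimes n},\ket{\psi}^{\otimes n}) = \sum_{k=0}^{n}\binom{n}{k}\bigl(\sqrt{pq}\bigr)^{n-k}\bigl(\sqrt{(1-p)(1-q)}\bigr)^{k} = \Bigl(\sqrt{pq}+\sqrt{(1-p)(1-q)}\Bigr)^{n}, $$
where the last equality is the binomial theorem. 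So the whole statement reduces to showing that the base $c := \sqrt{pq}+\sqrt{(1-p)(1-q)}$ is strictly less than one.

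To see that, I would observe that $c$ is precisely the Bhattacharyya coefficient $\mrm{BC}(P,Q)$, so by Item~5 of Proposition~\ref{prop:fidel-props} one has $c^{2}=\mrm{BC}(P,Q)^{2}=F(P,Q)$, and by Item~1 of the same proposition $F(P,Q)\le 1$ with equality if and only if $P=Q$, i.e.\ $p=q$. (Equivalently, this is immediate from the identity $1-c^{2}=\bigl(\sqrt{p(1-q)}-\sqrt{(1-p)q}\bigr)^{2}$, which is strictly positive whenever $p\ne q$.) Since $\ket{\phi}$ and $\ket{\psi}$ are assumed inequivalent under local unitaries, which under the normalization $p,q\ge\tfrac12$ means $p\ne q$, we get $0\le c<1$, and hence $F_{LU}(\ket{\phi}^{\otimes n},\ket{\psi}^{\otimes n})=c^{n}\to 0$ as $n\to\infty$.

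There is no substantive obstacle: the proof is essentially a one-line consequence of Corollary~\ref{corr:two-qubit-LU-F} together with the binomial theorem and Cauchy--Schwarz. The only points meriting a sentence of care are (i) fixing the representatives so that both Bernoulli biases lie in $[\tfrac12,1)$, which is what makes the sorted distributions $(P^{\otimes n})^{\downarrow}$ and $(Q^{\otimes n})^{\downarrow}$ line up their $k$-ones blocks and hence makes the binomial identity exact, and (ii) noting that the statement tacitly assumes $\ket{\phi}$ and $\ket{\psi}$ are not related by a local unitary, since otherwise $c=1$ and the fidelity is identically one.
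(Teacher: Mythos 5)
Your proof is correct, and it takes a genuinely different (and tighter) route than the paper's. You collapse the sum in Corollary~\ref{corr:two-qubit-LU-F} via the binomial theorem into the closed form $c^{n}$ with $c=\sqrt{pq}+\sqrt{(1-p)(1-q)}=\mrm{BC}(P,Q)$, and then invoke $\mrm{BC}(P,Q)<1$ for $P\neq Q$; this yields the exact exponential decay rate in one line. The paper instead keeps the sum and bounds each term separately by $\bigl(ne/k\bigr)^{k}\alpha^{n-k}[(1-p)^2-\ve(1-p)]^{k/2}$ with $\alpha=\sqrt{pq}<1$, concluding that each term is $O(\mrm{poly}(n))\cdot O(\exp(-n))$ and hence the sum vanishes. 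Your approach buys two things. First, rigor: term-by-term decay does not by itself control a sum whose number of terms grows with $n$, and the paper's bound does not uniformly handle terms with $k$ proportional to $n$ (for $k\sim n/2$ the factor $(ne/k)^{k}$ is itself exponentially large), so the paper's final step is at best incomplete --- your binomial-theorem identity sidesteps this entirely. Second, your caveat (ii) is a genuine catch: when $p=q$ (the states are LU-equivalent) one has $c=1$ and the fidelity is identically $1$, so the proposition as stated requires the tacit hypothesis that the Schmidt coefficients differ; the paper's proof does not exclude this case and would (incorrectly) conclude decay there as well. One cosmetic note: the fidelity is the \emph{square} of the Bhattacharyya coefficient, so strictly the closed form should be $c^{2n}$ rather than $c^{n}$ (the paper's Corollary~\ref{corr:two-qubit-LU-F} appears to drop the outer square), but this does not affect the limit.
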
 
\begin{proof}
 Let the probability distributions corresponding to their Schmidt coefficients be parameterized by $p \edit{\geq 1/2}$ and $\edit{1/2 \leq} q = p + \ve$ where $\ve \in [-1/2,1/2]$. \edit{Therefore, using Corollary \ref{corr:two-qubit-LU-F}, we have}
\begin{equation}\label{eq:Bernoulli-Schmidt-parameterization}
\begin{aligned}
& F_{LU}(\ket{\psi}^{\otimes n},\ket{\phi}^{\otimes n}) \\
=& \sum_{k \in [n]} \binom{n}{k} (p^{2} + \edit{p}\ve)^{(n-k)/2} \\
& \hspace{2cm} \cdot [(1-p)^{2} -\ve(1-p)]^{k/2} \ .
\end{aligned}
\end{equation}
Now note $p^{2} + \edit{p}\ve \edit{= p q} < 1$ as otherwise \edit{both states would be product}. Define $\alpha := (p^{2}+\edit{p}\ve)^{1/2} < 1$. Then we have
\begin{align*}
    & \binom{n}{k} (p^{2} + \edit{p}\ve)^{(n-k)/2}  \cdot [(1-p)^{2} -\ve(1-p)]^{k/2} \\
    \leq & \left(\frac{n \cdot e}{k} \right)^{k} \alpha^{n-k} [(1-p)^{2} -\ve(1-p)]^{k/2} \\
    =& \left(\frac{e}{k} \alpha^{-1}\right)^{k}[(1-p)^{2} -\ve(1-p)]^{k/2} n^{k} \cdot \alpha^{n} \\
    =& O(\mrm{poly}(n)) O(\mrm{exp}(-n)) \\
    \to & 0 \ ,
\end{align*}
where \edit{the inequality uses a standard} upper bound on the binomial coefficient, in the first equality we have grouped terms by scaling. \edit{The second equality uses asymptotic notation, where we remind the reader $g(n) = O(f(n))$ means for sufficiently large $n$, $g(n) \leq C f(n)$ for some constant $C$ and $\mrm{poly}(n)$ (resp.~$\mrm{exp}(n)$) denote the sets of functions polynomial (resp.~exponential) in integer $n$. To make this conversion, we} have used that \edit{everything but $\alpha^{n}$} is a polynomial in $n$ and that $\alpha < 1$, so $\alpha^{n}$ scales inverse exponentially in $n$. The limiting factor is then because an inverse exponential times a polynomial goes to zero. We also remark that the term where $k = n$ will also go to zero as $[(1-p)^{2} - \ve(1-p)]^{k/2}$ will go to zero as $k$ goes to infinity as its magnitude will be bounded by 1. 

Therefore, each term in the sum goes to zero as $n$ goes to infinity, so the entire sum will go to zero. This completes the proof.
\end{proof}
We note our proof tells us nothing about the scaling as a function of the difference between $p$ and $q$ nor does it tell us how fast it goes to zero compared to $F(P^{\otimes n},Q^{\otimes n})$. These are shown numerically for specific cases in Fig. \ref{fig:fidelity-under-LU-plots}.
\begin{figure}
\centering
   \includegraphics[width=\columnwidth]{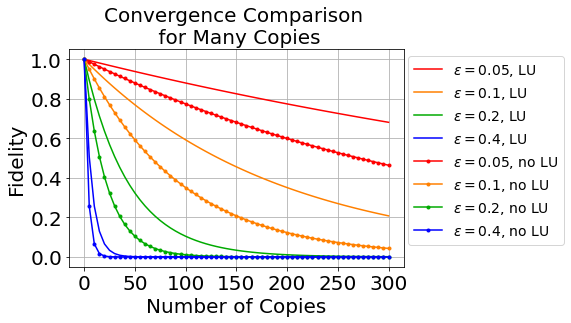}
\caption{\small{\edit{Comparison of $F(\ket{\psi}^{\otimes n},\ket{\phi}^{\otimes n})$ and $F_{LU}(\ket{\psi}^{\otimes n},\ket{\phi}^{\otimes n})$ as a function of how different the Schmidt coefficients are and as a function of the number of copies. Here we consider many copies of $\ket{\psi}$ $=$ $\sqrt{0.55}\ket{0}\ket{0}$ $+$ $\sqrt{0.45}\ket{1}\ket{1}$ being converted into the same number of copies of $\ket{\phi}$ $=$ $\sqrt{q}\ket{0}\ket{0}$ $+$ $\sqrt{1-q}\ket{1}\ket{1}$ where $q := p + \varepsilon$ for choices of $\ve$ denoted in the legend. Dotted lines denote when there are no local unitaries applied to re-order the Schmidt coefficients where as straight lines are the case where the local unitary strategy is applied.}}}
\label{fig:fidelity-under-LU-plots}
\end{figure}

It is then natural to ask if what we have seen so far is something special to local unitaries. We show that under sufficient conditions, just like in the single copy case, when two-qubit seed states are involved, local unitary strategies are optimal.
\begin{theorem}
Let $\ket{\psi} \in \mbb{C}^{2} \otimes \mbb{C}^{2}$ and the target state be $\ket{\psi}^{\otimes n}$. Let the seed state $\ket{\phi}$ satisfy $\SR(\ket{\phi}) \leq n\SR(\ket{\psi})$. Then the optimal local operations strategy is the optimal local unitary strategy.
\end{theorem}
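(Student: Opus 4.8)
The plan is to carry the statement over to the probability-distribution picture and then run a ``mass-transfer'' monotonicity argument, mirroring the proof of the earlier single-copy two-qubit proposition but exploiting the highly degenerate Schmidt spectrum of a tensor power of a qubit state. If $\ket\psi$ is a product state the claim is trivial, so assume $\ket\psi$ is entangled; then up to local unitaries its Schmidt distribution is $\mathrm{Bern}(p)=(p,1-p)$ with $\tfrac12\le p<1$, and under the bijection of this section $\ket\psi^{\otimes n}$ corresponds to $P:=\mathrm{Bern}(p)^{\otimes n}$. Write $Q\in\cP(m)$ for the Schmidt distribution of $\ket\phi$, where $m:=\SR(\ket\phi)\le n\,\SR(\ket\psi)=2n$. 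By Theorem~\ref{thm:pure-state-conversions-with-LU}, $F_{LU}(\ket\psi^{\otimes n},\ket\phi)=F(P^\downarrow,Q^\downarrow)$, and by Theorem~\ref{thm:pure-state-conversion-via-LO}, $F_{LO}(\ket\psi^{\otimes n},\ket\phi)=\max_{P'}F\big((P\otimes P')^\downarrow,Q^\downarrow_{\mathrm{embed}}\big)$. Writing both out in Bhattacharyya form (Item~5 of Proposition~\ref{prop:fidel-props}) and using that $Q^\downarrow$ is supported on $[m]$, it suffices to show that $\sum_{i=1}^{m}\sqrt{(P\otimes P')^\downarrow(i)\,Q^\downarrow(i)}\le\sum_{i=1}^{m}\sqrt{P^\downarrow(i)\,Q^\downarrow(i)}$ for every probability distribution $P'$, since the right-hand side is $\sqrt{F_{LU}}$ and the left-hand side optimized over $P'$ is $\sqrt{F_{LO}}$; equality then says the optimal $P'$ is a point mass, i.e.\ $F_{LO}=F_{LU}$.

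I would then record the structure of the top of the spectrum. By the preceding lemma on i.i.d.\ Bernoulli strings, $P^\downarrow$ is a ``staircase'': the value $p^{n-k}(1-p)^k$ occurs with multiplicity $\binom nk$, and since $p\ge 1-p$ these equal-value blocks appear in order $k=0,1,2,\dots$. Hence the first $1+n$ entries of $P^\downarrow$ are $p^n$ followed by $n$ copies of $p^{n-1}(1-p)$, and the first $2n$ entries involve only the three values $p^n$, $p^{n-1}(1-p)$, $p^{n-2}(1-p)^2$. This is where the hypothesis $m\le 2n$ is used: the only entries of $P^\downarrow$ ever matched against $Q^\downarrow$ lie in these few, highly degenerate leading blocks, so the comparison retains the essentially ``two-level'' character of the single-copy two-qubit proposition, with the equal-value blocks now playing the role of the two Bernoulli weights there.

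The heart of the proof is the mass-transfer computation. By compactness the maximum over $P'$ is attained, and one shows the maximizer must be a point mass by proving that moving mass from $p'(2),p'(3),\dots$ onto $p'(1)$ never decreases the top-$m$ Bhattacharyya sum, so that the optimum is attained at the endpoint $p'(1)=1$. Using the block structure above, along such a transfer every top-$m$ entry of $(P\otimes P')^\downarrow$ has the form (fixed block weight)$\,\times\,p'(\cdot)$, so the objective collapses to a one-parameter expression of the same shape as the function $g(p')$ appearing in the single-copy two-qubit proof, whose derivative can be signed using non-negativity of the square-root fidelity (Item~1 of Proposition~\ref{prop:fidel-props}). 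I expect the main obstacle to be the combinatorics of how the sorted order of $P\otimes P'$ reshuffles as mass is transferred onto $p'(1)$: one must check that the identity of the top-$m$ entries changes only in the benign ways handled in the single-copy case and that no crossing of two blocks produces a configuration falling outside that analysis. It is precisely the hypothesis $\SR(\ket\phi)\le n\,\SR(\ket\psi)$ that confines the relevant entries to the leading equal-value blocks and thereby keeps this bookkeeping tractable; once it is under control, the remaining monotonicity estimate is routine.
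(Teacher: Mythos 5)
Your reduction to the Bhattacharyya-sum comparison and your description of the block structure of $(\mathrm{Bern}(p)^{\otimes n})^{\downarrow}$ are fine, and you have correctly located the crux: whether transferring mass onto $p'(1)$ can only help once the sorted order of $P\otimes P'$ reshuffles. But this is not bookkeeping that the hypothesis $\SR(\ket{\phi})\le n\,\SR(\ket{\psi})$ renders ``routine''; it is exactly where the argument fails, because the monotonicity you defer to is false. Take $n=2$, $\ket{\psi}=\sqrt{0.9}\,\ket{00}+\sqrt{0.1}\,\ket{11}$ and $\ket{\phi}=\frac{1}{\sqrt{2}}(\ket{00}+\ket{11})$, so $P=(0.81,0.09,0.09,0.01)$, $Q=(0.5,0.5)$ and $\SR(\ket{\phi})=2\le 4$. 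Then $F(P^{\downarrow},Q^{\downarrow}_{\mathrm{embed}})=\bigl(\sqrt{0.405}+\sqrt{0.045}\bigr)^{2}=0.72$, whereas $P'=(0.5,0.5)$ gives $(P\otimes P')^{\downarrow}=(0.405,0.405,0.045,\dots)$ and hence $F\bigl((P\otimes P')^{\downarrow},Q^{\downarrow}_{\mathrm{embed}}\bigr)=\bigl(2\sqrt{0.2025}\bigr)^{2}=0.81$. Splitting $p'$ pushes a second copy of $p^{n}p'(1)$ into the top-$m$ window, where it is matched against the large second entry of $Q^{\downarrow}$; the Schmidt-rank hypothesis does nothing to forbid this, since $p^{n}p'(2)$ can exceed $p^{n-1}(1-p)$ whenever $p$ is close to $1$. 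So the optimal $P'$ need not be a point mass, and in this peaked-target regime the statement itself fails.

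You are in good company: the paper's own proof makes the same leap, asserting that every relevant entry of $(P^{\otimes n}\otimes P')^{\downarrow}$ only decreases when $p'(1)<1$, but its displayed inequality chains only compare products sharing the same $p'(j)$ or the same binomial weight and never exclude the crossing $p^{n}p'(2)>p^{n-1}(1-p)$ exhibited above. The single-copy two-qubit proposition whose endgame you plan to reuse also contains a sign error: $\frac{d}{dp'}\sqrt{p(1)(1-p')}$ is negative, so the function $g$ there is not monotone but has an interior maximum at $p'=q$ with $g(q)=\sqrt{p(1)}-\sqrt{qp(1)}-\sqrt{(1-q)p(2)}$, which is strictly positive in the same regime. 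Consequently the ``remaining monotonicity estimate'' cannot be carried out as stated; a correct version of the theorem would need an additional hypothesis relating how peaked $P^{\otimes n}$ is to $Q$, not merely the Schmidt-rank bound.
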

\begin{proof}
By Theorem \ref{thm:pure-state-conversion-via-LO},
\begin{align*}
 & F_{LO}(\ket{\psi}^{\otimes n},\ket{\phi}) \\
 =& \max_{P' \in \cP(\Sigma)} F((P^{\otimes n} \otimes P')^{\downarrow},Q^{\downarrow}) \\ 
 =& \sum_{i \in |Q|} \sqrt{Q^{\downarrow}(i)} \sqrt{ (P^{\otimes n} \otimes P')^{\downarrow}(i)} \ .
\end{align*}
We will show that $P'$ should be the delta distribution. If $p \neq 1/2$, $p'(1) < 1$, then for any $0 \leq k \leq n$, we have the inequalities
\begin{align*}
    p^{n-k}(1-p)^{k} >& p^{n-k}(1-p)^{k}p'(1)\\
    >& p^{n-k}(1-p)^{k} p'(2)
\end{align*}
and
\begin{align*}
    p^{n-k}(1-p)^{k} >& p^{n-k}(1-p)^{k}p'(1) \\
    >& p^{n-(k+1)}(1-p)^{k+1} p'(1) \\
    >& p^{n-(k+1)}(1-p)^{k+1} p'(2)
\end{align*}
As square root is a monotone, this holds when we take the square root. Note that by assumption $P^{\otimes n}$ has enough entries by itself for there to be one corresponding to each $q^{\downarrow}$. Therefore, given the inequalities above, it follows if $p'(1) \neq 1$, each term in the sum only decreases. Therefore, $p'(1)$ is optimal for every $n$ and $k$. Thus, when $p \neq 1/2$, the optimal value is obtain by $P'$ being a delta distribution, which means it's equivalent to the local unitary strategy. \\

Finally, if $p = 1/2$, then $p^{n-k}(1-p)^{k} = 2^{-n}$ for all $k$. Therefore, if $p'(1) < 1$, the inequalities simplifies for all $0 \leq k \leq n$:
\begin{align*}
 p^{n-k}(1-p)^{k}p'(1) =& p^{n-(k+1)}(1-p)^{k+1}p'(1)\\
>& p^{n-(k+1)}(1-p)^{k+1}p'(2)
\end{align*}
and
\begin{align*}
 p^{n-k}(1-p)^{k}p'(1)> p^{n-k}(1-p)^{k}p'(2) \ .
\end{align*}
Again because each $q$ term is paired up already, this means if $p'(1) \neq 1$, the value decreases. Therefore, we again conclude the optimal strategy is the LU strategy. This completes the proof.
\end{proof}
We note that a trivial example of why we need the Schmidt rank constraint in the previous theorem is our original example for the advantage of LO strategies: if $\ket{\phi}^{\otimes n + \ell}$ where $\ell \geq 1$, then there is a better LO strategy than an LU strategy. Finally, we note it immediately follows from these previous results that
\begin{corollary}
If $\ket{\phi},\ket{\psi} \in \mbb{C}^{2} \otimes \mbb{C}^{2}$ are both entangled, then
$$ \lim_{n \to \infty} F_{LO}(\ket{\psi}^{\otimes n},\ket{\phi}^{\otimes n}) = 0 \ . $$
\end{corollary}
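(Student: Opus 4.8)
The plan is to reduce $F_{LO}$ to $F_{LU}$ for these $n$-copy two-qubit instances using the theorem just proved, and then to quote the vanishing of $F_{LU}$. First I would record the structural observation: since $\ket{\phi}$ and $\ket{\psi}$ are both entangled vectors in $\mbb{C}^{2}\otimes\mbb{C}^{2}$, each has Schmidt rank exactly $2$, so $\ket{\phi}^{\otimes n}$ and $\ket{\psi}^{\otimes n}$ each have Schmidt rank $2^{n}$.

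Next I would invoke the preceding theorem with target state $\ket{\psi}^{\otimes n}$ (an $n$-fold copy of a two-qubit entangled state, exactly its hypothesis) and seed state $\ket{\phi}^{\otimes n}$. The Schmidt-rank condition needed there is satisfied: the seed $\ket{\phi}^{\otimes n}$ has Schmidt rank $2^{n}$, which does not exceed the number $2^{n}$ of Schmidt coefficients of the target $\ket{\psi}^{\otimes n}$, so each Schmidt coefficient of the seed can be ``paired'' with one of the target as in that proof (equivalently, $P^{\otimes n}$ has at least as many entries as $Q^{\otimes n}$). The theorem then gives, for every $n$,
$$ F_{LO}\!\left(\ket{\psi}^{\otimes n},\ket{\phi}^{\otimes n}\right) = F_{LU}\!\left(\ket{\psi}^{\otimes n},\ket{\phi}^{\otimes n}\right) . $$

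Finally I would take the limit. Using unitary invariance of the fidelity, $F_{LU}$ is unchanged under swapping the roles of seed and target, so $F_{LU}(\ket{\psi}^{\otimes n},\ket{\phi}^{\otimes n}) = F_{LU}(\ket{\phi}^{\otimes n},\ket{\psi}^{\otimes n})$, and the proposition showing that $F_{LU}(\ket{\phi}^{\otimes n},\ket{\psi}^{\otimes n})\to 0$ for entangled two-qubit states states precisely that this tends to $0$ as $n\to\infty$. Combining this with the displayed identity yields $\lim_{n\to\infty} F_{LO}(\ket{\psi}^{\otimes n},\ket{\phi}^{\otimes n}) = 0$, as claimed. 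The argument is a short chain of citations with no new estimates; the one place that calls for a moment's care is confirming that the Schmidt-rank hypothesis of the $n$-fold two-qubit theorem still holds once the seed is the tensor power $\ket{\phi}^{\otimes n}$ rather than a single copy, which is immediate here because $\SR(\ket{\phi}^{\otimes n}) = 2^{n} = \SR(\ket{\psi}^{\otimes n})$, and all the analytic work (an inverse exponential times a polynomial vanishes) is already contained in the proof of the quoted proposition.
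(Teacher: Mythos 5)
Your argument is correct and is essentially the paper's own: the corollary is presented there as an immediate consequence of exactly the two results you chain together (the $n$-fold two-qubit theorem reducing $F_{LO}$ to $F_{LU}$, followed by the proposition that $F_{LU}$ of the $n$-fold copies vanishes, with the symmetry of $F_{LU}$ handling the order of arguments). Your reading of the Schmidt-rank hypothesis as the condition actually used in that theorem's proof, namely $\SR(\ket{\phi}^{\otimes n}) \le \SR(\ket{\psi})^{n} = 2^{n}$ rather than the literal $n\,\SR(\ket{\psi}) = 2n$ (which would fail for $n \ge 3$), is the right one and is in fact required for the corollary to follow at all.
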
 

\section{On Generalized Embezzling Conversion}
\label{sec:cat-conv}\label{sec:catalyst-discussion}
\FloatBarrier

We now have established a rather robust theory of pure state transformations under local operations. It is natural to return to the topic of conversion of one state to another using an ancillary entanglement, i.e.\ catalytic transformations \edit{and their embezzling relaxation.} Of course, it is immediate from our results so far that we know the optimization program that determines the optimal pure \edit{embezzling} state as we state in the following proposition.
\begin{proposition}
For any Schmidt rank $d$, the optimal pure state \edit{embezzler} for state conversion $\ket{\phi}$ to $\ket{\psi}$ is the quantum state $\ket{\zeta} = \opvec(\sqrt{R})$ that is determined via the optimization
\begin{align}
	 \max_{R \in \cP(d),\mrm{P'}\in \cP(\Sigma)} F((P \otimes P')^{\downarrow},(Q \otimes R)^{\downarrow}) \ .
\end{align}
\end{proposition}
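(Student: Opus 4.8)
Here is a plan for proving the final Proposition.

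The plan is to recognize the statement as a direct instance of Theorem~\ref{thm:pure-state-conversion-via-LO}. An embezzler of Schmidt rank at most $d$ for the conversion $\ket{\phi}\to\ket{\psi}$ is a pure state $\ket{\zeta}$ with $\SR(\ket{\zeta})\le d$, and the figure of merit it achieves is the conversion fidelity $F_{\mathrm{LOSR}}(\ket{\psi}\otimes\ket{\zeta},\ket{\phi}\otimes\ket{\zeta})$; an optimal embezzler of Schmidt rank $d$ is one maximizing this over all such $\ket{\zeta}$. So the first step is to invoke the bijection of this section between local-unitary equivalence classes of pure states and the probability simplex: writing $\ket{\zeta}=\opvec(\sqrt{R})$ with $R\in\cP(d)$ is without loss of generality, and since Schmidt coefficients are multiplicative under tensor products, $\ket{\phi}\otimes\ket{\zeta}$ and $\ket{\psi}\otimes\ket{\zeta}$ have Schmidt distributions $Q\otimes R$ and $P\otimes R$ respectively, where $P,Q$ are the Schmidt distributions of $\ket{\psi},\ket{\phi}$.

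Next I would apply Theorem~\ref{thm:pure-state-conversion-via-LO} to the pair $(\ket{\psi}\otimes\ket{\zeta},\ket{\phi}\otimes\ket{\zeta})$, which is legitimate because the target $\ket{\psi}\otimes\ket{\zeta}$ is pure; this gives $F_{\mathrm{LOSR}}(\ket{\psi}\otimes\ket{\zeta},\ket{\phi}\otimes\ket{\zeta})=\max_{P'}F((P\otimes R\otimes P')^{\downarrow},(Q\otimes R)^{\downarrow}_{\mathrm{embed}})$, with $P'$ ranging over a finite alphabet of size at most $\SR(\ket{\phi}\otimes\ket{\zeta})\cdot\SR(\ket{\psi}\otimes\ket{\zeta})\le d^{2}\,\SR(\ket{\phi})\,\SR(\ket{\psi})$. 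Taking the maximum over $R\in\cP(d)$, and collecting on the target side the restored catalyst $R$ together with the environment slack $P'$ of the isometric dilation into the single variable written $P'$ in the statement, one reads off the displayed optimization; the $R$-component of any optimizer is then the Schmidt distribution of an optimal embezzler $\ket{\zeta}=\opvec(\sqrt{R})$, as claimed. The other ingredients (the $F_{\mathrm{LOSR}}=F_{\mathrm{LO}}$ reduction, and the fact that ancillary shared randomness buys nothing here) are already contained in Theorem~\ref{thm:pure-state-conversion-via-LO}.

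The only step that is not pure bookkeeping is the passage from $P\otimes R\otimes P'$ to $P\otimes P'$ on the target side. If one simply reads the statement's target-side ``$P'$'' as the distribution $R\otimes P'$, the identification with Theorem~\ref{thm:pure-state-conversion-via-LO} is verbatim and nothing more is needed. If instead one wants that variable to range freely over $\cP(\Sigma)$, one must check that enlarging it from ``product form $R\otimes(\cdot)$'' to ``arbitrary'' does not inflate the optimum: the $\le$ direction is immediate since every honest term is already of the relaxed form, while the $\ge$ direction would be argued via the conversion-distance identity displayed just after the proof of Theorem~\ref{thm:pure-state-conversion-via-LO}, together with data-processing of the fidelity (Item~3 of Proposition~\ref{prop:fidel-props}), in order to re-exhibit an explicit LOSR protocol on $\ket{\phi}\otimes\ket{\zeta}$ meeting the bound for $\ket{\zeta}=\opvec(\sqrt{R})$. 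I expect that reverse inequality to be the main --- and essentially only --- technical obstacle; everything else is a substitution into Theorem~\ref{thm:pure-state-conversion-via-LO} and a dimension count.
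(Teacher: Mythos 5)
Your toolkit is the right one (the Schmidt-coefficient/probability-simplex correspondence, multiplicativity of Schmidt coefficients under tensor products, and the reduction to Theorem \ref{thm:pure-state-conversion-via-LO}), but you have modelled a different task than the one the displayed optimization describes, and the discrepancy you flag at the end of your plan is a genuine gap rather than bookkeeping. The paper's proof treats the embezzler as an ancillary entangled resource that is \emph{consumed}: the seed is $\ket{\phi}\otimes\opvec(\sqrt{R})$, whose Schmidt distribution is $Q\otimes R$, and the target is $\ket{\psi}$ alone, with Schmidt distribution $P$. Substituting this pair directly into Theorem \ref{thm:pure-state-conversion-via-LO} yields $\max_{P'}F((P\otimes P')^{\downarrow},(Q\otimes R)^{\downarrow})$ verbatim, with $P'$ the environment variable of the isometric dilation and $\abs{\Sigma}$ growing with $d$ through $\SR(\ket{\phi}\otimes\ket{\zeta})$; maximizing over $R\in\cP(d)$ then gives the stated program with no further argument. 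You instead require the catalyst to be returned, taking the target to be $\ket{\psi}\otimes\ket{\zeta}$, which produces $\max_{P''}F((P\otimes R\otimes P'')^{\downarrow},(Q\otimes R)^{\downarrow})$ --- a different optimization.

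The step you defer --- showing that letting the target-side variable range freely over $\cP(\Sigma)$ rather than over distributions of the form $R\otimes P''$ does not inflate the optimum --- is not merely the main obstacle; it fails in general. The environment distribution $P'$ produced by Uhlmann's theorem in the proof of Theorem \ref{thm:pure-state-conversion-via-LO} is unconstrained and need not factor through the seed-side $R$ (the optimal local channels may, for instance, discard the ancilla entirely, leaving an environment unrelated to $R$), so the stated program is in general a strict relaxation of your returned-catalyst program. The returned-catalyst problem is what the paper takes up immediately after this proposition, and only under local unitaries, via $\max_{R}F((P\otimes R)^{\downarrow},(Q\otimes R)^{\downarrow})$. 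Read the proposition as the consumed-resource version and your argument collapses to a one-line substitution; read it your way and the identification with the displayed formula does not go through.
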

\begin{proof}
This immediately follows from the input being $\ket{\phi} \otimes \opvec(\sqrt{R})$ and then applying Theorem \ref{thm:pure-state-conversion-via-LO}. Note this means $|\Sigma|$ scales as function of $d$.
\end{proof}
However, as we have already addressed, even without a free variable for the \edit{embezzling state}, the optimization in Theorem \ref{thm:pure-state-conversion-via-LO} seems unmanageable directly. While in principle one could use the relaxation in Theorem \ref{thm:sdp-upper-bound} to obtain efficient upper bounds, it is less obvious how often these will be non-trivial given that $R$ is a free variable.

The next most natural setting would be that of \edit{approximate} catalytic state conversion under local unitaries, i.e.\ we consider transformations of the form
\begin{align}
	\ket{\phi}\ket{\zeta} \edit{\approx_{\ve}^{LU}} \ket{\psi}\ket{\zeta} \ ,
\end{align}
where $\ket{\zeta}$ is \edit{an embezzling state and $\approx_{\ve}^{LU}$ denotes reversible equivalence under local unitary transformation up to error $\ve$ as measured under fidelity.} This may be seen as a generalization of \edit{traditional} embezzlement where $\ket{\phi} = \ket{0}_{A}\ket{0}_{B}$ and $\ket{\zeta} = \ket{\mu(n)}$.\footnote{We refer the reader to Proposition \ref{prop:q-emb} if the notation has been forgotten.}

Now as noted in the background, embezzling is known to be in effect optimal for sufficiently small $\varepsilon$. It follows for sufficiently small error $\varepsilon > 0$, the strategy that embezzles out the seed state and then embezzles in the target state is roughly optimal, i.e.\
\begin{align}
 \ket{\phi}\ket{\mu(n)} \edit{\approx_{\ve}^{LU}} \ket{0}\ket{0}\ket{\mu(n)} \edit{\approx_{\ve}^{LU}} \ket{\psi}\ket{\mu(n)}
\end{align}
is effectively optimal \edit{where we remind the reader $\ket{\mu(n)}$ is the van Dam-Hayden embezzling family $n$ pertains to the Schmidt rank of the given state in the family (See Proposition \ref{prop:q-emb}).} Nonetheless, we may explore at what point this becomes necessary.

Using Theorem \ref{thm:pure-state-conversions-with-LU}, we know the optimal strategy is given by\footnote{We stress that by the correspondence of Schmidt coefficients to probability distributions as discussed at the start of the work, even without Theorem \ref{thm:pure-state-conversions-with-LU}, this would be a legitimate strategy, we simply wouldn't know analytically it was optimal. }
\begin{align*}
 \max_{R \in \cP(d)} F((P \otimes R)^{\downarrow},(Q \otimes R)^{\downarrow}) \ .
\end{align*}
Even in the case $P,Q,R \in \cP(2)$ this technically can't be solved using gradient methods as one has to sort the $p(1-r)$ and $(1-p)r$ terms of $p \otimes r$ and likewise for $q \otimes r$. Nonetheless, it is hopefully clear that $r \in [\min\{p,q\},\max\{p,q\}]$, as it is trying to make the distributions be more similar. Nonetheless, this issue will only grow in difficulty with the dimension and it is unclear how one would prove an ansatz is optimal in general. Therefore, we provide two-qubit examples which characterizes the general insights.
\begin{example}[Resource Gap Between van Dam-Hayden Embezzling State and Optimal Embezzler]\label{example:resource-gap} Consider Bernoulli distributions $P,Q,R$ parameterized by $p = 0.5, q=0.7$ and we leave $r$ unspecified for now. In other words, one of the states is the maximally entangled states and the other is, up to local unitaries, $\sqrt{0.7}\ket{00} + \sqrt{0.3}\ket{11}$. Therefore, depending on which way one runs the transformation, we are considering entanglement dilution or distillation with a catalytic resource. Without the resource, 
$$ F_{LO}(\ket{\psi},\ket{\phi}) = F(P^{\downarrow},Q^{\downarrow}) \approx 0.958  . $$ 
One can verify that the optimal choice of $r^{\star} \approx 0.6$ in this case. For this choice
\begin{align*}
F_{LU}(\ket{\psi}\ket{\zeta}, \ket{\phi}\ket{\zeta}) =& F((P\otimes R^{\star})^{\downarrow},(Q \otimes R^{\star})^{\downarrow}) \\
>& 0.979 \ .
\end{align*}
The first problem is that $0.979$ is not an acceptably high fidelity even by contemporary standards. Nonetheless, note that to get this state via embezzling (and ignoring that embezzling out the initial state introduces error), it would require generating $\ket{\mu(n)}$ where $n > m^{1/(1-0.979)} = 2\cdot 10^{14}$. That is, even to embezzle a two-qubit pure state would require generating an inconceivable amount of entanglement. For this reason, specially engineered \edit{embezzling states} seems a significant improvement up to any error that can be achieved. 

On the other hand, one might note that if we could generate $R$ where $r = 0.55$, then we may as well have just used this state to begin with as 
\begin{align*}
    F(P^{\downarrow},R^{\downarrow}) 
    =& 0.98989 \\
    >& F((P\otimes R^{\star})^{\downarrow}, (Q \otimes R^{\star})) \ .
\end{align*}
From a practical perspective we agree with this critique. Nonetheless, from a basic science perspective, if we are interested in local unitary conversions under \edit{embezzling states/catalysts}, then the above tells us there are better choices in general than \edit{van Dam-Hayden} embezzlement, although van Dam-Hayden embezzling has the special property of being universal and optimal for sufficiently small $\varepsilon$.
\end{example}

\begin{figure}
\centering
\begin{subfigure}[b]{0.95\columnwidth}
    \begin{center}
    \includegraphics[width=\columnwidth]{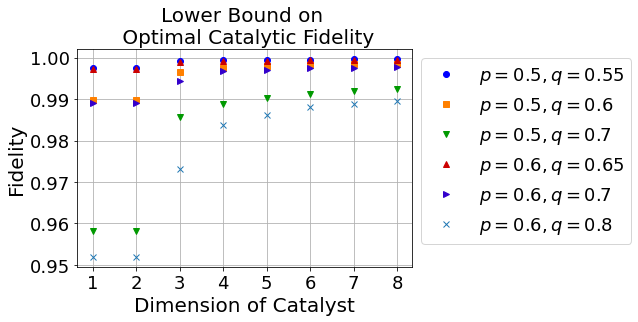}
    \end{center}
   \caption{}
\end{subfigure}
\par\bigskip	
\begin{subfigure}[b]{0.95\columnwidth}
   \begin{center}
    \includegraphics[width=\columnwidth]{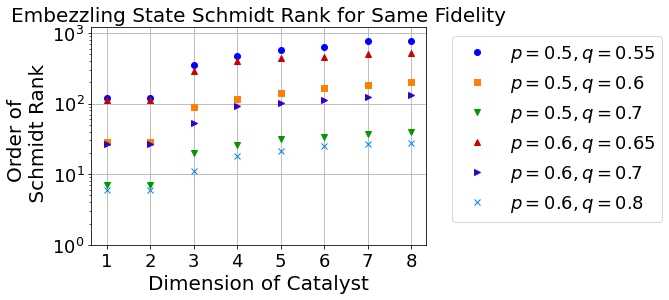}
    \end{center}
   \caption{}
\end{subfigure}
\caption{\small{\edit{Plots pertaining to the dimension scaling of embezzling states motivated by Example \ref{example:resource-gap}. (a) Depicts a lower bound on the achievable fidelity of converting one two-qubit entangled state to another under local unitaries using an \edit{embezzling state} with a given local dimension (equivalently, Schmidt rank) where the two-qubit states are parameterized by  probability distributions $\begin{bmatrix} p & 1- p \end{bmatrix}$ and $\begin{bmatrix} q & 1-q \end{bmatrix}$. These lower bounds were achieved via brute search as described in the main text. (b) We plot the order (i.e. the power of 10) of the Schmidt rank of the van Dam-Hayden embezzling state $\ket{\mu(n)}$ to obtain the same maximum fidelity. This is calculated using $2^{1/(1-\wt{F})}$ following Proposition \ref{prop:q-emb}. See the main text for further information on this calculation. All chosen embezzling states are provided in Appendix \ref{app:catalyst-figure-data} for verification by direct calculation.}}}
\label{fig:catalyst-plots}
\end{figure}

We close this consideration with two final remarks. First, if one picks two states that are more similar to begin with, then the scaling of the embezzling state will be even larger. Second, we have not presented how the fidelity for this example scales as the local dimension of $\ket{\zeta}$ grows. Both the dimension scaling and two states that are more similar are considered in \edit{Fig. \ref{fig:catalyst-plots}. There, in Fig. \ref{fig:catalyst-plots}(a), we find lower bounds on the optimal fidelity achieved for a given size of embezzling state. To do this, we searched over the discretized (ordered) probability simplex where the discretization was over five-thousandths (i.e. $0.005$ intervals in each entry) for dimensions up to $6$. For dimensions $7$ and $8$, due to the time it takes to optimize over such intervals, we would upper bound the search of the $(d+1)^{th}$ entry by the $d^{th}$ entry of the previous dimensions optimizer. This heuristic seems natural as the approximate ordered distributions in each dimension decrease entry-wise almost all the time (See Appendix \ref{app:catalyst-figure-data} which includes all data for the plot). In some low dimensional cases, we are able to verify our solutions are near-optimal by seeing that the optimizer is achieved over a discretization in only hundredths.

To compare to the carefully designed \edit{embezzling states} of Fig.~\ref{fig:catalyst-plots}(a), we consider the sufficient Schmidt rank of the van Dam-Hayden universal embezzling family according to Proposition \ref{prop:q-emb}. This is done via the following calculation. If a fidelity $\wt{F}$ is achieved using the specific embezzling state we found, then we are interested in a van Dam-Hayden universal embezzling state $\ket{\mu(n)}$ with Schmidt rank $n > m^{1/\ve}$ such that $1 - \ve \geq \wt{F}$. Thus, we are interested in $n > 2^{1/(1-\wt{F})}$ where we have used that in our case the state we are embezzling in and out has Schmidt rank $m = 2$. As one would expect, this is a large number so in Fig.~\ref{fig:catalyst-plots}(b), we merely plot the \textit{order} (power of ten) of the Schmidt rank.}

\section{On Extensions of the Theory}\label{sec:on-extensions}
As a final consideration, we discuss the application of our results beyond bipartite pure states. First we remark upon extensions to multipartite pure states. In this case the problem is that in establishing all of the results, we have used that local unitaries can take the Schmidt decomposition of the state to one of a canonical form. However, in the multipartite case, the Schmidt decomposition does not even exist in general \cite{Peres-1995a}. As such this argument immediately breaks down. Furthermore, in the proof of Theorem \ref{thm:pure-state-conversions-with-LU} we used Uhlmann's theorem, which requires partitioning the state into two pieces, one of which is the purification. Therefore, it seems no multipartite extension of this work holds.

Similarly, there are issues with approaching mixed states. One issue is to note that all relationships we have been able to establish have stemmed from the fidelity under local unitaries of pure states. Even in the case where local operations made a pure state no longer pure, we purified operations so that the states were pure. We simply cannot do this if we start with mixed states in both arguments of the fidelity. We also cannot purify the states as by data-processing, any optimization without tracing off the purifying space only gets us a lower bound. Moreover, this lower bound would require establishing results for tripartite systems, which returns to the issues with the multipartite pure state case. Therefore, we believe in effect these are the most general settings where these proof methods will be of use.

\edit{
\section*{Code Availability}
All code used to generate Figures \ref{fig:fidelity-under-LU-plots} and \ref{fig:catalyst-plots} may be found at \href{https://github.com/ChitambarLab/RevisitingPureStateTransformationswithZeroCommunication}{this Git repository} for transparency and scrutiny.

\section*{Acknowledgments}
IG acknowledges support of an Illinois Distinguished Fellowship during the time of this research. This work was supported by NSF Grant No.~2112890.}

\bibliography{References.bib}
\appendix

\section{Randomness Embezzling Proof and Discussion on Locality}\label{app:rand-embez-proof}
In this section we provide the proof of Proposition \ref{prop:classical-embezzling} and then briefly discuss how it differs from quantum embezzlement.
\begin{proof}
The proof is largely the same as for embezzlement of quantum states \cite{van-2003a}. Let $P = \sum_{i} p(i) \dyad{i}$. Define $W_{n}$ as $R_{n} \otimes P$ except with probabilities in decreasing order. Note 
$$ R_{n} \otimes P = \frac{1}{H_{n}} \sum_{i,j} \frac{p(i)}{j} \dyad{i} \otimes \dyad{j} \ , $$
so there exists a relabeling on $\{(i,j)\}$ that will take this to $W_{n}$. In particular, letting $f : [m] \times [n] \to [m \cdot n]$ be a bijection, we have $\ket{i}\ket{j} \to \ket{f(i,j)} \equiv \ket{i'}\ket{j'}$ such that $\left \{z_{f(i,j)} := \frac{p(i)}{jH_{n}} \right \}_{(i,j)}$ satisfy $z_{k} \geq z_{k+1}$ for all $k \in [m \cdot n]$. Therefore it suffices to approximate $W_{n}$, which means we want to bound the overlap of this with $R_{n} \otimes P$. 

For fixed $t$ and $i$, we let $$N^{t}_{i} := \left|\left\{(i,j) : \frac{p(i)}{jH_{n}} > \frac{1}{tH_{n}} \right \} \right|  \ . $$
The inequality may be manipulated to imply $1 \leq j < p(i)t$. It follows that $N^{t}_{i} < p(i)t$. From this we obtain $\sum_{i=1}^{m}N^{t}_{i} < \sum_{i=1}^{m} p(i)t < t$, where we have used $\sum_{i} p(i) = 1$. As $z_{1} \geq z_{2} \geq ...$, it follows $z_{j} \leq \frac{1}{jH_{n}}$ for all $1 \geq j \geq n$. We may restate this as for $1 \leq j \leq n$, there are at most $t'-1$ pairs $(i,j)$ such that $p(i)/(jH_{n}) > 1/(t'H_{n})$. Recalling $z_{1} \geq z_{2} \geq ...$, this means that $z_{1} < 1/H_{n}$ and that there is at most one pair $(i,j)$ pair such that $p(i)/(jH_{n})<1/(2H_{n})$, which, since $z_{1} \geq z_{2}$, means if such a pair exists, it is $z_{1}$. By applying this argument in effect recursively, we see that for $t'$, there are at most $t'-1$ $(i,j)$ pairs such that $p(i)/(jH_{n}) > 1/(t' H_{n})$ and since $z_{k} \geq z_{k+1}$, if all of these pairs exist, then it must be $z_{1},...,z_{t'-1}$. Therefore, $z_{j} \leq 1/(jH_{n})$ for all $1 \leq j \leq n$. We can now use this to bound the fidelity.
\begin{align*}
    F(R_{n} \otimes \dyad{0} , W_{n}) = & \left(\sum_{j=1}^{n} \sqrt{\frac{z_{j}}{jH_{n}}} \right)^{2}  \\
    \geq & \left( \sum_{j=1}^{n} \sqrt{z_{j}} \right)^{2}
    \geq \sum_{j = 1}^{n} z_{j},
\end{align*}
where in the equality we have used the definition of fidelity, in the second we used our established inequality, and in the third we have used $\sqrt{x} + \sqrt{y} \geq \sqrt{x+y}$ for $x,y \geq 0$ to pull the square root out around the sum and cancel with the square.

Now we want to lower bound this sum, which requires managing the $z_{j}$ terms. We consider $T_{n} = R_{n} \otimes \pi_{m}$ with probabilities $t(j)$ where $\pi_{m} := \frac{1}{m} \sum_{i = 1}^{m} \dyad{i}$. Now note that $z_{k} \geq t_{k}$ for all $k \in [m\cdot n]$, and this is independent of what the distribution $P$ is. We can then bound the relevant sum by the sum for $T_{n}$. It follows
\begin{align*}
    \sum_{j=1}^{n} t_{j} = \sum_{j=1}^{\lfloor n/m \rfloor} \sum_{i=1}^{m} \frac{1}{jH_{n}m} =&  \sum_{j=1}^{\lfloor n/m \rfloor} \frac{1}{jH_{n}} \\
    =& \frac{H_{\lfloor n /m \rfloor}}{H_{m}} \\
    \geq& \frac{\ln(n/m)}{\ln(n)} = 1 - \frac{\log(m)}{\log(n)} \ ,
\end{align*}
where the second inequality is using $H_{n} \geq \ln(n)$ and the final form is converting from $\ln$ to $\log$ in both the numerator and denominator so it cancels. Finally, leting $1 - \log(m)/\log(n) > 1 - \ve$ will result in $n > m^{1/\ve}$, which completes the proof.
\end{proof}

With the proof established, we expand upon the distinction between the entangled and classical distribution cases of embezzlement in terms of locality briefly mentioned in the main text. In the classical case, one party embezzles a distribution locally by themselves, whereas in the entangled case two parties act locally on a non-local distribution. Mathematically, this simply follows from the fact the $\opvec(\cdot)$ map and its inverse converts between bipartite states and a probability distribution. However, it is also physically interesting that these are the two cases that align as it is clear other variations are either classically or quantumly impossible as we now explain.

The first reasonable variation would be if there is a non-local classical case where two parties try and construct some joint distribution $p_{XY}$ using \edit{classical embezzler} $r_{X'Y'}$. It is easy to see that they cannot in general satisfy the decoupling condition that is satisfied in quantum embezzlement, i.e.\ they cannot satisfy $p_{XY} \otimes r_{X'Y'}$ in this setting. This is because without loss of generality the state will be of the form
\begin{align*} 
q_{XYX'Y'} =& \sum_{x,x',y,y'} q(x|x')q'(y|y')r(x,y) \\
& \hspace{2cm} \cdot \dyad{x,y,x',y'} \ .
\end{align*}
This form means that $X$ will be correlated to $X'$ and $Y$ to $Y'$ unless $q_{XY}$ may be generated non-locally without a seed state to correlate the two which means they are (up to the allowed error) independent, i.e. $q_{XY} \approx_{\ve} q_{X} \otimes q_{Y}$. In this sense, there cannot be a classical non-local equivalent of quantum embezzlement.

On the other hand, if one does not require the decoupling, then this is a task that is possible in the classical setting and is known as distributed source simulation, where the question is the minimal needed \textit{shared randomness} as the \textit{seed state} to generate the target state up to an (arbitrary) error \cite{Yu-2022a}. This was determined asymptotically in the classical case by Wyner \cite{Wyner-1975a}, extended to separable states by Hayashi \cite{Hayashi-2006a}, and recently generalized to the one-shot setting for separable states in \cite{George-2022c}. However, as in this setting variation there is no communication between the acting parties and the \edit{embezzler} acts as the seed state, it follows from Proposition \ref{prop:Hayden-Winter} that distributed source simulation cannot admit an entangled state equivalent. For these reasons, not only does the $\opvec$ bijection specify the correspondence of embezzlement in the classical and quantum setting, but deviating from it makes either a quantum or classical version impossible.

\section{Semidefinite Program Relaxation of Max Fidelity of Pure State Transformation Under LOSR}\label{app:SDP-relaxation}
In this section we prove Theorem \ref{thm:sdp-upper-bound}. We begin by establishing \eqref{eq:SDP-upper-bound} is true.
\begin{lemma}
Consider target state $\ket{\psi}$ and seed state $\ket{\phi}$. Let $\mrm{SR}(\psi) = d$ and $\mrm{SR}(\phi) = d'$. Define $A = \mbb{C}^{d}$, $B = \mbb{C}^{d \cdot d'}$. Then,
\begin{equation*}
    \begin{aligned}
    F_{\mrm{LOSR}}(\ket{\psi},\ket{\phi}) \leq \max & \; F(R,Q^{\downarrow}_{\text{embed}}) \\
    \mrm{s.t.} & \; \Tr_{B}[R] = P^{\downarrow} \\
    & \; R \in \cP^{\downarrow}(d^{2}\cdot d') \ ,
    \end{aligned}
\end{equation*}
where $P$ and $Q$ are the distributions defined by $\ket{\psi}$ and $\ket{\phi}$'s Schmidt coefficients respectively.
\end{lemma}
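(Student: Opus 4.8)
The plan is to start from Theorem~\ref{thm:pure-state-conversion-via-LO}, which gives $F_{\mrm{LOSR}}(\ket{\psi},\ket{\phi})=\max_{P'\in\cP(\Sigma)}F\bigl((P\otimes P')^\downarrow,Q^\downarrow_{\mrm{embed}}\bigr)$ with $|\Sigma|\le\SR(\phi)\SR(\psi)=d\,d'$, and to enlarge the feasible set of this optimisation to that of \eqref{eq:SDP-upper-bound}. The key observation is that for each admissible $P'$ the vector $(P\otimes P')^\downarrow$ is the decreasingly ordered eigenvalue list of the diagonal density operator $\mrm{diag}(P^\downarrow)\otimes\mrm{diag}(P'^\downarrow)$ on $A\otimes B=\mbb{C}^d\otimes\mbb{C}^{dd'}$ (padding $P'$ with zeros is harmless since $|\Sigma|\le dd'$), and this operator satisfies $\Tr_B[\mrm{diag}(P^\downarrow)\otimes\mrm{diag}(P'^\downarrow)]=\mrm{diag}(P^\downarrow)$. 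Equivalently, arranging the numbers $\{P^\downarrow(a)P'^\downarrow(c)\}$ into a $d\times dd'$ array yields a distribution $r(P')\in\cP(d^2d')$ with $\Tr_B[\mrm{diag}(r(P'))]=P^\downarrow$ and $r(P')^\downarrow=(P\otimes P')^\downarrow$. Hence $F_{\mrm{LOSR}}(\ket{\psi},\ket{\phi})\le\max\{F(r^\downarrow,Q^\downarrow_{\mrm{embed}}):r\in\cP(d^2d'),\ \Tr_B[\mrm{diag}(r)]=P^\downarrow\}$, which is the ``relaxed'' version of \eqref{eq:SDP-upper-bound}.

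This bound is exactly the ``relaxed isometric representation'' alluded to in the main text: the same Stinespring-dilation argument used in the proof of Theorem~\ref{thm:pure-state-conversion-via-LO} shows $F_{\mrm{LOSR}}$ is the maximum of $\bigl|\bra{\psi}\bra{\zeta}(U_1\otimes U_2)\ket{\phi}\ket{0}_{E_1}\ket{0}_{E_2}\bigr|^2$ over environment unitaries $U_1,U_2$ and ancilla $\ket{\zeta}$, in which the ``target-plus-ancilla'' vector $\ket{\psi}\otimes\ket{\zeta}$, read across the cut $A_{\mrm{out}}E_1\,|\,B_{\mrm{out}}E_2$, has reduced state on $A_{\mrm{out}}E_1$ a diagonal operator whose $A_{\mrm{out}}$-marginal has spectrum $P^\downarrow$; replacing $\ket{\psi}\otimes\ket{\zeta}$ by an arbitrary pure $\ket{\Theta}$ with that property while the seed-side vector $(U_1\otimes U_2)\ket{\phi}\ket{0}_{E_1}\ket{0}_{E_2}$ keeps Schmidt spectrum $Q$ across the same cut, and then using Uhlmann's theorem (Lemma~\ref{lem:Uhl-thm}) and Theorem~\ref{thm:pure-state-conversions-with-LU} to optimise over $U_1,U_2$, returns precisely $F(R,Q^\downarrow_{\mrm{embed}})$ with $R$ the sorted diagonal of that $A_{\mrm{out}}E_1$-marginal. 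Identifying $A_{\mrm{out}}\cong A$ and $E_1\cong B$ gives \eqref{eq:SDP-upper-bound}. The passage to the semidefinite program \eqref{eq:Reals-SDP-representation} is then routine bookkeeping: for classical (diagonal) arguments $F(R,Q^\downarrow_{\mrm{embed}})=\bigl(\sum_i\sqrt{r(i)\,q^\downarrow_{\mrm{embed}}(i)}\bigr)^2$, and $\sum_i\sqrt{r(i)q^\downarrow_{\mrm{embed}}(i)}=\max\{\sum_i x(i):x(i)^2\le r(i)q^\downarrow_{\mrm{embed}}(i)\}$ is encoded by the stated $2\times2$ Schur-complement positive-semidefiniteness condition, while $\Tr_B[\mrm{diag}(r)]=P^\downarrow$ and $r\in\cP^\downarrow$ are linear constraints.

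The main thing to get right is the tightness of the relaxation, in two respects. First, one must retain the diagonality of the $A_{\mrm{out}}E_1$-marginal; this is what lets the constraint be ``$\Tr_B[R]=P^\downarrow$'' for an actual distribution $R$ rather than merely ``$R$ is the spectrum of some state whose $A$-reduction has spectrum $P^\downarrow$.'' The looser, spectral constraint would permit a rank-one marginal and would overshoot $F_{\mrm{LOSR}}$, so the ``local Schmidt coefficients pertaining to the target'' must be encoded as genuine block sums, not just a spectrum. Second, and most delicate, is the interplay between the non-convex sorting map $(\cdot)^\downarrow$ and the linear block-sum constraint: paragraph~1 produces $\max\{F(r^\downarrow,Q^\downarrow_{\mrm{embed}}):\Tr_B[\mrm{diag}(r)]=P^\downarrow\}$ over \emph{all} orderings $r$ with the prescribed block sums, whereas \eqref{eq:SDP-upper-bound} restricts $R$ to be \emph{simultaneously} non-increasing and to have block sums $P^\downarrow$. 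One has to show these two maxima coincide; I would do this by a rearrangement argument, exploiting that $F(\cdot,Q^\downarrow_{\mrm{embed}})$ depends only on the largest $\SR(\phi)$ entries of its sorted argument, so that for any admissible $r$ one may redistribute mass within and across the $d$ blocks to reach a non-increasing $R$ with the same block sums and no smaller fidelity. Establishing this equivalence cleanly — rather than settling for the weaker bound with unsorted $r$ — is where the real work lies; everything else reduces to Theorems~\ref{thm:pure-state-conversions-with-LU} and~\ref{thm:pure-state-conversion-via-LO} together with the standard SDP form of the fidelity.
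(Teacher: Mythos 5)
You have put your finger on exactly the right issue, but the proof is incomplete at the decisive step: everything rests on the ``rearrangement argument'' of your final paragraph, which you describe without carrying out, and in fact no such argument exists because the two maxima you want to identify do not coincide. Your paragraph~1 is correct: the \emph{unsorted} vector $r(P')=P^{\downarrow}\otimes P'^{\downarrow}$ satisfies the block-sum constraint and sorts to $(P\otimes P')^{\downarrow}$, so $F_{\mrm{LOSR}}(\ket{\psi},\ket{\phi})\leq\max\{F(r^{\downarrow},Q^{\downarrow}_{\text{embed}}):r\in\cP(d^{2}d'),\ \Tr_{B}[\mrm{diag}(r)]=P^{\downarrow}\}$. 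But sorting destroys the block marginal: already for $P^{\downarrow}=(0.6,0.4)$ and $P'^{\downarrow}=(0.9,0.1)$ the entries $(0.54,0.06,0.36,0.04)$ sort to $(0.54,0.36,0.06,0.04)$, whose $A$-block sums are $(0.9,0.1)\neq(0.6,0.4)$; so $(P\otimes P')^{\downarrow}$ is \emph{not} feasible for \eqref{eq:SDP-upper-bound}, and no redistribution ``with the same block sums and no smaller fidelity'' is available. Indeed the inequality itself fails: take the target $\ket{\psi}$ with Schmidt distribution $P=(0.6,0.4)$ and the seed $\ket{\phi}$ with $Q=(0.54,0.36,0.06,0.04)=P\otimes(0.9,0.1)$, so that $d=2$, $d'=4$ and $\ket{\phi}$ is $\ket{\psi}$ tensored with extra local entanglement that Alice and Bob simply discard, giving $F_{\mrm{LOSR}}(\ket{\psi},\ket{\phi})=1$. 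Any $R\in\cP^{\downarrow}(16)$ with $\Tr_{B}[R]=P^{\downarrow}$ attains $F(R,Q^{\downarrow}_{\text{embed}})=1$ only if $R=Q^{\downarrow}_{\text{embed}}$, whose block sums are $(1,0)$; since the feasible set is compact and excludes $Q^{\downarrow}_{\text{embed}}$, the right-hand side of \eqref{eq:SDP-upper-bound} is strictly below $1$. The gap in your argument is therefore not fillable as stated.

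For what it is worth, the paper's own proof commits precisely the leap you were wary of: it writes $(p^{\downarrow}\otimes p'^{\downarrow})^{\downarrow}$ as the lexicographically arranged array with blocks $p^{\downarrow}(i)p'^{\downarrow}$ and concludes that its $A$-marginal is $P^{\downarrow}$, but the displayed inequalities $p^{\downarrow}(i)p'^{\downarrow}(1)\geq p^{\downarrow}(i+k)p'^{\downarrow}(j)$ only place the leading element of each block ahead of all later blocks and do not prevent, say, $p^{\downarrow}(2)p'^{\downarrow}(1)$ from overtaking $p^{\downarrow}(1)p'^{\downarrow}(2)$ in the global sort. The statement that survives is your paragraph-1 bound with the unsorted variable (equivalently, a program in which the marginal constraint is imposed on \emph{some} arrangement of $R$ rather than on the ordered $R$ itself); the version with $R$ simultaneously non-increasing and block-constrained is genuinely smaller and cannot upper bound $F_{\mrm{LOSR}}$.
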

\begin{proof}
The above seems intuitively true from Theorem \ref{thm:pure-state-conversion-via-LO} as we have just relaxed the tensor product structure with the partial trace constraint. The technical issue is the ordering operation $\cdot^{\downarrow}$ is defined in terms of a permutation of a fixed basis, so we need to make sure this works with the partial trace.

Note the feasible set, the set we can optimizer over, in Theorem \ref{thm:pure-state-conversion-via-LO} is $S_{1}(P) := \{(P\otimes P')^{\downarrow}: P' \in \cP(\Sigma)\}$. Now note this is the same as the set 
$$ S_{2}(P) := \{ (P^{\downarrow} \otimes {P'}^{\downarrow})^{\downarrow}: P' \in \cP(\Sigma) \} \ , $$
because the ordering applied to the tensor product will result in the same thing regardless of whether or not $P,P'$ were ordered. Therefore, we can focus on $P^{\downarrow} \otimes {P'}^{\downarrow}$ to make the explanation clearer. 

In general, in terms of vectors,
$$ (p^{\downarrow} \otimes {p'}^{\downarrow})^{\downarrow} = \begin{pmatrix} p^{\downarrow}(1){p'}^{\downarrow} \\ p^{\downarrow}(2){p'}^{\downarrow} \\ \vdots \\ p^{\downarrow}(d){p'}^{\downarrow} \end{pmatrix} \ , $$
where $p(i) \geq p(i+k)$ for $k \geq 0$. Formally, we also have
$$ p^{\downarrow}(i){p'}^{\downarrow}(1) \geq p^{\downarrow}(i+k){p'}^{\downarrow}(j) $$
for all $i \in [d]$, $k \in \{0,...,d-i\}$, and 
$j \in \Sigma$. In particular what this means is that without loss of generality for any $i \in [d]$, $p^{\downarrow}(i){p'}^{\downarrow}(1)$ appears before any element that is \textit{not} of the form $p^{\downarrow}(i-\ell){p'}^{\downarrow}(j)$ for some $0 < \ell \leq i-1$. It follows that under the ordering of $ (p^{\downarrow} \otimes {p'}^{\downarrow})^{\downarrow}$, when the partial trace marginalizes to the $A$ space, the induced ordering on the local space will be the ordering based on $p^{\downarrow}$. Formally, this can be expressed as
\begin{align*}
    & \Tr_{\mbb{C}^{|\Sigma|}}[(P^{\downarrow} \otimes {P'}^{\downarrow})^{\downarrow}] \\
    = & \sum_{j \in \Sigma} \mbb{1}_{A} \otimes \bra{j} (P^{\downarrow} \otimes {P'}^{\downarrow})^{\downarrow} \ket{j} \\
    = & \sum_{i \in [d]} p^{\downarrow}(i)\dyad{i} \ ,
\end{align*}
where the first equality is a representation of the partial trace and the second is using the property noted of the ordering on the joint ordered distribution.

Thus, if $X \in S_{2}(P)$, $\Tr_{C^{|\Sigma|}}(X) = P^{\downarrow}$ and $X \in \cP^{\downarrow}(d \cdot |\Sigma|)$. Noting that $|\Sigma| = d \cdot d'$, this is the feasible set we have defined in the proposition. This completes the proof.
\end{proof}

The remaining point is to prove this is the semidefinite program given in \eqref{eq:Reals-SDP-representation}. There is much to the theory of semidefinite programs for quantum information \cite{Watrous-Book}, but for our purposes all we will need is the following definition.
\begin{definition}
A semidefinite program may be expressed as 
\begin{equation*}
    \begin{aligned}
        \max & \Tr(AX) \\
        \text{s.t.} & \; \Phi(X) = B \\
        & X_{\mbb{C}^{d}} \preceq 0 \ ,
    \end{aligned}
\end{equation*}
where $\Phi \in \Trans(\mbb{C}^{d},\mbb{C}^{d'})$ is a Hermitian-preserving map, $A \in \Herm(\mbb{C}^{d})$, $B \in \Herm(\mbb{C}^{d'})$, and $\Herm(\cdot)$ is the space of Hermitian operators on a given Hilbert space.
\end{definition}

The fidelity is known to be a semidefinite program \cite{Watrous-Book}, so we are really just verifying all of our constraints work and that we can write the SDP simply by making use of that.

\begin{lemma}\label{lem:SDP-representation}
The optimization program in the previous lemma, may be expressed as the following semidefinite program over the reals.
\begin{equation*}
    \begin{aligned}
        \max & \sum_{i \in [d^{2} \cdot d']} x(i) \\
        \text{s.t.} & \; \begin{pmatrix} \text{diag}(r) & \text{diag}(x) \\ \text{diag}(x) & \text{diag}(q^{\downarrow}_{\text{embed}}) \end{pmatrix} \succeq 0 \\
        & \Tr_{B}[\text{diag}(r)] = P^{\downarrow} \\
        & r \in \cP^{\downarrow}([d^{2} \cdot d]) \\ 
        & x \in \mbb{R}^{d^{2}\cdot d'} \ ,
    \end{aligned}
\end{equation*}
where $d,d'$ are defined in the previous lemma.
\end{lemma}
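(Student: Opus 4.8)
The plan is to recognize that the optimization in the previous lemma is entirely classical, reduce the fidelity to a Bhattacharyya coefficient, and then encode that coefficient with a single Schur-complement (block-positivity) constraint. Both arguments of the fidelity are diagonal: $R\in\cP^{\downarrow}(d^{2}d')$ is a probability distribution and $Q^{\downarrow}_{\text{embed}}$ is likewise a distribution, so by Item 5 of Proposition \ref{prop:fidel-props} we have $F(R,Q^{\downarrow}_{\text{embed}})=\mrm{BC}(r,q^{\downarrow}_{\text{embed}})^{2}=\bigl(\sum_{i}\sqrt{r(i)\,q^{\downarrow}_{\text{embed}}(i)}\bigr)^{2}$. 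Since $\mrm{BC}\geq 0$ and $t\mapsto t^{2}$ is increasing on $[0,\infty)$, maximizing $F(R,Q^{\downarrow}_{\text{embed}})$ over the feasible set equals the square of the maximum of $\mrm{BC}(r,q^{\downarrow}_{\text{embed}})$ over that set. Hence it suffices to exhibit an SDP whose value is $\max\sum_{i}\sqrt{r(i)\,q^{\downarrow}_{\text{embed}}(i)}$; the objective $\sum_{i}x(i)$ of \eqref{eq:Reals-SDP-representation} is then exactly this root fidelity, and squaring its optimal value recovers the value of the program in the previous lemma.

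The core step is to verify that the block positivity condition in \eqref{eq:Reals-SDP-representation} cuts out precisely the right feasible region for the auxiliary variable $x$. Taking the Schur complement of $\left(\begin{smallmatrix}\text{diag}(r) & \text{diag}(x)\\ \text{diag}(x) & \text{diag}(q^{\downarrow}_{\text{embed}})\end{smallmatrix}\right)\succeq 0$ with respect to the block $\text{diag}(q^{\downarrow}_{\text{embed}})$ (restricted to its support, with $x(i)=0$ forced on the complementary indices) shows that this is equivalent to the entrywise inequalities $x(i)^{2}\leq r(i)\,q^{\downarrow}_{\text{embed}}(i)$ for all $i$. For any fixed feasible $r$, maximizing the linear objective $\sum_{i}x(i)$ subject to these box constraints is attained at $x(i)=\sqrt{r(i)\,q^{\downarrow}_{\text{embed}}(i)}$, giving objective value $\sum_{i}\sqrt{r(i)\,q^{\downarrow}_{\text{embed}}(i)}$ as required. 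One could alternatively just invoke the known semidefinite-program form of the root fidelity \cite{Watrous-Book}, specialized to diagonal inputs; the Schur-complement route has the advantage of being self-contained and of making manifest that nothing quantum is happening.

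The remaining task is to check that the constraints carried over from the previous lemma are affine and hence admissible in a semidefinite program. The marginal condition $\Tr_{B}[\text{diag}(r)]=P^{\downarrow}$ is linear in $r$, and the membership $r\in\cP^{\downarrow}(d^{2}d')$ is simply the polyhedron cut out by $r(i)\geq r(i+1)\geq 0$ for all $i$; the normalization $\sum_{i}r(i)=1$ need not be imposed separately, as it already follows from the marginal constraint together with $P^{\downarrow}$ being a distribution. Assembling these with the block constraint above yields exactly \eqref{eq:Reals-SDP-representation}. I expect the only mildly delicate point to be the Schur-complement step when $q^{\downarrow}_{\text{embed}}$ has vanishing entries, which is handled by passing to its support and noting the corresponding $x(i)$ are forced to zero; everything else is routine index bookkeeping between the $A$, $B$, and ancillary alphabets.
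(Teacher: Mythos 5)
Your proof is correct, but it reaches the result by a genuinely different route than the paper. The paper starts from the known primal SDP for the (quantum) fidelity \cite[Theorem 3.17]{Watrous-Book}, in which the off-diagonal block $X$ is an arbitrary operator, and then argues that one may restrict to $X = \text{diag}(x)$ with $x$ real: it applies the pinching $\Delta$ onto the computational basis (under which $R$ and $Q^{\downarrow}_{\text{embed}}$ are invariant), uses complete positivity to preserve the block-positivity constraint, averages $X$ with $X^{\dagger}$ to force realness, and checks the objective is unchanged because pinching is trace-preserving. You instead never invoke the quantum fidelity SDP: you observe that with all three blocks diagonal the constraint decouples into $2\times 2$ blocks, equivalent to $x(i)^{2}\leq r(i)\,q^{\downarrow}_{\text{embed}}(i)$ entrywise, so that for fixed $r$ the optimal $x$ gives exactly the Bhattacharyya coefficient $\sum_{i}\sqrt{r(i)q^{\downarrow}_{\text{embed}}(i)}$, which by Item 5 of Proposition \ref{prop:fidel-props} is the root fidelity of the two distributions. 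Your argument is more elementary and self-contained, and it makes explicit a point the paper leaves implicit, namely that the SDP's optimal value is the \emph{root} fidelity and must be squared to recover the value of the program in the preceding lemma; the paper's argument, on the other hand, exhibits the program as a literal restriction of the standard quantum fidelity SDP, which is what justifies the phrase ``nothing particularly quantum'' in the main text. Both treatments handle the remaining bookkeeping (affineness of the marginal and ordering constraints, normalization following from the marginal condition) equivalently, and your handling of vanishing entries of $q^{\downarrow}_{\text{embed}}$ via support restriction is sound.
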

\begin{proof}
We begin by expressing the objective function of the previous lemma, which is in terms of fidelity, using the primal problem for the SDP for fidelity from \cite[Theorem 3.17]{Watrous-Book}:
\begin{equation*}
    \begin{aligned}
        \max & \frac{1}{2}\left[\Tr(X) + \Tr(X^{\dagger}) \right] \\
        & \; \begin{pmatrix} R & X \\ X^{\dagger} & Q^{\downarrow}_{\text{embed}} \end{pmatrix} \geq 0 \\ 
        & X \in \Lin(\mbb{C}^{[d^{2} \cdot d']}) \ .
    \end{aligned}
\end{equation*}
Now our goal is to reduce $X$ to the diagonal of a real vector.

Note that $R,Q^{\downarrow}_{\text{embed}}$ are always invariant under pinching onto the computational basis of $\mbb{C}^{[d^{2} \cdot d']}$, which we can denote $\Delta$. Note that this pinching is a CPTP, so by the CP property,
\begin{align*}
    & (\mrm{id}_{\mbb{C}^{2}} \otimes \Delta)\begin{pmatrix} R & X \\ X^{\dagger} & Q^{\downarrow}_{\text{embed}} \end{pmatrix} \\ =& \begin{pmatrix} R & \Delta(X) \\ \Delta(X^{\dagger}) & Q^{\downarrow}_{\text{embed}} \end{pmatrix} \geq 0 \ .
\end{align*}
It also then follows as a positive semidefinite operator is always Hermitian that
\begin{align*}
    & \begin{pmatrix} R & \Delta(X^{\dagger}) \\ \Delta(X) & Q^{\downarrow}_{\text{embed}} \end{pmatrix} \geq 0 \ .
\end{align*}
Thus by taking these two cases and averaging them, we have that
\begin{align*}
    \begin{pmatrix} R & \frac{1}{2}\left(\Delta(X + X^{\dagger})\right) \\ \frac{1}{2}\left(\Delta(X + X^{\dagger})\right) & Q^{\downarrow}_{\text{embed}} \end{pmatrix} \geq 0 \ .
\end{align*}
Define $\overline{X} := \frac{1}{2}\left(\Delta(X + X^{\dagger})\right)$. Then note 
\begin{align*}
& \frac{1}{2}\left(\Tr(X) + \Tr(X^{\dagger})\right) \\
=& \frac{1}{2}\left(\Tr(\Delta(X)) + \Tr(\Delta(X^{\dagger}))\right) \\
=& \frac{1}{2}\left(\Tr(\overline{X}) + \Tr(\overline{X}^{\dagger})\right) = \Tr(\overline{X}) \ ,
\end{align*}
where the first equality is because the pinching is trace preserving, the second is by definition of $\overline{X}$, as is the final equality. Thus, for any $X$ that satisfies the positivity constraint, we could replace it with $\overline{X}$ without loss of generality as we are considering a maximization. Finally, note that $\overline{X}$ is a real diagonal matrix by the pinching along with the fact $a + a^{\ast} = 2\Re{a}$. Thus $\overline{X} = \text{diag}(x)$ for some $x \in \mbb{R}^{d^{2} \cdot d'}$. Combining all these points and using $\Tr(\overline{X}) = \sum_{i \in [d^{2} \cdot d']} x(i)$, we have reduced to considering 
\begin{equation*}
    \begin{aligned}
        \max & \sum_{i \in [d^{2} \cdot d']} x(i) \\
        & \; \begin{pmatrix} \text{diag}(r) & \text{diag}(x) \\ \text{diag}(x) & \text{diag}(q^{\downarrow}_{\text{embed}}) \end{pmatrix} \geq 0 \\ 
        & x \in \mbb{R}^{d^{2} \cdot d'} \ .
    \end{aligned}
\end{equation*}
This argument works for any choice of diagonal $r$, so this is the major reduction.

What remains is to prove all the constraints are Hermitian maps. One can write the constraints for $r \in \cP^{\downarrow}$ as $r(i) \geq r(i+1)$ for all $i$, which are semidefinite constraints and can be written as Hermitian preserving maps on the variables $r,x$. $\text{diag}$ is a Hermitian preserving map as is the partial trace, so $\Tr_{C}[\text{diag}(r)]$ is a Hermitian preserving map. Likewise is the block matrix mapping if one allows for the complex conjugate in the lower left block, but noting $\text{diag}(x)^{\dagger} = \text{diag}(x)$, we can leave it as written. Thus all the maps are Hermitian-preserving.

The conversion to actual standard form we then omit as it provides no insight. This completes the proof.
\end{proof}
The above two proofs establish Theorem \ref{thm:sdp-upper-bound}.

\section{Data for Embezzling State Figure}
\label{app:catalyst-figure-data}
\edit{In this appendix, we provide all the embezzling distributions used to generate Fig.\ \ref{fig:catalyst-plots}.
\begin{table}[H]
\centering
\begin{tabular}{r|l}
     $d$ & Lower Bound Embezzling Distribution $r$   \\ \hline
     1 &  n/a \\
     2 & [0.5,0.5] \\
     3 & [0.3,0.33,0.37] \\
     4 & [0.215,0.235,0.26,0.29] \\
     5 & [0.16,0.18,0.2,0.22,0.24] \\
     6 & [0.13,0.145,0.155,0.17,0.19,0.21] \\
     7 & [0.105,0.115,0.125,0.14,0.155,0.17,0.19] \\
     8 & [0.085,0.095,0.105,0.115,0.13,0.115,0.13,0.145,0.155,0.17] 
\end{tabular}
\caption{Data for the near optimal catalyst results presented in Fig.\ \ref{fig:catalyst-plots} in the case $p=0.5$ and $q=0.55$. $d$ stands for the dimension of the distribution. Each distribution $r$ was found via numerical search as described in the main text.}
\end{table}

\begin{table}[H]
\centering
\begin{tabular}{r|l}
     $d$ & Lower Bound Embezzling Distribution $r$   \\ \hline
     1 &  n/a \\
     2 & [0.5,0.5] \\
     3 & [0.27,0.33,0.4]	\\
     4 & [0.18,0.22,0.27,0.33] \\
     5 & [0.13,0.16,0.195,0.235,0.28] \\
     6 & [0.095,0.115,0.14,0.175,0.215,0.26] \\
     7 & [0.075,0.09,0.11,0.13,0.16,0.195,0.24 \\ 
     8 & [0.055,0.065,0.08,0.1,0.125,0.155,0.19,0.23]
\end{tabular}
\caption{Data for the near optimal catalyst results presented in Fig.\ \ref{fig:catalyst-plots} in the case $p=0.5$ and $q=0.6$. $d$ stands for the dimension of the distribution. Each distribution $r$ was found via numerical search as described in the main text.}
\end{table}

\begin{table}[H]
\centering
\begin{tabular}{r|l}
     $d$ & Lower Bound Embezzling Distribution $r$   \\ \hline
     1 &  n/a \\
     2 & [0.5,0.5] \\
     3 &	 [0.21,0.32,0.47] \\
     4 & [0.12,0.185,0.28,0.415] \\
     5 & [0.075,0.115,0.175,0.26,0.375] \\
     6 & [0.05,0.075,0.11,0.165,0.245,0.355] \\
     7 & [0.03,0.045,0.07,0.105,0.16,0.24,0.35] \\
     8 & [0.05,0.065,0.085,0.095,0.13,0.14,0.19,0.245]
\end{tabular}
\caption{Data for the near optimal catalyst results presented in Fig.\ \ref{fig:catalyst-plots} in the case $p=0.5$ and $q=0.7$. $d$ stands for the dimension of the distribution. Each distribution $r$ was found via numerical search as described in the main text.}
\end{table}

\begin{table}[H]
\centering
\begin{tabular}{r|l}
     $d$ & Lower Bound Embezzling Distribution $r$   \\ \hline
     1 &  n/a \\
     2 & [0.5,0.5] \\
     3 &	 [0.19,0.31,0.5]\\
     4 & [0.1,0.165,0.275,0.46] \\
     5 & [0.055,0.095,0.155,0.26,0.435] \\
     6 & [0.07,0.11,0.12,0.185,0.2,0.315] \\
     7 & [0.04,0.065,0.105,0.115,0.175,0.19,0.31] \\
     8 &	 [0.035,0.06,0.065,0.1,0.11,0.165,0.18,0.285]
\end{tabular}
\caption{Data for the near optimal catalyst results presented in Fig.\ \ref{fig:catalyst-plots} in the case $p=0.6$ and $q=0.65$. $d$ stands for the dimension of the distribution. Each distribution $r$ was found via numerical search as described in the main text.}
\end{table}

\begin{table}[H]
\centering
\begin{tabular}{r|l}
     $d$ & Lower Bound Embezzling Distribution $r$   \\ \hline
     1 &  n/a \\
     2 & [0.5,0.5] \\
     3 & [0.24,0.275,0.485] \\
     4 & [0.08,0.15,0.27,0.5] \\
     5 & [0.04,0.075,0.14,0.26,0.485] \\
     6 & [0.055,0.095,0.11,0.18,0.205,0.355] \\
     7 & [0.03,0.055,0.09,0.11,0.17,0.2,0.345 \\
     8 & [0.03,0.05,0.055,0.09,0.1,0.165,0.185,0.325]
\end{tabular}
\caption{Data for the near optimal catalyst results presented in Fig.\ \ref{fig:catalyst-plots} in the case $p=0.6$ and $q=0.7$. $d$ stands for the dimension of the distribution. Each distribution $r$ was found via numerical search as described in the main text.}
\end{table}

\begin{table}[H]
\centering
\begin{tabular}{r|l}
     $d$ & Lower Bound Embezzling Distribution $r$   \\ \hline
     1 &  n/a \\
     2 & [0.5,0.5] \\
     3 &	 [0.21,0.29,0.5] \\
     4 &	 [0.09,0.175,0.24,0.495] \\
     5 &	 [0.05,0.095,0.16,0.23,0.465] \\
     6 & [0.035,0.065,0.09,0.155,0.215,0.44] \\
     7 & [0.02,0.04,0.065,0.09,0.15,0.21,0.425] \\
     8 & [0.015,0.03,0.04,0.065,0.09,0.145,0.205,0.41]
\end{tabular}
\caption{Data for the near optimal catalyst results presented in Fig.\ \ref{fig:catalyst-plots} in the case $p=0.6$ and $q=0.8$. $d$ stands for the dimension of the distribution. Each distribution $r$ was found via numerical search as described in the main text.}
\end{table}

}

\end{document}